\def\version{16.2.2018, 8h}\def\WHO{nbd}
\def\users{}\def\gt{no}
 \def\users{final-layout}  
\def\BLUE{\color{black}}
\numberwithin{equation}{section}
\definecolor{textcolor}{HTML}{0A75A8}
\definecolor{second}{RGB}{0,0,255}
\definecolor{third}{RGB}{0,0,255}
\newcommand\Text{File: \jobname.tex (version \version), compiled:
\number\day.\number\month.\number\year\ at
\the\hour:\ifnum\minute<10 0\fi\the\minute\ h
}
\newcounter{includepagecounter}
	\newcommand{\COMMENT}[1]{}
        \newcommand{\TODO}[1]{}
        \newcommand{\INTERNAL}[1]{}
        \newcommand{\QUESTION}[1]{}
	\newcommand{\DELETE}[1]{}
        \newcommand{\REM}[1]{\marginpar{\bfseries\tiny{\color{blue}}}}
\definecolor{gt}{RGB}{0,0,80}
\definecolor{appunti}{RGB}{0,255,255}
 \newcommand{\COMMENT}[1]{{\color{red}\uuline{#1}\color{black}}}
 \newcommand{\TODO}[1]{{\color{red}\uuline{#1}\color{black}}}
\newcommand{\INTERNAL}[1]{\footnote{#1}}
 \newcommand{\QUESTION}[1]{{\color{brown}\uuline{#1}\color{black}}}
 \newcommand{\DELETE}[1]{{\color{brown}\sout{#1}\color{black}}}
 \newcommand{\REM}[1]{\marginpar{\bfseries\tiny{\color{blue}#1}}}
\newtheorem{lemma}{Lemma}
\newtheorem{proposition}{Proposition}
\newtheorem{theorem}{Theorem}
\newtheorem{definition}{Definition}
\newtheorem{remark}{Remark}
\newcommand\comment[1]{}
\newcommand\dx{\mathop{{\rm d}x}}
\newcommand\dt{\mathop{{\rm d}t}}
\newcommand\mb[1]{\mathbf{#1}}
\newcommand\gradop{\operatorname{grad}}
\newcommand{\R}{\mathbb{R}}
\newcommand{\N}{\mathbb{N}}
\renewcommand{\d}{\mathrm{d}}
\newcommand\bigdot{\text{\LARGE$\cdot$}}
\newcommand\DT[1]{\mathchoice
                 {{\buildrel{\hspace*{.1em}\text{\LARGE.}}\over{#1}}}
                 {{\buildrel{\hspace*{.1em}\text{\Large.}}\over{#1}}}
                 {{\buildrel{\hspace*{.1em}\text{\large.}}\over{#1}}}
                 {{\buildrel{\hspace*{.1em}\text{\large.}}\over{#1}}}}
\newcommand\DDT[1]{\mathchoice
   {{\buildrel{\hspace*{.1em}\text{\Large.\hspace*{-.1em}.}}\over{#1}}}
   {{\buildrel{\hspace*{.1em}\text{\large.\hspace*{-.1em}.}}\over{#1}}}
   {{\buildrel{\hspace*{.1em}\text{\large.\hspace*{-.1em}.}}\over{#1}}}
   {{\buildrel{\hspace*{.1em}\text{\large.\hspace*{-.1em}.}}\over{#1}}}}
\newcommand{\Vdots}{\vdots}
\newcommand{\nablaS}{\nabla_{\scriptscriptstyle\textrm{\hspace*{-.3em}S}}^{}}
\newcommand{\divS}{\mathrm{div}_{\scriptscriptstyle\textrm{\hspace*{-.1em}S}}^{}}
\newcommand{\Cof}{\operatorname{\rm{Cof}}}
\newcommand{\eq}{\eqref}
\newcommand{\eps}{\varepsilon}
\newcommand{\Ceen}{\frac{\Cof(\nabla\bm\chi_{\eps})}{\sqrt{\det(\nabla\bm\chi_{\eps})}}}
\newcommand{\Ceenk}{\frac{\Cof(\nabla\bm\chi_{\sigmanu k})}{\sqrt{\det(\nabla\bm\chi_{\sigmanu k})}}}
\newcommand{\jump}[1]{\ensuremath{[\![#1]\!]} }
\renewcommand{\nu}{\sigma}
\def\sigmanu{\varepsilon}
\begin{document}
\begin{sloppypar}
\allowdisplaybreaks

\title[\BLUE{}A thermodynamical model \color{black} of magneto-elastic 
materials at large strains \COMMENT{\large{ Version: \version, Now: \WHO}}]
{\BLUE{}A thermodynamically consistent model \color{black} of 
magneto-elastic\\materials under diffusion at large strains \BLUE{}and
its analysis\color{black}.}
\author[T. Roub\'\i\v cek]{Tom\'a\v s Roub\'\i\v cek}
\address{
Mathematical Institute, Charles University\\
Sokolovsk\'a 83, CZ-186~75~Praha~8, Czech Republic,
\\
and\\
Institute of Thermomechanics of the Czech Academy of Sciences,\\ 
Dolej\v skova 5, CZ-182 00 Praha 8, Czech Republic}
\email{tomas.roubicek@mff.cuni.cz}

\thanks{This research was partly supported through the grants Czech Science 
Foundation 
16-03823S ``Homogenization and multi-scale computational modelling of flow and 
nonlinear interactions in porous smart structures''
and 16-34894L ``Variational structures in continuum thermomechanics of solids''
and by the Austrian-Czech projects 
FWF/MSMT \v CR no.\ 7AMB16AT015, as well as through the institutional project 
RVO:\,61388998 (\v CR). The second author also 
acknowledges financial support of INdAM-GNFM through grant ``Progetto Giovani''.}

\author[G. Tomassetti]{Giuseppe Tomassetti}
\address{Universit\`a degli Studi Roma Tre
- Dipartimento di Ingegneria - Sezione di Ingegneria Civile,\\
Via Volterra 62 - 00146 Rome, Italy}
\email{giuseppe.tomassetti@uniroma3.it}



\date{}

\maketitle


\COMMENT{\LARGE \WHO\LARGE{'s currently working on the file}} 

\begin{abstract}
A theory of elastic magnets is formulated under possible diffusion and heat 
flow governed by Fick's and Fourier's laws in the deformed (Eulerian) 
configuration, respectively. The concepts of nonlocal nonsimple materials 
and viscous Cahn-Hilliard equations are used. The formulation of the problem 
uses Lagrangian (reference) configuration while the transport processes are 
\BLUE pulled \color{black}
back. Except the static problem,
the demagnetizing 
\BLUE{}energy
\color{black} is ignored and only local non-selfpenetration
is considered. The analysis as far as existence of weak solutions of the 
(thermo)dynamical problem is performed by a careful regularization 
and approximation by a Galerkin method, suggesting also a numerical strategy.
Either ignoring or combining particular aspects, the model
has numerous applications as ferro-to-paramagnetic transformation 
in elastic ferromagnets, diffusion of solvents in polymers possibly accompanied
by magnetic effects (magnetic gels), or metal-hydride phase transformation
in some intermetalics under diffusion of hydrogen accompanied possibly
by magnetic effects (and in particular ferro-to-antiferromagnetic phase 
transformation), all in the full thermodynamical context under large strains. 

\medskip

\subjclass{\noindent{\bf Mathematics Subject Classification}.
 35K55, 
 35Q60, 
 35Q74, 
 65M60, 
 74A15, 
 74A30, 
 74F15, 
 74F10, 
 76S99, 
 78A30, 
 80A17, 
 80A20. 
}

\medskip

\keywords{\noindent{\bf Keywords.} Elastic magnets, large strains, 
viscous Cahn-Hilliard equation, heat transfer, weak solutions, 
existence, Galerkin method.}

\end{abstract}

\maketitle


\section{Introduction}
Recent experiments have been putting into perspective interesting interplay between hydrogenation and magnetic 
properties of ferromagnetic specimens. At the microscopic level, the atomic lattice parameters are substantially 
enlarged when diffusing hydrogen atoms occupy 
the interstitial positions  without changing  
the cubic (or sometimes hexagonal) structure, \color{black} leading to large strains. We the speak 
about metal-to-hydride phase transformation. If the metal is
ferromagnetic, the resulted hydride may be antiferromagnetic, as documented 
experimentally e.g.\ in \cite{RSSH01HPSM,TKFK07SHMR}. 
Of course, this phase transformation is in addition to the ferro-to-paramagnetic
phase transformation in the parent metal phase when temperature
rises the Curie point.
Also specific heat can be 
\BLUE substantially \color{black} influenced by the 
concentration of hydrogen, as experimentally documented in the case of the hydrogenization/deuterization of some intermetalics e.g.\ in \cite{TKFK07SHMR}. Although mathematical models of hydrogenation have been formulated and studied specifically for hydrogen storage with \cite{duda2015stress,RouTom14THSM} or without \cite{BonetCL2012NATMA,BonetCT2015non} strain effects these models do not take into account large strains and the interaction between diffusant and 
magnetization, see also \cite{kolomiets1997rnial,KolomHRBNYHDID2000Magnetic,KolomHSYHo1999Structural,KolomHYA1997Hydrogen,KolomHYA1999Hydrogenation}.

Besides specific applications to hydrogenation, our model can cover situations when magnetic effects are coupled with substantial large strains, such as elastomers \cite{BorceB2001magneto} and polymer gels \cite{CollinAuernhammerGavatEtAl2003,BohliusBrandPleiner2004} with magnetic inclusions. These materials are obtained by embedding metallic or ferromagnetic particles into solid matrix and can undergo controlled large strains, with potential applications for biomechanics and biomimetics \cite{ZrS2001Muscular}.
\color{black}

We also point out that instead of magnetization, one can thus equally think about polarization and ferroelectric materials instead
of ferromagnetic. 

The range of applicability of the model is the ability of magnetic-field-sensitive gels to undergo a quick controllable change of shape can be used to create an artificially designed system possessing sensor- and actuator functions internally in the gel itself. The peculiar magneto-elastic properties may be used to create a wide range of motion and to control the shape change and movement, that are smooth and gentle similar to that observed in muscle. Magnetic field sensitive gels provide attractive means of actuation as artificial muscle for biomechanics and biomimetic applications.

To partially fill this gap, in this paper we propose and study a large-strain thermomechanical 
model describing a magnetized solid 
permeated by a diffusant and undergoing non-isothermal processes. As usual for solid mechanics, 
we formulate the problem in the referential setting, choosing as reference configuration an 
unbounded domain ${\varOmega}\subset\R^d$ having a Lipschitz boundary ${\varGamma}=\partial{\varOmega}$. 
At variance with most treatments in ferromagnetism \cite{Brown1963} we do not impose the saturation 
(so called Heisenberg) constraint 
on the magnitude of the magnetization which is, in fact, relevant
rather only below the Curie point, cf.\ also the discussion in 
 \cite{PoRoTo10TCTF} and references cited therein. 

\BLUE{}Rather as a side effect, forgetting magnetic long-range 
interaction, the model covers also the flow in poro-elastic materials
at large strains, as e.g.\ rocks or soils.

Our ultimate goal is to develop a model which is both thermodynamically 
consistent and also amenable to mathematical analysis. \color{black} 
Let us briefly summarize the mathematical challenges set forth by the problem 
we examine:

While existence of minimizers in nonlinear elastostatics is well understood, 
the question whether these minimizers correspond to actual weak solutions of 
the Euler-Lagrange equations is an open issue 
\cite{Ball02SOPE}, essentially because the strain energy blows up as the 
determinant of the deformation gradient tends to null. Such technical 
hindrance is exacerbated when, instead of looking for weak solutions in 
nonlinear elastostatics, one considers the evolution problem of nonlinear 
elastodynamics. 

In addition, handling the effects of magnetization and diffusion at large strains requires some care. In fact, the equations of magnetostatics are naturally formulated in the actual configuration and, in order to pull the relevant fields back to the reference configuration the injectivity of the deformation map is mandatory. Now, while the injectivity of the deformation map can be guaranteed in the variational setting by enforcing the Ciarlet-Ne\v cas condition \cite{CiaNec87ISCN}, this is not mathematically feasible when seeking weak solutions of the evolution equations of nonlinear elastodynamics.\color{black}

Second, the Fick-type relations between the flux of the diffusant and its driving force, namely, the gradient of chemical potential, as well as the Fourier law relating heat flux and temperature gradient, are often formulated in the deformed configuration (see for example \cite{DuSoFi10TSMF}). When these constitutive laws are pulled back to the reference configuration (see \eqref{M-K-push-back} below), the reciprocal of the determinant of the deformation gradient enters into play, it would be desirable to have the determinant is away from zero. 
\color{black}

\BLUE{}Following an idea from \cite{HeaKro09IWSS}, we will include in the 
free energy a regularizing term to ensure that the determinant of the 
deformation gradient stays away from zero, cf.\ also 
Lemma~\ref{lem:1} below. However, since we are interested 
in performing mathematical analysis also in the dynamical setting, we opt 
for a quadratic regularizing term, whose contribution to the Fr\'echet 
derivative of the free energy has the nice property of being linear, which 
makes it possible to pass to the limit in the nonlinear hyperbolic problem in 
our proof of existence of weak solutions, see Sec.~4 below. 
Now, it turns out that in order for this approach to work, the deformation 
gradient should be at least in the Sobolev-Slobodetski\u\i\ space $H^{2+\gamma}$, 
with $\gamma$ a positive exponent smaller than 1. To meet this requirement, 
one option would be to consider a local non-simple material of grade 3, whose 
free energy depends on the third gradient of the deformation.  Instead, we 
have chosen a model consistent with the definition of the space $H^{2+\gamma}$, 
namely, the space of functions in $H^2$ whose  second gradient is in the 
fractional space $H^\gamma$. 

\color{black}
\BLUE{}The model which we adopt thus involves a nonlocal dependence on 
the second gradient, combining therefore the concept of gradient-theories for 
strains, which is usually referred as nonsimple, cf.\ \cite{Toup62EMCS} or 
also e.g.\
\cite{FriGur06TBBC,MinEsh68FSTL,Podi02CISM,Silh88PTNB,TriAif86GALD}, with the 
nonlocal concept like that introduced, for instance, in \cite{Erin02NCFT} with 
the nonlocality in the first gradients, however. Thus, this model may be 
referred to as {\it material of grade $2+\gamma$} or also be regarded as a 
{\it 2nd-grade non-simple material} cf.\ 
Remark~\ref{rem-nonlocal}.\color{black} 

The other higher-order contribution to the free energy  {\BLUE} involving  
$\nabla \bm m$ and $\nabla\zeta$ \color{black} are quite standard and 
widely accepted. In particular, as in standard micromagnetics, 
\BLUE $\nabla \bm m$ models \color{black}
exchange interactions by including a gradient term on the magnetization \BLUE{}\cite{Brown1963}.\color{black}  
As far as concentration of diffusant is concerned, 
\BLUE $\nabla\zeta$ \color{black} includes an interfacial 
energy of Cahn-Hilliard type \cite{CahnH1958JCP}. This leads us to
the overall thermomechanical Helmholtz free energy formulated in the reference 
configuration:
\begin{align}\label{Helmholtz}
\mathcal H_{\rm th}(\bm\chi,\bm m,\zeta,\theta)=\int_{\varOmega}
\psi(\nabla\bm\chi,\bm m,\zeta,\theta)
+\frac{\kappa_1}2|\nabla\bm m|^2+\frac{\kappa_2}2|\nabla\zeta|^2\,\d x
+\mathscr{H}(\nabla^2\bm\chi),
\end{align}
where, using the placeholder ${\mathbb G}$ for $\nabla^2\bm\chi$, 
\color{black} we define the quadratic form
\begin{align}\label{def-of-H}
\mathscr{H}({\mathbb G}):=\frac14\int_{\varOmega}\int_{\varOmega}({\mathbb G}(x){-}{\mathbb G}(\tilde x))^\top
\Vdots\mathscr{K}(x{-}\tilde x)\Vdots({\mathbb G}(x){-}{\mathbb G}(\tilde x))\,\d x\d\tilde x\BLUE,\color{black}
\end{align}
\BLUE{}where \color{black}the kernel $\mathscr{K}:\R^d\to{\R^d}^6$ takes values in the space of sixth-order tensors. Here we adopt the convention triple dots joining a sixth-order tensor (here $\mathscr K$) and a third-order tensor (here ${\mathbb G}(x){-}{\mathbb G}(\tilde x)$) denote contraction of the last three indices of the former with the indices of the latter. More generally, we denote by ``$\,\cdot\,$'' and ``$\,:\,$'' and ``$\,\Vdots\,$'', the scalar product between vectors, tensors, and 3rd-order tensors, respectively. For our mathematical purpose, it will be 
desired if the kernel is singular around the origin, having the asymptotic 
character 
\begin{align}\label{kernel}
\exists\,\epsilon>0\ \forall\,x\!\in\!\R^d,\ |x|\le\epsilon:\ \ \ 
\BLUE \inf_{|{\mathbb G}|=1}{\mathbb G}^\top\Vdots\mathscr{K}(x)\Vdots{\mathbb G}\ge\epsilon
\color{black}
/|x|^{d+2\gamma}.
\end{align}
\BLUE{}As pointed out above,  \color{black}
the growth condition \eq{kernel} is \BLUE inspired by the usual 
Gagliardo seminorm \color{black} on Sobolev-Slobodetski\u\i\ space $H^{2+\gamma}({\varOmega};\R^d)$. \BLUE Cf.\ also Remark~\ref{rem-nonlocal} below
for a discussion of the mechanical concept.
\color{black}

The state variables are the displacement $\bm\chi$, the magnetization $\bm m$,
concentration $\zeta$ of a diffusant (typically some liquid, gas, or some 
solvent and, depending on specific applications, it may be hydrogen, deuterium, 
water, etc.), and (absolute) temperature $\theta$. The particular 
equations of the system considered in this paper are the momentum equilibrium, 
the flow rule for magnetization, the 
balance of mass of the diffusant, and the heat-transfer equation. The basic 
notation is  summarized in Table\,1.

\vspace{.5em}

\begin{center}
\fbox{
\begin{minipage}[t]{0.37\linewidth}\small\medskip
$\varOmega$ the reference domain\\ 
$\bm\chi$ deformation  \\
$\Omega=\bm\chi(\varOmega)$ the actual deformed domain\\ 
$\varGamma$ the boundary of $\varOmega$\\ 
$\bm v$ velocity \\
$\bm m,\bm{\mathsf m}$ magnetization vectors \\
$\zeta$ concentration\\
$\theta$ temperature\\
$\bm F$ deformation gradient\\
$\bm C$ right Cauchy-Green tensor\\
${\mathbb G}$ gradient of $\bm F$\\
$\bm S$ stress tensor\\
$\mathfrak{H}$ hyperstress 3rd-order tensor\\
$\mathscr{H}$ potential of the hyperstress\\
$\mu$ chemical potential\\
$\phi,\upphi$ magnetic potentials\\
$\bm M,\bm{\mathsf M}$ mobility tensors\\
$\bm K,\bm{\mathsf K}$ heat-conductivity tensors
\end{minipage}
\begin{minipage}[t]{0.45\linewidth}\small\medskip
$\varrho$ mass density\\
$\psi$ free energy\\ 
$s$  entropy\color{black}\\
$e_{\textsc{th}}=w$ thermal part of internal energy\\
$c_{\rm v}=c_{\rm v}(\bm m,\zeta,\theta)$ heat capacity\\
$\upmu_0$ vacuum permeability\\
$\psi_{\textsc{me}},\psi_{\textsc{th}}$ chemo-mechanical and thermal free energies\\
$\bm f,\bm{\mathsf f}$ bulk forces (e.g.\ gravity)\\
\BLUE$\bm g$ traction force\color{black}\\
$\bm h_{\rm e},\bm{\mathsf h}_{\rm e}$ external magnetic field\\
$\mu_{\rm e}$ boundary chemical potential\\
$K>0$ transmission coefficient for heat supply on $\varGamma$
\\
$M>0$ transmission coefficient for diffusant 
 on $\varGamma$\\
\BLUE$\eps>0$ a regularization parameter\color{black}\\
$k\in\N$ a numbering of the Galerkin space discretisation\\
$r$ heat-production rate\\
$\tau_1,\tau_2>0$ relaxation times
\\
 $\kappa_1,\kappa_2>0$ length-scale parameters
\end{minipage}
}\end{center}

\vspace{-.4em}

\begin{center}
{\small\sl Table\,1.\ }
\begin{minipage}[t]{.9\textwidth}\baselineskip=8pt
{\small
Summary of the basic notation used through this paper. 
The Italics font indicates the reference material (Lagrangian) configuration 
(as in Fig.\,\ref{fig-elastic-magnet}) while Roman indicated the actual deformed (Eulerian) 
configuration. 
}
\end{minipage}
\end{center}

\medskip

The plan of the paper is the following. In Section~\ref{sec-static} we explore equilibrium states, that is rest states characterized by uniform temperature and chemical potential. 
This will allow us to  carry out the rigorous mathematical treatment of a model that takes into account the complete energetic of the system. In particular, we shall be able to handle the demagnetizing energy by guaranteeing the invertibility of the deformation through the Ciarlet-Ne\v cas \cite{CiaNec87ISCN}  condition $$\int_{\varOmega}\det(\nabla\bm\chi(x))\,\d x\le {\rm meas}_d(\bm\chi({\varOmega}))$$ which, together with $\det(\nabla\bm\chi)>0$ a.e.\ on ${\varOmega}$, ensures
existence of $\bm\chi^{-1}$ a.e.\ on $\bm\chi({\varOmega})$. In Section~\ref{sec-dynamic} we lay down the evolution system, including all relevant thermodynamical couplings. We show existence of weak solutions in Section~\ref{sec-Galerkin}. This is, to some
extent, a constructive method which suggest (when using e.g.\ finite-element 
method for the Galerkin approximation) a numerically stable and 
convergent computationally implementable strategy for solution of
the dynamical problem.

Thorough the whole paper, 
we will use the standard notation for the Lebesgue $L^p$-spaces and
$W^{k,p}$ for Sobolev spaces whose $k$-th distributional derivatives 
are in $L^p$-spaces. We will also use the abbreviation $H^k=W^{k,2}$. 
Moreover, we use the standard notation  $p'=p/(p{-}1)$, and 
$p^*$ for the Sobolev exponent $p^*=pd/(d{-}p)$ for $p<d$ while
$p^*<\infty$ for $p=d$ and $p^*=\infty$ for $p>d$,
and the ``trace exponent'' $p^\sharp$ defined 
as $p^\sharp=(pd{-}p)/(d{-}p)$ for $p<d$ while
$p^\sharp<\infty$ for $p=d$ and $p^\sharp=\infty$ for $p>d$.
Thus, e.g., $W^{1,p}({\varOmega})\subset L^{p^*}\!({\varOmega})$ or 
$L^{{p^*}'}\!({\varOmega})\subset W^{1,p}({\varOmega})^*$=\,the dual to $W^{1,p}({\varOmega})$. 
In the vectorial case, we will write $L^p({\varOmega};\R^n)\cong L^p({\varOmega})^n$ 
and $W^{1,p}({\varOmega};\R^n)\cong W^{1,p}({\varOmega})^n$. Also,
we admit $k$ noninteger with the reference to the Sobolev-Slobodetski\u\i\ 
spaces. Note that, in this notation, we have the compact embedding $H^{2+\gamma}({\varOmega})\subset W^{2,p}({\varOmega})$
if $p>2d/(d-2\gamma)$ and $W^{2,p}({\varOmega})\subset W^{1,p^*}({\varOmega})$.
In particular $H^{2+\gamma}({\varOmega})\subset C^1(\bar{\varOmega})$ if 
$d<p<2d/(d-2\gamma)$, which can be satisfied if $\gamma>d/2-1$ as
employed in \eqref{ass-HK} to facilitate usage of the results from 
\cite{HeaKro09IWSS}. We also denote by $\operatorname{meas}_d$ the $d$-dimensional Hausdorff measure.

\section{Static model in the Lagrangian formulation}\label{sec-static}

Null entropy production is an essential character of equilibrium states, a character that distinguishes them from the more encompassing class of rest (\emph{i.e.} steady) states. In the presence of thermal conduction and chemical diffusion, null entropy production demands that the heat flux and the flux of diffusant vanish (for a general discussion in the context of classical continuum thermodynamics we refer to \cite[Chap. 13]{Silha1997Mechanics}). 

In this section we investigate the existence of equilibrium states for a body in a conservative mechanical environment and thermal and chemical environment of isolation type. The first part of this section is dedicated to the construction of the internal energy and the potential energy. As usual, for the purpose of the derivation of the model, we shall assume that all fields of interest are as smooth as needed for our manipulations to make sense. 

\medskip

\paragraph{State variables.} 
We identify a rest state with the following quadruplet of \emph{state fields}:
\begin{itemize}
\item the \emph{deformation} $\bm\chi:\varOmega\to\mathbb R^d$;
\item the \emph{Lagrangian magnetization} $\bm m:\varOmega\to\mathbb R^d$;
\item the \emph{concentration} $\zeta:\varOmega\to\mathbb R^+$;
\item the \emph{temperature} $\theta:\varOmega\to\mathbb R^+$. 
\end{itemize}
These fields are defined on the reference configuration $\varOmega$, which 
we assume to be an open, bounded smooth domain in $\mathbb R^d$, with $d$ 
the space dimension. 

\medskip

\BLUE{}In this section, \color{black}we shall work with admissible deformations that are 
one-to-one, locally orientation-preserving mappings. Local orientation preservation is the 
requirement that the determinant of the deformation gradient be positive:
\begin{equation}
  J=\operatorname{det}\bm F> 0,\qquad \bm F=\nabla\bm\chi.
\end{equation}
If this requirement is met, global invertibility can be guaranteed through the 
condition
\begin{equation}
  \label{eq:25}  
\displaystyle\int_{\varOmega}J\,\d x\le{\rm meas}_d(\bm\chi({\varOmega})),
\end{equation}
which was introduced by Ciarlet \& Ne\v{c}as \cite{CiaNec87ISCN} as a device 
to preclude minimizers of the variational problem of nonlinear elastostatic 
from self-penetration, \BLUE and later used in the context of 
static magneto-elasticity in \cite{Roge88NVPE,Roge93ERLD}.\color{black}

These requirements on the deformation are essential to attribute physical meaning to the referential fields $\bm m$, $\zeta$, and $\theta$. Indeed, from the knowledge of these fields one can reconstruct the \emph{spatial magnetization density in the body}, the \emph{spatial concentration in the body}, and the \emph{spatial temperature in the body}, respectively,
\begin{equation}
  \bm{\mathsf m}:\Omega\to\mathbb R^d,\qquad \upzeta:\Omega\to\mathbb R^+,\qquad \text{and}\qquad \uptheta:\Omega\to\mathbb R^+,
\end{equation}
where 
\begin{equation}
  \Omega=\bm\chi(\varOmega)
\end{equation}
is the region occupied by the body in its actual configuration. These fields are defined by
\begin{align}\label{m-pull-back}
\bm{\mathsf m}=(J^{-1}\bm F\bm{m}){\circ}\bm{\chi^{-1}},\qquad \upzeta=(J^{-1}\zeta){\circ}\bm\chi^{-1},\qquad\text{and}\qquad \uptheta=\theta{\circ}\bm\chi^{-1}.
\end{align}
The spatial fields are more amenable to physical interpretation than
the corresponding referential quantities, since their value at a point
$z\in\Omega$ represents quantities that in principle are
accessible to physical measurement. Precisely: $\bm{\mathsf m}(z)$
is the density of magnetic moments per unit spatial volume; $\upzeta(z)$ is the density of diffusant per
unit spatial volume; $\uptheta(z)$ is the temperature in the
position $z$. 

\medskip

\paragraph{The environment.}  As we have anticipated, we assume that the chemical environment is of isolation type. This means that the flux of chemical species vanish at the boundary. On account of this we impose, as a constraint, the total amount of chemical species be  equal to a given constant:
\begin{equation}\nonumber
\int_\varOmega\zeta\,\dx=Z_{\rm tot}.
\end{equation}
We next specify the standard mechanical interaction of the body with its environment. To this effect, we assume that the body is clamped on a part $\varGamma_{\rm D}$ of $\partial\Omega$ having positive Hausdorff two-dimensional measure. Accordingly, we require that admissible deformations satisfy 
\begin{equation}\label{eq:fixation}
\bm\chi=\bm\chi_{\rm D}\ \text{ on }\ \varGamma_{\rm D},
\end{equation}
with $\bm\chi_{\rm D}$ given. We model interaction with the environment of purely mechanical origin through the \emph{mechanical potential energy} \cite{Ciarl1988}:
\begin{equation}\label{eq:17}
  \mathcal L(\bm\chi):=\int_{\varOmega} \bm f\cdot\bm\chi\dx+\int_{{\varGamma}_{\rm N}}\bm g\cdot\bm\chi\operatorname{d}\!S,
\end{equation}
where $\bm g:\varGamma_{\rm N}\to\mathbb R^d$ and 
$\bm f:\Omega\to\mathbb R^3$ is a system of dead loads, 
see Fig.~\ref{fig-elastic-magnet}. \BLUE Let us remark
that, in fact, more general load is possible in particular also 
on the boundary due to the used concept of nonsimple materials,
but we do not want to bring additional technicalities into the model. 
\color{black}

We next turn to the description of magnetic interactions with the environment. Such interactions depend on magnetic moments associated to possibly electric currents or other magnetized bodies outside the region $\Omega$. We take these interaction collectively into account through the following \emph{magnetic potential energy}:
\begin{equation}\label{eq:1}
\mathcal Z(\bm\chi,\bm m)\color{black}=\int_{{\varOmega}}\!\bm h_{\rm e}\cdot {\bm m}\,\d x
\end{equation}
where $\bm h_{\rm e}:\varOmega\to\mathbb R^d$, the Lagrangian external magnetic field, depends on $\bm\chi$ through the relation
\begin{equation}
  \label{eq:19}
\bm h_{\rm e}=\bm F^\top [\bm{\mathsf h}_{\rm e}{\circ}\bm{\chi}],\qquad\bm F=\nabla\bm\chi,
\end{equation}
with $\bm{\mathsf h}_{\rm e}:\mathbb R^d\to\mathbb R^d$ an externally imposed magnetic field. From the first of \eqref{m-pull-back}, we have
\begin{equation}\label{eq:14}
\bm m=J\,\bm F^{-1}\,[\bm{\mathsf m}{\circ}\bm{\chi}],
\end{equation}
and hence
\begin{align}
\mathcal Z(\bm\chi,\bm m)
&=\int_{\varOmega}\!\Big(\bm F^\top[\bm{\mathsf h}_{\rm e}{\circ}\bm{\chi}]\Big)\cdot
\Big(J\color{black}\bm F^{-1}[\bm{\mathsf m}{\circ}\bm{\chi}]\Big)\,\d x
\nonumber\\&
=\int_{\varOmega}\![\bm{\mathsf h}_{\rm e}{\circ}\bm{\chi}]\cdot[\bm{\mathsf m}{\circ}\bm{\chi}]J\,\d x
=\int_{\bm\chi({\varOmega})}\bm{\mathsf h}_{\rm e}\cdot\bm{\mathsf m}\,\d z
=\int_{\mathbb R^d}\!\bm{\mathsf h}_{\rm e}\cdot\overline{\bm{\mathsf m}}\,\d z.
\label{Zeeman-energy}
\end{align}
where, for $\bm{\mathsf m}$ defined in \eqref{m-pull-back},
\begin{equation}\label{eq:12}
  \overline{\bm{\mathsf m}}:\mathbb R^d\to\mathbb R^d,\qquad\text{defined by }\overline{\bm{\mathsf m}}(z)=
\left\{
\begin{array}{ll}
\bm{\mathsf m}(z)&\text{if }z\in\Omega,\\[0.4em]
\bm 0&\text{otherwise},
\end{array}
\right.
\end{equation}
is the trivial extension of $\bm{\mathsf m}$ to $\mathbb R^d$. It is easy to check that the last term in the chain of equalities \eqref{Zeeman-energy} coincides with the standard Zeeman energy \cite{HuberS1998}.

An illustration of the relation between the spatial and Lagrangian fields of 
interest is in 
Fig.~\ref{fig-elastic-magnet}.

\begin{figure}[ht]
\begin{center}
\psfrag{x}{\small $x$}
\psfrag{y(x)}{\small $\chi(x)$}
\psfrag{m}{\small $\bm m$}
\psfrag{h_r}{\small $\bm h_{\rm e}$}
\psfrag{m_s}{\small $\bm{\mathsf m}$}
\psfrag{reference}{\small\bf reference}
\psfrag{(material)}{\small\bf (material)}
\psfrag{deformed}{\small\bf deformed}
\psfrag{(spatial)}{\small\bf (spatial)}
\psfrag{configuration}{\small\bf configuration}
\psfrag{homogeneous}{\small\bf homogeneous}
\psfrag{magnetic}{\small\bf magnetic}
\psfrag{field}{\small\bf field $\bm{\mathsf h}_{\rm e}$}
\psfrag{GDir}{\large ${\varGamma}_\text{\sc D}$}
\psfrag{W}{\large ${\varOmega}$}
\psfrag{y(O)}{\large $\bm\chi({\varOmega})$}
\includegraphics[width=.85\textwidth]{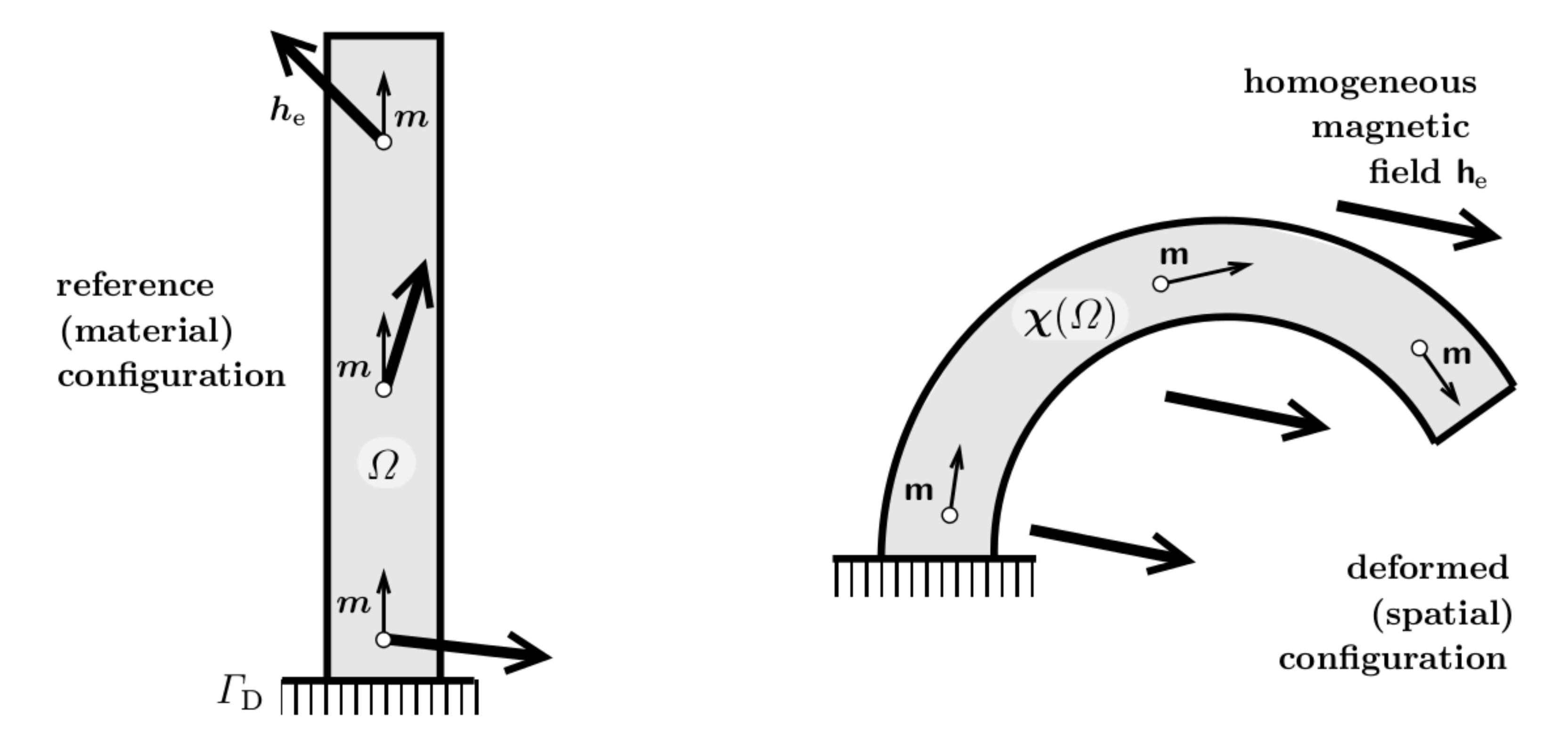}
 \end{center}
 \vspace*{-.9em}
 \caption{Sketch of the relation between the Lagrangian magnetization $\bm m=J\bm F^{-1}[\bm m{\circ}\bm\chi^{-1}]$ and external magnetic field $\bm h_{\rm e}=\bm F^\top[\bm{\mathsf h}_{\rm e}{\circ}\bm\chi^{-1}]$ and their spatial counterparts $\bm{\mathsf m}$ and $\bm{\mathsf h}_{\rm e}$. In the present case we consider a body that is elastically soft and magnetically hard with homogeneous Lagrangian magnetization $\bm m$. The body is fixed at the bottom by Dirichlet condition is deformed 
 in a spatially-constant external magnetic field $\bm{\mathsf h}_{\rm e}$. 
}
 \label{fig-elastic-magnet}
 \end{figure}

\medskip

\paragraph{The free energy of the body.}
The total Helmholtz free energy is the sum
\begin{equation}\label{eq:15}
  \mathcal H(\bm\chi,\bm m,\zeta,\theta)=\mathcal H_{\rm th}(\bm\chi,\bm m,\zeta,\theta)+\mathcal H_{\rm mag}(\bm\chi,\bm m)
\end{equation} 
of the \emph{thermomechanical free energy} $\mathcal H_{\rm th}(\bm\chi,\bm m,\zeta,\theta)$ defined in \eqref{Helmholtz} and the \emph{magnetostatic free energy} \cite{HuberS1998}:
\begin{equation}\label{eq:18}
  \mathcal H_{\rm mag}(\bm\chi,\bm m)=\frac {\upmu_0}2\int_{\mathbb R^d}|\operatorname{\mathsf{grad}}\upphi|^2\,\d z,
\end{equation}
where  $\upmu_0$ is the permeability of vacuum and where $\upphi$, the spatial scalar magnetic potential, is a solution of 
\begin{align}\label{maxwell+}
-\operatorname{\sf{div}}\big(\mu_0\operatorname{\mathsf{grad}}\upphi\big)=\operatorname{\sf{div}}\big(\overline{\bm{\mathsf m}}\big)\quad\text{on }\ \R^d\ \text{ with }\ \overline{\bm{\mathsf m}}\text{ from }\eqref{m-pull-back}\text{ and }\eqref{eq:12}
\end{align}
in the sense of distributions\BLUE; to be more specific,
we also consider a ``boundary condition at $\infty$'' 
by seeking a solution to \eq{eq:12} in $H^1(\R^d)$, cf.\ Theorem~\ref{thm1} below. \color{black}

On using $\upphi_{\bm \chi,\bm m}$ as a test function in \eqref{maxwell+} we find
\begin{align}\nonumber
\mathcal H_{\rm mag}(\bm\chi,\bm m) & {} = {} \frac12\int_{\mathbb R^d}\!\!{\mathsf{grad}}\upphi\cdot\overline{\bm{\mathsf m}}\,\d z {} = {} \frac12\int_{\bm{\chi}({\varOmega})}\!\!{\mathsf{grad}}\upphi\cdot\bm{\mathsf m}\,\d z.
\end{align}
Moreover, by introducing the referential 
\BLUE{}scalar magnetic potential\color{black}
\begin{align}\label{phi-h-pull-back}
\phi=\upphi{\circ} \bm\chi,
\end{align}
we have
\begin{equation}
  \operatorname{\mathsf{grad}}\upphi=[\bm F^{-\!\top}\nabla\phi]{\circ}\bm\chi^{-1}\qquad\text{in }\Omega,
\end{equation}
hence, on recalling \eqref{eq:14}, we can write, by changing from $\bm\chi(\varOmega)$ to $\varOmega$ the domain of integration,
\begin{align}
 \mathcal H_{\rm mag}(\bm\chi,\bm m) 
&
= {} \frac12\int_{\varOmega}\!\big(\bm F^{-\!\top}\nabla\phi\big)\cdot
[\bm{\mathsf m}{\circ}\bm{\chi}]J\,\d x= \frac12\int_{\varOmega}\!\nabla\phi\cdot
J\bm F^{-1}[\bm{\mathsf m}{\circ}\bm{\chi}]\,\d x
=\frac 12\int_{{\varOmega}}\!\nabla\phi{\cdot} {\bm m}\,\d x.
\label{demagnetising-energy}
\end{align}
By combining \eqref{Helmholtz}, \eqref{eq:15}, and \eqref{demagnetising-energy}, we obtain the total Helmholtz free energy of the body:
\begin{equation}\label{eq:16}
  \mathcal H(\bm\chi,\bm m,\zeta,\theta)=\int_\Omega \psi(\nabla\bm\chi,\bm m,\zeta,\theta)
+\frac{\kappa_1}2|\nabla\bm m|^2+\frac{\kappa_2}2|\nabla\zeta|^2+\frac 12 \mathfrak H(\nabla^2\bm\chi)\Vdots\nabla^2\bm\chi+\frac 12\nabla\phi\cdot\bm m\,\d x,
\end{equation}
Here $\mathfrak{H}:=\mathscr{H}'$ is the (G\^ateaux) derivative of the 
nonlocal energy $\mathscr H$ defined in \eqref{def-of-H}, namely 
\begin{align}\label{def-of-H++}
[\mathfrak H({\mathbb G})](x):=\big[\mathscr{H}'({\mathbb G})\big](x)=\int_{\varOmega}
\mathscr{K}(x{-}\tilde x)\Vdots({\mathbb G}(x){-}{\mathbb G}(\tilde x))\,\d\tilde x,
\end{align}
as can be seen by the Fubini theorem to evaluate the nonlocal hyperstress 
explicitly 
\BLUE provided \color{black}
the symmetry of $\mathscr K$ \BLUE in the sense 
$\mathscr K(x)=\mathscr K(-x)$ is additionally assumed\color{black}.

\medskip

\paragraph{Assumptions on the bulk free energy.}
We assume that the bulk free-energy mapping $\psi$ is continuous and 
twice-continuously differential with respect to $\theta$ on $[0,+\infty)$. 
We also assume that it satisfies \BLUE the \color{black} frame 
indifference and  \BLUE exhibits a \color{black} (sufficiently \BLUE fast) blow-up when ${\rm det}\bm F$ approaches zero 
in order to prevent local self-penetration,  \color{black} 
and that it is a 
strictly concave function \BLUE and \color{black} 
coercive at infinity with respect to $\theta$. 
Thus, we require
\begin{subequations}\label{ass-1}
\begin{align}\label{frame}
&\forall\bm Q\in{\rm SO}(d):\quad\
\psi(\bm Q\bm F,\bm m,\zeta,\theta)=\psi(\bm F,\bm m,\zeta,\theta),
\\&\label{ass-HK}
{\rm det}\bm F>0\ \ \ \Rightarrow\ \ \ 
\psi(\bm F,\bm m,\zeta,\theta)\ge\epsilon/({\rm det}\bm F)^p\ \ 
\text{ for some}
\ \ 
p>
\frac{2d}{\BLUE2\gamma{+}2{-}d\color{black}} 
,\ \ \gamma>d/2{-}1,
\\&\label{ass-HK+}
{\rm det}\bm F\le0\ \ \ \Rightarrow\ \ \ 
\psi(\bm F,\bm m,\zeta,\theta)=\infty
\\&\forall {\bm Q}\!\in\!\mathrm{SO}(d)\ \forall {\mathbb G}\!\in\!\R^{d\times d\times d}:\!
\sum_{i,j,k,l,m,n=1}^d\!\!\!\!\!\!{Q}_{in}G_{njp}\mathscr{K}_{ijpklq}(x){Q}_{km}G_{mlq}
=\sum_{i,j,k,l,m,n=1}^d\!\!\!\!\!\!G_{ijp} \mathscr{K}_{ijpklq}(x)G_{klq},\\[0.7em]
&\label{ass-1-entropy1}-\partial^2_{\theta\theta}\psi(\bm F,\bm m,\zeta,\theta)>0,\ \ \lim_{\theta\to 0+ }\partial_\theta\psi(\bm F,\bm m,\zeta,\theta)\le C,\ \text{ and }\  \lim_{\theta\to+\infty}\partial_\theta\psi(\bm F,\bm m,\zeta,\theta)=-\infty,\\[0.7em]
&\lim_{\theta\to+\infty}\frac{\psi(\bm F,\bm m,\zeta,\theta)}{\theta^\delta}=0 \ \text{ for some}\ \delta> 1,\label{ass-1-entropy2}\\[0.7em]
 &\psi(\bm F,\bm m,\zeta,s)\le C\left(1+|\bm m|^{\BLUE{}q\color{black}}+|\zeta|^{\BLUE{}q\color{black}}\right)
\BLUE\ \ \ \text{ with some }\ 1\le q<2d/(d{-}2)\color{black}\color{black}.
\end{align} 
\end{subequations}
for some $\epsilon>0$, $C\BLUE<+\infty\color{black}$, and with $\gamma$ the exponent appearing in 
\eqref{kernel}. Here $Q_{ij}$, $G_{mlq}$, $\mathscr K_{ijpklq}$ denote the 
Cartesian components of the tensors $\bm Q$, ${\mathbb G}$, and $\mathscr K$ 
respectively; moreover 
${\rm SO}(d):=\{\bm Q\in\R^{d\times d};\ \bm Q^\top=\bm Q^{-1},\ \det\bm Q>0\}$ 
denoting the special orthogonal group (i.e.\ the group of 
orientation-preserving rotations). The nonlocal energy $\mathscr{H}$ defined 
in \eqref{def-of-H} enjoys \emph{frame-indifference} in the sense that 
$\mathscr{H}({\mathbb G})=\mathscr{H}({\bm Q}{\mathbb G})$ for all 
${\bm Q}\in\mathrm{SO}(d)$ for any field 
${\mathbb G}:{\varOmega}\to\R^{d\times d\times d}$ 
(here $({\bm Q}{\mathbb G})_{ijk}=Q_{il}G_{ljk}$). 

\medskip

\paragraph{Entropy as state variable.}
The total free energy is defined as the total Helmholtz free energy minus the 
potential energy of the environment, 
\[
\mathcal P(\bm\chi,\bm m,\zeta,\theta)=\mathcal H(\bm\chi,\bm m,\zeta,\theta)-\mathcal L(\bm\chi)-\mathcal Z(\bm\chi,\bm m),
\]
where we recall that $\mathcal H$, $\mathcal L$, and $\mathcal Z$ are defined 
in \eqref{eq:16}, \eqref{eq:17}, and \eqref{eq:1}, respectively. Accordingly, 
the total energy is 
\[
\mathcal U(\bm\chi,\bm m,\zeta,\theta)=\mathcal E(\bm\chi,\bm m,\zeta,\theta)-\mathcal L(\bm\chi)-\mathcal Z(\bm\chi,\bm m),
\]
where 
\[
\mathcal E(\bm\chi,\bm m,\zeta,\theta)=\int_\Omega e(\nabla\bm\chi,\bm m,\zeta,\theta)
+\frac{\kappa_1}2|\nabla\bm m|^2+\frac{\kappa_2}2|\nabla\zeta|^2+\frac 12 \mathfrak H(\nabla^2\bm\chi)\Vdots\nabla^2\bm\chi+\frac 12\nabla\phi\cdot\bm m\,\d x,
\]
with
\begin{equation}
  \label{eq:22}
  e(\bm F,\bm m,\zeta,\theta)=\psi(\bm F,\bm m,\zeta,\theta)+\theta s(\bm F,\bm m,\zeta,\theta),\qquad  s(\bm F,\bm m,\zeta,\theta)=-\partial_\theta\psi(\bm F,\bm m,\zeta,\theta)
\end{equation}
being, respectively, the \emph{bulk} internal energy and the entropy.

The Principle of Minimum of the Total Energy  \cite[15.2.4]{Silha1997Mechanics} states that stable equilibrium states minimize the total energy $\mathcal U$ under the constraint that the total entropy be constant:
\begin{equation}
  \label{eq:20}
\int_\varOmega s(\bm F,\bm m,\zeta,\theta)\,\dx=S_{\rm tot}.
\end{equation}
When using this principle as selection criterion, we find it convenient to 
replace temperature with entropy as independent field,  a device that makes it 
easier for us to handle the constraint \eqref{eq:20}. Such device is at our 
disposal thanks to Assumption \eqref{ass-1-entropy1}, which entails that the 
mapping 
$$
\R^+\ni\theta\mapsto s(\bm\chi,\bm m,\zeta,\theta)
\in[s_{\rm min}(\bm F,\bm m,\zeta),+\infty)
$$
is strictly concave for every choice of the arguments $(\bm F,\bm m,\zeta)$, 
hence it is invertible, with the lower bound on entropy 
$s_{\rm min}(\bm F,\bm m,\zeta)=s(\bm F,\bm m,\zeta,0)\ge -C$
 being a consequence of \eqref{ass-1-entropy1}. To this effect, 
Convex Analysis comes in 
hand,
for one can check that the mapping that delivers the bulk internal energy $e$ 
as function of the state variables $(\bm F,\bm m,\zeta,s)$ is the Legendre 
transform of $-\psi$. We set
\begin{subequations}\label{eq:5}
\begin{equation}
\widetilde e(\nabla\bm\chi,\bm m,\zeta,s)
=\sup_{\theta\in\BLUE\mathbb R\color{black}}\psi_*(\nabla\bm\chi,\bm m,\zeta,\theta)
+\theta s,
\end{equation}
where $\psi_*(\nabla\bm\chi,\bm m,\zeta,\cdot)$ is the unique continuously differentiable 
affine extension of $\psi(\nabla\bm\chi,\bm m,\zeta,\cdot)$ to $\mathbb R$.

As desired, we now have the expression of the internal energy 
\begin{equation}
  \label{eq:23}
\widetilde{\mathcal E}(\bm\chi,\bm m,\zeta,s)=\int_\Omega \widetilde e(\nabla\bm\chi,\bm m,\zeta,s)
+\frac{\kappa_1}2|\nabla\bm m|^2+\frac{\kappa_2}2|\nabla\zeta|^2+\frac 12 \mathfrak{H}(\nabla^2\bm\chi)\Vdots \nabla^2\bm\chi+\frac 12\nabla\phi\cdot\bm m\,\d x,  
\end{equation}
as function of entropy at our disposal. We can then define the total energy as:
\begin{equation}
  \label{eq:24}
  \widetilde{\mathcal U}(\bm\chi,\bm m,\zeta,s)=\begin{cases}
&\widetilde{\mathcal E}(\bm\chi,\bm m,\zeta,s)-\mathcal L(\bm\chi)-\mathcal Z(\bm\chi,\bm m)\qquad \text{if}\qquad \int_\varOmega\widetilde{e}(\bm\chi,\bm m,\zeta,s)\ \dx<+\infty,\\
&+\infty \qquad\text{ otherwise}.
\end{cases}
\end{equation}
\end{subequations}
Having used the affine extension of $\psi$ we guarantee that
\begin{subequations}\label{ass-21}
\begin{equation}
\widetilde e(\bm F,\bm m,\zeta,s)=+\infty\ \text{ if }\ s< s_{\rm min}(\bm F,\bm m,\zeta).
\end{equation}
In addition the function $\widetilde e$ inherits from $\psi$ the following properties:
\begin{align}
&\label{ass-2HK}
{\rm det}\bm F>0\ \ \ \Rightarrow\ \ \ 
\widetilde e(\bm F,\bm m,\zeta,s)\ge\epsilon/({\rm det}\bm F)^p\ \ 
\text{ for some}
\ \ 
p>\frac{2d}
{\BLUE2\gamma{+}2{-}d\color{black}} ,\ \ \gamma>d/2{-}1,
\\&\label{ass-2HK+}
{\rm det}\bm F\le0\ \ \ \Rightarrow\ \ \ 
\widetilde e(\bm F,\bm m,\zeta,s)=\infty,
\end{align} 
\end{subequations}
together with frame indifference.

Noteworthy, the Lagrange multiplier to the constraint $\int_{\varOmega}\zeta\,\d x
=Z_{\rm tot}$ is the (spatially constant) chemical potential, while the Lagrange multiplier to the constraint $\int_\varOmega s\, \d x=S_{\rm tot}$ is temperature.

We remark that for the isothermal nonmagnetic case cf.\ also \cite{KruRou18MMCM}
while for the isothermal nondiffusion in mixed Eulerian/Lagrangian
formulation cf.\ also \cite{RybLus05EEMM}. The latter 
case shows that finer arguments allow for 
admitting $0<\gamma\le d/2-1$ and, avoiding usage of \cite{HeaKro09IWSS},
even for a simpler local 2nd-grade model with the highest-order quadratic
potential of the type 
$\mathscr{H}(\nabla^2\bm\chi)=\int_{\varOmega}|\nabla^2\bm\chi|^2\,\d x$.
We used the nonlocal quadratic variant, \BLUE i.e.\ the concept of
nonsimple materials of the $(2{+}\gamma)$-grade, \color{black}
rather for the later purposes in the dynamical case, \BLUE cf.\ also 
Remark~\ref{rem-nonlocal}.\color{black}

Under assumptions \eqref{ass-1-entropy1} and \eqref{ass-1-entropy2}, the bulk internal energy is coercive with respect to entropy:  
\begin{equation}\label{lem:100}
  \lim_{s\to+\infty}\frac {\widetilde e(\bm F,\bm m,\zeta,s)}{s^{\delta'-\varepsilon}}=+\infty,
\end{equation}
where $\delta'=\delta/(\delta-1)$ is the conjugate exponent of $\delta$ and $0<\varepsilon<\delta'-1$.
Indeed, for every $\theta_*\in\mathbb R$ we have
\[
\widetilde e(\bm F,\bm m,\zeta,s)\ge \psi(\bm F,\bm m,\zeta,\theta_*)+\theta_*s.
\]
In particular, for $\theta_*=s^{\delta'-1}$ we find
\begin{equation}
  \lim_{s\to\infty} \frac{\widetilde e(\bm F,\bm m,\zeta,s)}{|s|^{\delta'-\varepsilon}}\ge \lim_{s\to\infty}\bigg(\frac{\psi(\bm F,\bm m,\zeta,s^{\delta'-1})}{s^{\delta'}}+s^\varepsilon\bigg)=\lim_{s\to\infty}\bigg(\frac{\psi(\bm F,\bm m,\zeta,s^{\frac 1{\delta-1}})}{(|s|^{\frac 1{\delta-1}})^{\delta}}+s^{\varepsilon}\bigg)=+\infty.
\end{equation}
On account of 
\eqref{lem:100}, we choose
\begin{equation}
  1<q<\delta'.
\end{equation}
so that
\begin{equation}\label{eq:27}
  \lim_{s\to+\infty}\frac {\widetilde e(\bm F,\bm m,\zeta,s)}{s^q}=+\infty,
\end{equation}
and we take $L^q(\varOmega)$ as admissible space for the entropy field $s$.

\medskip

\paragraph{The minimization problem.} We can now formulate the following problem:
\begin{align}\label{poroelastic-largestrain}
\left.\begin{array}{ll}
\text{Minimize }& \widetilde{\mathcal U}(\bm\chi,\bm m,\zeta,s),
\displaystyle
\\[.2em]\text{subject to }&
 \displaystyle{\int_{\varOmega}\zeta\,\d x=Z_{\rm tot}\ \text{ and }\ \displaystyle\int_{\varOmega}s\,\d x=S_{\rm tot}},
\\[.8em]
&\displaystyle{\int_{\varOmega}\det(\nabla\bm\chi(x))\,\d x\le{\rm meas}_d(\bm\chi({\varOmega}))}\ \text{  and }\ \ \ 
\bm\chi\big|_{{\varGamma}_\text{\sc D}}^{}\!\!=\bm\chi_{_\text{\sc D}}^{}.
\end{array}\right\}
\end{align}

\medskip

\paragraph{Domain of the energy functional.} 
We still need to provide the formulation \eqref{poroelastic-largestrain} with a proper function-analytic setting. As domain for $\widetilde{\mathcal U}$ we select the space 
\begin{equation}\label{eq:26}
  X=H^{2+\gamma}(\Omega;\R^d)\times H^1(\Omega;\R^d)\times H^1(\Omega)\times L^q(\Omega).
\end{equation}
As a first step to guarantee that the functional $\widetilde{\mathcal U}$ given by \eqref{eq:24} be well defined on the domain $X$, we want to ensure that the  Zeeman-energy functional $\mathcal Z$ given by \eqref{Zeeman-energy}--\eqref{eq:19} makes sense whenever $\bm\chi\in H^{2+\gamma}(\Omega;\R^d)$ and $\bm m\in H^1(\Omega;\R^d)$. To this aim, we assume
\begin{subequations}\label{eq:21}
\begin{align}
 &\bm f\in L^1(\varOmega;\R^d),\\
 &\bm g\in L^1(\varGamma_{\rm N};\R^d),\\
 & \bm{\mathsf h}_{\rm e}\in L^2(\R^d,\R^d).\label{eq:21c}
\end{align}
\end{subequations}
As we show below, \eqref{eq:21} guarantees that our requirements are met. It also shows that the magnetostatic energy is well defined. 

The next result is \BLUE a modification \color{black}
of \cite[Theorem 3.1]{HeaKro09IWSS} 
\BLUE formulated originally for the $W^{2,p}$-spaces. \BLUE Beside 
modifying it for the Hilbert-type $H^{2+\gamma}$-spaces, for reader's convenience \color{black}
here we provide an alternative proof:

\begin{lemma}\label{lem:1}
  Let $\gamma$ and $p$ satisfy assumption \eqref{ass-2HK}, namely,
  \begin{equation}\label{eq:35}
    \gamma>d/2-1\qquad\text{ and }\qquad p>\frac{2d}
{\BLUE2\gamma{+}2{-}d\color{black}} .
  \end{equation}
Assume that $\bm\chi\in H^{2+\gamma}(\varOmega;\R^d)$ has positive determinant 
$J=\operatorname{det}\nabla\bm\chi>0$ in $\varOmega$ satisfying 
$\int_\varOmega J^{-p}\dx<M$ for some constant $M$. Then 
\begin{equation}\label{eq:34}
  \bm\chi\in C^{1,\alpha}(\overline\varOmega),\qquad \text{ with }\alpha=\gamma-\left(\frac d 2-1\right),
\end{equation}
and 
\begin{equation}\label{eq:34+}
  \operatorname{det}\nabla\bm\chi\ge \BLUE\eta\color{black}>0\ \ \text{ in }\overline\varOmega,
\end{equation}
where the constant $\BLUE\eta\color{black}$ depends on 
$\varOmega$, $p$, $\gamma$, and $M$, but not on $\bm\chi$.
\end{lemma}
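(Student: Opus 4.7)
My plan is to split the statement into two independent pieces. For \eqref{eq:34} I would simply invoke the Sobolev--Slobodetski\u\i\ embedding $H^{2+\gamma}(\varOmega;\R^d)\subset C^{1,\alpha}(\overline\varOmega;\R^d)$, which holds on a Lipschitz domain precisely when $\gamma>d/2-1$, with $\alpha=\gamma-(d/2-1)$. From here $\nabla\bm\chi\in C^{0,\alpha}(\overline\varOmega;\R^{d\times d})$, and since $J=\det\nabla\bm\chi$ is a polynomial of degree $d$ in the entries of $\nabla\bm\chi$, one also has $J\in C^{0,\alpha}(\overline\varOmega)$, with H\"older seminorm $[J]_{0,\alpha}\le L$ controlled by $\|\bm\chi\|_{H^{2+\gamma}}$.

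For \eqref{eq:34+} I would run a scaling/contradiction argument in the spirit of \cite{HeaKro09IWSS}. Let $\eta_0:=\min_{\overline\varOmega}J$, attained at some $x_0$ by compactness and continuity. If $\eta_0>0$ is small, the H\"older bound gives
\[
J(x)\le \eta_0+L\,|x-x_0|^\alpha\qquad\text{for all}\ x\in B_r(x_0)\cap\varOmega,
\]
so choosing $r\sim(\eta_0/L)^{1/\alpha}$ yields $J\le 2\eta_0$ on that set. Combining this with the lower volume bound $|B_r(x_0)\cap\varOmega|\ge c_\varOmega\,r^d$ for small $r$ on a Lipschitz domain, one gets
\[
M>\int_\varOmega J^{-p}\,\d x\ge c_\varOmega\,2^{-p}\,L^{-d/\alpha}\,\eta_0^{\,d/\alpha-p}.
\]
The hypothesis $p>2d/(2\gamma+2-d)=d/\alpha$ makes the exponent $d/\alpha-p$ strictly negative, which, by solving for $\eta_0$, forces $\eta_0$ above a positive constant $\eta$. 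The degenerate case $\eta_0=0$ would be ruled out directly by plugging $J(x)\le L|x-x_0|^\alpha$ into $\int J^{-p}\,\d x$ and integrating in polar coordinates, producing a divergent integral because $p\alpha>d$ by the same choice of exponents.

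The main obstacle I anticipate is that the constant $L$ intrinsically depends on $\|\bm\chi\|_{H^{2+\gamma}}$, so the bound $\eta$ cannot be phrased purely in terms of $(\varOmega,p,\gamma,M)$ without implicitly involving an $H^{2+\gamma}$-norm bound on $\bm\chi$. In practice this is harmless, because the lemma is applied to sequences enjoying uniform energy estimates so that $L$ is uniform as well; but I would make this dependence explicit when writing out the proof, and declare $\eta$ to depend on $\varOmega$, $p$, $\gamma$, $M$, and the available $H^{2+\gamma}$-bound.
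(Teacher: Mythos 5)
Your proof is correct, and its second half takes a route that is related to, but mechanically different from, the paper's. The paper starts from the same embedding $H^{2+\gamma}(\varOmega;\R^d)\subset C^{1,\alpha}(\overline\varOmega;\R^d)$ and the H\"older bound $|J(x)-J(y)|\le C_\alpha|x-y|^\alpha$, but then argues via the sublevel set $A_\eta=\{x:J(x)\le\eta\}$: Chebyshev gives $\operatorname{meas}_d(A_\eta)\le M\eta^p$, any $x\in A_\eta$ lies within distance $(\operatorname{meas}_d(A_\eta))^{1/d}$ of a point $y$ with $J(y)>\eta$, and H\"older continuity propagates the bound back, yielding $J(x)\ge\eta-C_\alpha M^{\alpha/d}\eta^{p\alpha/d}$, whence a uniform positive lower bound after optimizing over $\eta\in(0,1)$, using exactly $p\alpha/d>1$. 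You instead localize at a minimum point $x_0$ of $J$ on $\overline\varOmega$ and force near-blow-up of $\int_\varOmega J^{-p}\,\d x$ on the ball $B_r(x_0)$ with $r\sim(\eta_0/L)^{1/\alpha}$, which requires the measure-density property $|B_r(x_0)\cap\varOmega|\ge c_\varOmega r^d$ of Lipschitz domains; the exponent comparison $p>d/\alpha=2d/(2\gamma+2-d)$ plays the identical role. Both are sound: the paper's version does not explicitly invoke the boundary measure-density estimate (though the distance-to-complement step near $\varGamma$ implicitly leans on domain regularity of the same kind), while yours is closer in spirit to the original Healey--Kr\"omer argument and gives an explicit rate $\eta\sim\bigl(c_\varOmega 2^{-p}L^{-d/\alpha}/M\bigr)^{1/(p-d/\alpha)}$. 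For the degenerate case $\min_{\overline\varOmega}J=0$ (a priori possible only on $\varGamma$, since $J>0$ in $\varOmega$), your polar-coordinate divergence again needs the measure-density bound (e.g.\ via a layer-cake or dyadic-annuli estimate), but that is routine.

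Your closing caveat is well taken and applies equally to the paper's own proof: the constant $C_\alpha$ there is precisely your $L$, the H\"older seminorm of $J$, which depends on $\|\bm\chi\|_{H^{2+\gamma}}$. So in both proofs $\eta$ depends on $\bm\chi$ through an $H^{2+\gamma}$-bound, and the statement ``not on $\bm\chi$'' must be read in that sense; this is indeed how the lemma is used later, where the deformations are confined to a fixed sublevel set of the energy, providing the uniform norm bound.
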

\begin{proof}
As a start, we recall that \eqref{eq:34} follows from the inclusion 
$H^{2+\gamma}(\varOmega;\R^d)\subset C^{1,\alpha}(\overline\varOmega;\R^d)$. Thus, 
in particular,
  \begin{equation}
    J=\operatorname{det}\nabla\bm\chi\in C^\alpha(\overline\varOmega),
  \end{equation}
  and thus there exists a constant $C_\alpha>0$ such that
  \begin{equation}
    |J(x)-J(y)|\le C_\alpha |x-y|^\alpha\quad\forall x,y\in\overline\varOmega.
  \end{equation}
  Next, given $\eta>0$, define
    $A_\eta:=\{x\in\varOmega:J(x)\le \eta\}$.
  Then $\operatorname{meas}_d(A_\eta)\eta^{-p}\le \int_\Omega J^{-p}\ \d x\le M$. Hence, the $d-$dimensional Lebesgue measure of $A_\eta$ satisfies 
    $\operatorname{meas}_d(A_\eta)\le M \eta^p$.
    We assume that $\eta$ be sufficiently small so that $A_\eta^c=\BLUE{}\overline{\varOmega\setminus A_\eta}$ is a nonempy set\color{black}. Then, given $x \in A_\eta$, we have
  \begin{equation}
    x\in A_\eta\quad\Rightarrow\quad\operatorname{dist}(x,A_\eta^c)\le (\operatorname{meas}_d(A_\eta))^{1/d}.
  \end{equation}
  Take $y\in A_\eta^c$ such that $|x-y|=\operatorname{dist}(x,A_\eta^c).$ 
Then $|J(y)-J(x)|\le C_\alpha|x-y|^\alpha\le C_\alpha (\operatorname{meas}_d(A_\eta))^{\alpha/d}$. Hence, since $J(y)>\eta$,
  \begin{equation}
    J(x)>J(y)-|J(x)-J(y)|>\eta-C_\alpha|x-y|^\alpha\ge  \eta-C_\alpha(\operatorname{meas}_d(A_\eta))^{\alpha/d}\ge \eta- C_\alpha M^{\alpha/d}\eta^{p\alpha/d}.
  \end{equation}
  It follows from  \eqref{eq:35} that
  $p\alpha/d>1$. Thus,
  \begin{equation}
  \displaystyle J(x)\ge\operatorname{sup}_{0<\eta<1}(\eta- C_\alpha M^{\alpha/d}\eta^{p\alpha/d})=:\eta>0.
    \end{equation}
\end{proof}
\BLUE{}So far, the definitions \eqref{eq:1} of the Zeeman energy and \eqref{eq:18} are only formal. In fact, \eqref{eq:1} involves the pull-back of the external magnetic field into the reference configuration. Thus, in order for the integral on the right-hand side of \eqref{eq:1} to make sense, the pull-back must be an element of $L^2(\varOmega;\mathbb R^d)$. Likewise, the magnetostatic energy defined in \eqref{eq:19} involves a magnetic potential obtained by solving the elliptic problem \eqref{maxwell+}. The right-hand side of \eqref{maxwell+} involves the push-forward $\bm{\mathsf m}$ defined in the first of \eqref{m-pull-back}. To guarantee that the expression \eqref{eq:19} makes sense, we shall show that the extension of the spatial magnetization defined in the first of \eqref{m-pull-back} is in $L^2(\mathbb R^d;\mathbb R^d)$. We shall accomplish this task in the next two lemmas. Although the proofs use standard tools from measure theory, we include them for  the reader's convenience. \color{black} 

\begin{lemma}\label{lem:55}
Assume that $\bm\chi:\varOmega\to\R^d$ be an injective mapping of class 
$C^1(\varOmega)$ satisfying 
$\operatorname{det}(\nabla\bm\chi)\BLUE\ge\eta\color{black}>0$ in 
$\varOmega$. Then, \BLUE{}if $L$ is a Lebesgue measurable subset of $\bm\chi(\varOmega)$, then $\bm\chi^{-1}(L)$ is a 
Lebesgue measurable subset of $\varOmega$.\color{black}
\end{lemma}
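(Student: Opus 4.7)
The plan is to exploit the fact that, under the hypotheses, $\bm\chi$ is actually a $C^1$ diffeomorphism onto an open set, so that its inverse $\bm\psi:=\bm\chi^{-1}$ is itself $C^1$ and the preimage $\bm\chi^{-1}(L)$ equals the forward image $\bm\psi(L)$. The main work then reduces to showing that $\bm\psi$ sends Lebesgue measurable subsets of $\bm\chi(\varOmega)$ to Lebesgue measurable sets.

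First, since $\bm\chi\in C^1(\varOmega)$ is injective with $\det\nabla\bm\chi\ge\eta>0$, the inverse function theorem ensures that $\bm\chi(\varOmega)$ is open in $\R^d$ and $\bm\psi\in C^1(\bm\chi(\varOmega);\varOmega)$. In particular, $\bm\psi$ is continuous and, being $C^1$, locally Lipschitz on its domain. Next, I would use the standard decomposition of a Lebesgue measurable set: write
\begin{equation*}
L=F\cup N,\qquad F=\bigcup_{n\in\N}K_n,
\end{equation*}
where each $K_n$ is a compact subset of $\bm\chi(\varOmega)$ (obtained by intersecting an inner $F_\sigma$-approximation of $L$ with closed balls inside $\bm\chi(\varOmega)$) and $N$ is a Lebesgue null set contained in $\bm\chi(\varOmega)$.

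For the $F_\sigma$-part, continuity of $\bm\psi$ gives that each $\bm\psi(K_n)$ is compact in $\varOmega$, so
\begin{equation*}
\bm\psi(F)=\bigcup_{n\in\N}\bm\psi(K_n)
\end{equation*}
is an $F_\sigma$, hence Borel, hence Lebesgue measurable. For the null-part, I would exhaust $\bm\chi(\varOmega)$ by an increasing sequence of compact sets $\widetilde K_m$; on each $\widetilde K_m$ the map $\bm\psi$ is Lipschitz (because it is $C^1$ on a compact set), and it is a classical fact that Lipschitz maps $\R^d\to\R^d$ send Lebesgue null sets to Lebesgue null sets. Thus each $\bm\psi(N\cap\widetilde K_m)$ is null, and $\bm\psi(N)=\bigcup_m\bm\psi(N\cap\widetilde K_m)$ is null as well. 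Since $\bm\chi^{-1}(L)=\bm\psi(L)=\bm\psi(F)\cup\bm\psi(N)$, this is the union of a Borel set and a null set, therefore Lebesgue measurable.

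I do not expect any real obstacle; the only mildly delicate point is ensuring that the inner $F_\sigma$-approximation $F$ of $L$ can be arranged to consist of compact subsets of the open set $\bm\chi(\varOmega)$ (not merely closed in $\R^d$), which is handled by intersecting with a nested compact exhaustion of $\bm\chi(\varOmega)$. Once this is in place, the argument is purely measure-theoretic and uses only continuity together with the Lipschitz property of $\bm\psi$ on compact subsets, both of which follow immediately from $\bm\chi^{-1}\in C^1(\bm\chi(\varOmega))$.
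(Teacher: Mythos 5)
Your proof is correct and follows essentially the same route as the paper's: decompose the measurable set $L$ into an $F_\sigma$-type part plus a Lebesgue null part, handle the first part by a purely topological argument and the null part through the Lipschitz behaviour of $\bm\chi^{-1}$, and conclude by completeness of the Lebesgue measure. The only (harmless) deviations are that you realize $\bm\chi^{-1}(L)$ as a forward image under the $C^1$ inverse and use compact rather than relatively closed pieces, and that you localize the Lipschitz estimate to a compact exhaustion of $\bm\chi(\varOmega)$ instead of invoking, as the paper does, a uniform Lipschitz bound for $\bm\chi^{-1}$ on all of $\bm\chi(\varOmega)$ --- in fact a slightly more careful variant, since the uniform bound tacitly relies on boundedness of $\nabla\bm\chi$.
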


\begin{proof}
Every Lebesgue measurable subset of $\bm\chi(\varOmega)$ is the union of 
\BLUE some \color{black} 
$F_\sigma$-set\BLUE{}s (\color{black}\emph{i.e.} countable union\BLUE{}s \color{black} of relatively closed subsets 
of $\varOmega$) and a set with null Lebesgue measure. Hence,
\[
\text{$L=V\bigcup N$, \ \ where $V=\bigcup_n V_n$ with $V_n\subset\varOmega$ relatively closed, and $\operatorname{meas}_d(N)=0$.}
\]

Since $\chi$ is continuous on $\Omega$, $\chi^{-1}$ maps relatively closed subsets of $\bm\chi(\varOmega)$ into relatively closed subsets of $\varOmega$. Thus $\chi^{-1}(V)=\chi^{-1}(\cup_n V_n)=\cup_n\chi^{-1}(V_n)$ is a union of relatively closed sets of $\varOmega$, that is, a $F_\sigma$ subset of $\varOmega$. Furthermore, since $N$ has null Lebesgue measure, it can be covered with a countable family $\{U_n\}_n$ of relatively open sets $U_n\subset\bm\chi(\varOmega)$ whose union has arbitrarily small volume $\varepsilon$:
\[
\text{$N\subset\bigcup_n U_n$ with $U_n\subset\varOmega$ relatively open and $\operatorname{meas}_d\Big(\bigcup_n U_n\Big)\le\varepsilon$.}
\]
Since $\nabla\bm\chi\in C(\varOmega)$ and since its Jacobian $\operatorname{det}(\nabla\bm\chi)$ is bounded from below by a positive constant, it follows from the inverse function theorem  that $\bm\chi^{-1}$ is uniformly Lipschitz continuous. Hence there is a constant $M>0$ such that
\[
\operatorname{meas}_d\Big(\chi^{-1}\Big(\bigcup_n U_n\Big)\Big)\le M\varepsilon.
\]
The arbitrariness of $\varepsilon$ implies that the exterior measure of $\chi^{-1}(N)$ is zero and hence, by the completeness of the Lebesgue measure, $\chi^{-1}(N)$ has null Lebesgue measure. Thus the proof is completed.
\end{proof}


\begin{lemma}\label{lem:2}
Assume that $\bm\chi$ satisfies all assumption of Lemma \ref{lem:55} and that $\bm m\in H^1(\varOmega;\R^d)$. Then:
 
a) if the spatial external field $\bm{\mathsf h}_{\rm e}$ satisfies \eqref{eq:21c}, then the function $\bm{h}_{\rm e}:\varOmega\to\R^d$ defined in \eqref{eq:19} satisfies
$\bm{h}_{\rm e}\in L^2(\varOmega;\R^d)$,
and hence the \BLUE{}integral on the right-hand side of \eqref{eq:1}
makes sense. \color{black}

b) if $\bm m\in H^1(\varOmega;\R^d)$ then the spatial magnetic field $\overline{\bm{\mathsf m}}$ defined in \eqref{m-pull-back} and \eqref{eq:12} satisfies
$\overline{\bm{\mathsf m}}\in L^2(\R^d;\R^d)$,
and hence the elliptic problem \eqref{maxwell+}, which defines 
$\upphi\in H^1(\R^d;\R^d)$  is well posed, \BLUE
thus the integral on the right-hand side of \eqref{eq:18} makes sense. \color{black}
\end{lemma}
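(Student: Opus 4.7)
The strategy is to leverage the $C^{1,\alpha}$ regularity and the uniform lower bound $J\ge\eta>0$ delivered by Lemma~\ref{lem:1}, together with the measurability transfer from Lemma~\ref{lem:55}, so that each of (a) and (b) reduces to a direct change of variables. As a preliminary step I would note that these hypotheses force $\bm\chi$ to be a bi-Lipschitz homeomorphism from $\varOmega$ onto the open set $\bm\chi(\varOmega)$: $C^{1,\alpha}$ regularity of $\bm\chi$ bounds $\|\bm F\|_{L^\infty(\varOmega)}$, while the uniform lower bound on $J$ together with the cofactor formula $\bm F^{-1}=\Cof(\bm F)^\top/J$ bounds $\|\bm F^{-1}\|_{L^\infty(\varOmega)}$; the inverse function theorem then yields a globally Lipschitz inverse on a connected open image, and the standard Euclidean change-of-variables formula applies to any Lebesgue-measurable integrand.

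For (a), Lemma~\ref{lem:55} (applied via the bi-Lipschitz property) ensures that $\bm{\mathsf h}_{\rm e}\circ\bm\chi$ is Lebesgue measurable on $\varOmega$ whenever $\bm{\mathsf h}_{\rm e}$ is Lebesgue measurable on $\R^d$. Using the pointwise bound $|\bm h_{\rm e}|\le\|\bm F\|_{L^\infty}|\bm{\mathsf h}_{\rm e}\circ\bm\chi|$ arising from \eqref{eq:19}, together with $\d z=J\,\dx$ and $J\ge\eta$, I would estimate
\[
\int_\varOmega|\bm h_{\rm e}|^2\,\dx\le\|\bm F\|_{L^\infty}^2\int_\varOmega|\bm{\mathsf h}_{\rm e}\circ\bm\chi|^2\,\dx\le\eta^{-1}\|\bm F\|_{L^\infty}^2\int_{\bm\chi(\varOmega)}|\bm{\mathsf h}_{\rm e}|^2\,\d z\le C\|\bm{\mathsf h}_{\rm e}\|_{L^2(\R^d)}^2,
\]
which shows $\bm h_{\rm e}\in L^2(\varOmega;\R^d)$, so the Zeeman integral in \eqref{eq:1} is well-defined.

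For (b), the embedding $H^1(\varOmega;\R^d)\hookrightarrow L^2(\varOmega;\R^d)$ combined with the uniform bounds on $J^{-1}$ and $\bm F$ gives $J^{-1}\bm F\bm m\in L^2(\varOmega;\R^d)$; its composition with the Lipschitz homeomorphism $\bm\chi^{-1}$ is measurable, and the same change of variables yields
\[
\int_{\R^d}|\overline{\bm{\mathsf m}}|^2\,\d z=\int_{\bm\chi(\varOmega)}|\bm{\mathsf m}|^2\,\d z=\int_\varOmega J^{-1}|\bm F\bm m|^2\,\dx\le\eta^{-1}\|\bm F\|_{L^\infty}^2\|\bm m\|_{L^2(\varOmega)}^2<+\infty,
\]
so $\overline{\bm{\mathsf m}}\in L^2(\R^d;\R^d)$. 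Well-posedness of \eqref{maxwell+} then follows by Lax--Milgram applied to the coercive bilinear form $(u,v)\mapsto\upmu_0\int_{\R^d}\gradop u\cdot\gradop v\,\d z$ on the natural (homogeneous) Sobolev space on $\R^d$: since $\overline{\bm{\mathsf m}}\in L^2(\R^d;\R^d)$ the distribution $\divop\overline{\bm{\mathsf m}}$ pairs continuously with $\gradop v\in L^2$, producing a unique $\upphi$ with $\gradop\upphi\in L^2(\R^d;\R^d)$ so that the magnetostatic energy \eqref{eq:18} is finite.

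The only subtle point along the way is the measurability of the pull-back $\bm{\mathsf h}_{\rm e}\circ\bm\chi$ and the push-forward in $\overline{\bm{\mathsf m}}$ when the underlying fields are merely of $L^2$ class, not continuous; this is exactly what Lemma~\ref{lem:55} together with the bi-Lipschitz property from Lemma~\ref{lem:1} is designed to handle. Everything else is routine $L^\infty$-arithmetic on the pre-factors $\bm F$ and $J^{-1}$ and a standard Lax--Milgram argument.
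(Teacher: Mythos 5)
Your proposal is correct and takes essentially the same route as the paper: measurability of the composed fields is delegated to Lemma~\ref{lem:55}, and the $L^2$ bounds follow from the change of variables together with $J\ge\eta>0$ and the boundedness of $\bm F$ on $\overline\varOmega$. Your closing Lax--Milgram remark (on the homogeneous Sobolev space) simply makes explicit the well-posedness of \eqref{maxwell+} that the paper asserts without detail, so no genuinely different argument is involved.
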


\begin{proof}
By Lemma \ref{lem:55}, the function $\bm{\mathsf h}_{\rm e}\circ\bm\chi$ is 
Lebesgue measurable. In fact, if $B\subset\R^d$ is a Borel set then $(\bm{\mathsf h}_{\rm e}\circ\bm\chi)^{-1}(B)=\bm\chi^{-1}(\bm{\mathsf h}_{\rm e}^{-1}(B)\cap\Omega)$. The set $L=\bm{\mathsf h}_{\rm e}^{-1}(B)\cap\Omega$ is Lebesgue measurable, and hence $\bm\chi^{-1}(L)$ is Lebesgue measurable. Moreover, we have $\bm{\mathsf h}_{\rm e}\circ\bm\chi\in L^2(\varOmega;\R^d)$, since, by the change of variables formula and by the boundedness of $\nabla\bm\chi$ on $\varOmega$,
\[
  \int_\varOmega|\bm{\mathsf h}_{\rm e}\circ\bm\chi|^2\d x=\int_{\bm\chi(\varOmega)}(J^{-1}\circ\bm\chi^{-1})|\bm{\mathsf h}_{\rm e}|^2\d z\le C \|\bm{\mathsf h}_{\rm e}\|_{L^2(\R^d;\R^d)}^2,\qquad J^{-1}=\frac 1{\operatorname{det}(\nabla\bm\chi)}.
\]
To conclude the proof, we observe that the mapping $L^2(\Omega;\R^d)\ni \bm f\mapsto \bm F^\top\bm f\in L^2(\Omega;\R^d)$ is a Nemytski\u{\i} operator, since $\bm F\in C(\overline\varOmega)$. Thus, $(a)$ is proved. 

We now prove $(b)$. We first show that $\overline{\bm{\mathsf m}}:\R^d\to\R^d$ is a measurable function. Given a Borel set $B\subset\R^d$, we have $\overline{\bm{\mathsf m}}^{-1}(B)=(\overline{\bm{\mathsf m}}^{-1}(B)\cap\Omega)\cup (\overline{\bm{\mathsf m}}^{-1}(B)\cap\Omega^c)$ where $\Omega^c$ is the complement of $\Omega=\bm\chi(\varOmega)$ in $\R^d$. Since $\overline{\bm{\mathsf m}}$ vanishes outside $\Omega$, either $\overline{\bm{\mathsf m}}^{-1}(B)\cap\Omega^c=\Omega^c$, or $\overline{\bm{\mathsf m}}^{-1}(B)\cap\Omega^c=\emptyset$. In both cases, $\overline{\bm{\mathsf m}}^{-1}(B)\cap\Omega^c$ is Lebesgue measurable, because the set $\Omega$, being the image of $\varOmega$ under the homeomorphism $\bm\chi$, is a Borel set and hence it is Lebesgue measurable. In addition, we have $\overline{\bm{\mathsf m}}^{-1}(B)\cap\Omega=\bm{\mathsf m}^{-1}(B)$, where $\bm{\mathsf m}$ is given in \eqref{m-pull-back}. Recalling that $\bm{\mathsf m}^{-1}=\bm g\circ\bm\chi^{-1}$, where $\bm g=J^{-1}\bm F\bm m$, we have $\bm{\mathsf m}^{-1}(B)=\bm\chi(\bm g^{-1}(B))=\bm\chi(L)$, where $G=\bm g^{-1}(B)$ is a Lebesgue measurable set since $\bm g$ is an element of $L^2(\varOmega;\R^d)$. 
\BLUE{}It is then easy to conclude that  
$\bm{\mathsf m}^{-1}(B)$ is Lebesgue measurable.\color{black}  
\end{proof}

\BLUE{}We now can use Lemma~\ref{lem:2} to show \color{black}that $\widetilde{\mathcal U}$ is well 
defined on the admissible space. In fact, if 
$\widetilde{\mathcal E}(\bm\chi,\bm m,\zeta,s)<+\infty$, then by the 
coercivity \eqref{ass-2HK}, the deformation $\bm\chi$ satisfies the hypotheses 
of Lemma~\ref{lem:1}. Hence, by Lemma~\ref{lem:2}, $\mathcal Z(\bm\chi,\bm m)$ 
the \BLUE{}definitions of $\mathcal H_{\rm mag}(\bm\chi,\bm m)$ make sense.\color{black} 
\BLUE Therefore, assuming that \color{black}
the applied loads and the external magnetic field satisfy \eqref{eq:21},
the expression \eqref{eq:24} defines a functional from the space $X$ \BLUE specified \color{black} in \eqref{eq:26} 
to $\R\cup\{+\infty\}$. 
We are now going to show that the functional has at least a minimizer:

\begin{theorem}[Existence of minimizing static configurations]\label{thm1}
Let $\widetilde e:\R^{d\times d}\times\R^d\times\R\times\R\to\R^+\cup\{+\infty\}$ 
be continuous and $\theta\to\psi(\bm F,\bm m,\zeta,\theta)$ be twice 
continuously differentiable for all $(\bm F,\bm m,\zeta)\in\mathbb R^{d\times d}\times\mathbb R^d\times\R$ satisfying the coercivity \eqref{ass-21} and also \eq{kernel}, 
and let 
\BLUE $\mathscr{H}$ be given by \eqref{def-of-H} with the kernel $\mathscr{K}$
\color{black} satisfying \eq{kernel}, 
$\kappa>0$, $p>d$, ${\rm Meas}_{d-2}({\varGamma}_\text{\sc D}^{})>0$
and $\bm\chi_\text{\sc D}^{}$ allow an extension to ${\varOmega}$ that renders  
$\widetilde{\mathcal U}(\bm\chi_{\rm D},\bm m_*,\zeta_*,\theta_*)$ finite for some $(\bm m_*,\zeta_*,\theta_*)$  in $H^1({\varOmega};\R^d)\times H^1({\varOmega})\times H^1({\varOmega})\times H^1(\varOmega)$ such that $\theta_*>0$. 
Then problem \eqref{poroelastic-largestrain} with $\widetilde{\mathcal U}$ given by \eqref{eq:5} has a solution
$(\bm\chi,\bm m,\zeta,s,\phi)\in H^{2+\gamma}({\varOmega};\R^d)\times H^1({\varOmega};\R^d)\times H^1({\varOmega})\times H^1({\varOmega})\times 
\BLUE{}H^1(\R^d)\color{black}$.
\end{theorem}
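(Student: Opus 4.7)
The argument follows the direct method of the calculus of variations. Under the hypothesis of the theorem, $\inf\widetilde{\mathcal U}$ is finite, so we pick a minimizing sequence $(\bm\chi_n,\bm m_n,\zeta_n,s_n)_n\subset X$ and denote by $(\phi_n)$ the associated magnetic potentials obtained by solving \eqref{maxwell+}. The plan is to extract a priori bounds, pass to a weakly convergent subsequence, verify that the constraints and Dirichlet data survive the passage to the limit, and establish weak lower semicontinuity of every contribution to $\widetilde{\mathcal U}$.

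\textbf{A priori bounds.} The singular kernel estimate \eqref{kernel} shows that $\mathscr{H}(\nabla^2\bm\chi_n)$ dominates the Gagliardo $H^\gamma({\varOmega})$--seminorm of $\nabla^2\bm\chi_n$; combined with the Dirichlet trace on $\varGamma_{\rm D}$ (of positive $(d{-}1)$--measure) and Poincar\'e, this yields a uniform $H^{2+\gamma}$--bound on $\bm\chi_n$. Coercivity \eqref{ass-2HK} then gives $\int_{\varOmega} J_n^{-p}\,\d x\le C$, and Lemma~\ref{lem:1} promotes this to the uniform lower bound $J_n\ge\eta>0$ and a uniform $C^{1,\alpha}(\overline{\varOmega})$--bound on $\bm\chi_n$. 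The quadratic gradient terms together with the integral constraints (via Poincar\'e-Wirtinger) and Young's inequality absorb the Zeeman and load contributions, noting that $\|\bm h_{{\rm e},n}\|_{L^2({\varOmega})}$ is uniformly controlled by Lemma~\ref{lem:2}(a) and the uniform positivity of $J_n$; this produces $H^1$--bounds on $\bm m_n,\zeta_n$. The coercivity \eqref{eq:27} combined with $\int_{\varOmega} s_n\,\d x=S_{\rm tot}$ yields an $L^q$--bound on $s_n$. Finally, $\mathcal H_{\rm mag}\ge 0$ together with Lemma~\ref{lem:2}(b) bounds $\phi_n$ in $H^1(\R^d)$.

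\textbf{Compactness and passage to limits in the constraints.} Up to a subsequence, $\bm\chi_n\rightharpoonup\bm\chi$ in $H^{2+\gamma}({\varOmega};\R^d)$ and strongly in $C^1(\overline{\varOmega};\R^d)$ by compact embedding, $\bm m_n\rightharpoonup\bm m$ and $\zeta_n\rightharpoonup\zeta$ in $H^1$ (hence strongly in $L^r$ for $r<2^*$), $s_n\rightharpoonup s$ in $L^q({\varOmega})$, and $\phi_n\rightharpoonup\phi$ in $H^1(\R^d)$. The linear integral constraints and the Dirichlet condition pass immediately. Uniform $C^1$--convergence gives $\det\nabla\bm\chi_n\to\det\nabla\bm\chi$ uniformly on $\overline{\varOmega}$, so the Ciarlet-Ne\v cas inequality survives along the classical argument of \cite{CiaNec87ISCN}, using upper semicontinuity of $\operatorname{meas}_d(\bm\chi_n({\varOmega}))$ along uniform limits of injective maps with uniformly positive Jacobian.

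\textbf{Lower semicontinuity and conclusion.} The nonlocal functional $\mathscr{H}$, the quadratic gradient terms for $\bm m$ and $\zeta$, and the magnetostatic energy $\tfrac{\upmu_0}{2}\|\nabla\upphi_n\|^2_{L^2(\R^d)}$ are nonnegative quadratic forms, hence convex and weakly lower semicontinuous; identification of the weak limit $\phi$ with the magnetic potential generated by $(\bm\chi,\bm m)$ comes from passing to the limit in \eqref{maxwell+}, the key ingredient being the strong convergence $\overline{\bm{\mathsf m}}_n\to\overline{\bm{\mathsf m}}$ in $L^2(\R^d;\R^d)$, itself a consequence of the uniform lower bound on $J_n$ (Lemma~\ref{lem:1}), the strong $C^1$--convergence of $\bm\chi_n$, and strong $L^2$--convergence of $\bm m_n$ (via Lemma~\ref{lem:2}(b)). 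The density $\widetilde e$ is continuous in $(\bm F,\bm m,\zeta)$ and convex in $s$ (being a partial Legendre transform of $-\psi_*$), so an Ioffe--De~Giorgi type theorem yields weak lower semicontinuity of $\int_{\varOmega}\widetilde e(\nabla\bm\chi_n,\bm m_n,\zeta_n,s_n)\,\d x$ under the established strong/weak convergences. The Zeeman and load terms are continuous along these convergences. Hence $\widetilde{\mathcal U}(\bm\chi,\bm m,\zeta,s)\le\liminf_n\widetilde{\mathcal U}(\bm\chi_n,\bm m_n,\zeta_n,s_n)=\inf\widetilde{\mathcal U}$, and $(\bm\chi,\bm m,\zeta,s,\phi)$ is the desired minimizer. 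The principal obstacle is the passage to the limit in the magnetostatic block: because $\overline{\bm{\mathsf m}}$ depends nonlinearly on $\bm\chi$ through the inverse deformation, \emph{strong} convergence of the push-forward in $L^2(\R^d)$ is indispensable, and this hinges crucially on the uniform positivity of $J_n$ guaranteed by Lemma~\ref{lem:1}.
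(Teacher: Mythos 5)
Your proof follows essentially the same route as the paper's: the direct method with a minimizing sequence, Lemma~\ref{lem:1} (via the coercivity \eqref{ass-2HK}) to get the uniform lower bound $J_n\ge\eta>0$ and $C^{1,\alpha}$-compactness, strong $L^2(\R^d)$ convergence of the pushed-forward magnetization so as to pass to the limit in \eqref{maxwell+} and in the magnetostatic and Zeeman terms, convexity of $\widetilde e$ in $s$ for the internal-energy integral, and weak lower semicontinuity of the remaining quadratic terms. The only substantive addition is your explicit verification that the Ciarlet--Ne\v cas constraint survives the limit (which the paper leaves implicit); apart from that, and modulo the same brevity as the paper on the coercivity giving the initial a priori bounds, the two arguments coincide.
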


\BLUE
Let us note that we do not make any statement about possible
non-negativity of temperature $\theta$ which can be reconstructed 
from the solution $(\bm\chi,\bm m,\zeta,s,\phi)$ through
$\theta=\partial_s\widetilde e(\nabla\bm\chi,\bm m,\zeta,s)$, cf.\ \eq{eq:5},
in contrast to the evolution case later where non-negativity can be granted. 
\color{black}

\begin{proof}[Proof of Theorem~\ref{thm1}]
We consider a minimizing sequence
$\{(\bm\chi_k,\bm m_k,\bm \zeta_k,s_k)\}_{k\in\N}\subset X$ for 
the functional $\widetilde{\mathcal U}$. We denote by $\bm{\mathsf m}_k$ and 
$\overline{\bm{\mathsf m}}_k$ the corresponding spatial magnetizations, 
defined by \eqref{m-pull-back} and \eqref{eq:12} with the pair 
$(\bm\chi,\bm m)$ being replaced by $(\bm\chi_k,\bm m_k)$. 
From the coercivity properties of the energy we have that the sequence 
\begin{align}
&\bm\chi_k\to \bm\chi&&\text{weakly in }H^{2+\gamma}(\varOmega;\R^d),&&&&&&\\
&\bm m_k\to \bm m&&\text{weakly in }H^{1}(\varOmega;\R^d),&&\\
&\zeta_k\to \zeta&&\text{weakly in }H^{1}(\varOmega),&&\\
&s_k\to s&&\text{weakly in }L^q(\varOmega).&
\end{align}
Hence, by compact embedding,
\begin{subequations}\label{eq:28}
\begin{align}
&&&\bm\chi_k\to\bm\chi&&\text{strongly in }C^{1,\alpha}(\overline\varOmega;\R^d)
&&\text{for some }0<\alpha<1,&&&&\label{eq:28a}\\
&&&\bm m_k\to \bm m&&\text{strongly in }L^{\BLUE{}q\color{black}}(\varOmega;\R^d)
&&\text{for all }\BLUE1\le q<2d/(d{-}2)\color{black},
&&\label{eq:28b}
\\
&&&\zeta_k\to \zeta&&\text{strongly in }L^{\BLUE{}q\color{black}}(\varOmega)
&&\text{for all }\BLUE1\le q<2d/(d{-}2)\color{black}.&&
\end{align}
\end{subequations}
Moreover, by Lemma \ref{lem:1} and by the coercivity property \eqref{ass-2HK} 
we have that
\begin{equation}
  J_k:=\operatorname{det}\nabla\bm\chi_k\ge \BLUE\eta\color{black}>0
\quad\text{in }\varOmega,
\end{equation}
where the constant $\BLUE\eta\color{black}>0$ does not depend on $k$. 
We have, that $J_k$ converges uniformly to $J$ in $\varOmega$, and hence 
\begin{equation}
  J=\operatorname{det}\nabla\bm\chi\ge\BLUE\eta\color{black}\qquad \text{in }\varOmega.
\end{equation}
We denote by $\overline{\bm{\mathsf m}}$ the spatial magnetization corresponding to the pair $(\bm\chi,\bm m)$ and by $\bm{\mathsf m}$ its restriction on $\bm\chi(\varOmega)$. It follows from \eqref{eq:28}
\begin{equation}
  \int_{\R^d}\!|\overline{\bm{\mathsf m}}_k|^2\,\d z
= \int_{\bm\chi(\varOmega)}\!|\bm{\mathsf m}_k|^2\,\d z
=\int_{\varOmega}\!J_k^{-1}|\bm F_k\bm m_k|^2\,\d x\to
\int_{\varOmega}\!J^{-1}|\bm F\bm m|^2\d x
= \int_{\bm\chi(\varOmega)}\!|\bm{\mathsf m}|^2\,\d z=\int_{\R^d}\!|\overline{\bm{\mathsf m}}|^2\,\d z,
\end{equation}
and also that, for every test field $\bm\upvarphi\in H^1(\R^d;\R^d)$,
\begin{equation}\label{eq:29}
  \int_{\R^d}(\overline{\bm{\mathsf m}}_k-\bm{\mathsf m})\cdot\bm\upvarphi\,\d z\to 0.
\end{equation}
Thus, the $L^2$ norms of $\overline{\bm{\mathsf m}}_k$ converge to those of $\overline{\bm{\mathsf m}}$, hence the weak convergence in \eqref{eq:29} yields strong convergence:
\begin{equation}
  \overline{\bm{\mathsf m}}_k\to\overline{\bm{\mathsf m}}\quad\text{strongly in }L^2(\R^d;\R^d).
\end{equation}
Since $\int_{\R^d}|\gradop\upphi_k|^2\, \d z=\int_{\R^d}\overline{\bm{\mathsf m}}\cdot\gradop\upphi_k\,\d z$, we have that $\upphi_k$ is bounded in $H^1(\R^d)$, hence $\upphi_k\to\upphi$ weakly in $H^1(\R^d)$. Moreover, for $\upvarphi\in H^1(\R^d)$ a test function, by passing to the limit in the equation $\int_{\R^d}\gradop\upphi_k\cdot\gradop\upvarphi\, \d z=\int_{\R^d}\overline{\bm{\mathsf m}}_k\cdot\gradop\upvarphi\,\d z$ we obtain that $\phi$ is the unique $H^1(\R^d)$ solution of \eqref{maxwell+}. From the strong convergence of $\overline{\bm{\mathsf m}}_k$ it follows that
\begin{equation}\int_{\R^d}|\gradop\upphi_k|^2\, \d z\to \int_{R^d}\overline{\bm{\mathsf m}}\cdot\gradop\upphi\,\d z=\int_{\R^d}|\gradop\upphi|^2\, \d z.
\end{equation}
Thus, in particular, $\upphi_k\to\upphi\quad\text{strongly in }H^1(\R^d)$. Now, thanks to the growth properties of $\widetilde e$, and to its convexity with respect to $s$, we have that
\begin{equation}
  \liminf_{k\to\infty}\int_\varOmega \widetilde e(\bm\chi_k,\bm m_k,\zeta_k,s_k)\,\BLUE\d x\color{black}
\ge \int_\varOmega \widetilde e(\bm\chi_k,\bm m_k,\zeta_k,s_k)\,\BLUE\d x\color{black}.
\end{equation}
The conclusion of the proof follows from weak lower semicontinuity of the remaining quadratic terms of the energy.
\end{proof}

\begin{remark}[Lagrangian versus mixed Eulerian/Lagrangian setting]\label{rem-frame-indifference}
\upshape
It is noteworthy to realize the relation to the  mixed Eulerian/Lagrangian
setting used  e.g.\ in
\cite{JamKind93TMAT,RybLus05EEMM}, worked with the magnetization in the 
deformed configuration $\bm{\mathsf m}$ and denoting the energy used there 
by $\varphi_{_{\rm EL}}$. 
Let us now denote ``our'' energy $\varphi$ used here by $\varphi_{_{\rm L}}$.
In contrast to \eqref{frame} for $\varphi_{_{\rm L}}$, the frame indifference for 
$\varphi_{_{\rm EL}}$ means that
$\varphi_{_{\rm EL}}(x,{\bm R}{\bm F},{\bm R}\tilde {\bm m},\zeta,\theta)=\varphi_{_{\rm EL}}(x,{\bm F},\tilde {\bm m},\zeta,\theta)$ for all ${\bm R}\in{\rm SO}(d)$.
Both approaches are mutually equivalent.
Indeed, 
taking $\varphi_{_{\rm EL}}(x,{\bm F},\tilde {\bm m}):=\tilde\varphi_{_{\rm M}}(x,{\bm F}^\top {\bm F},{\bm F}^\top \tilde {\bm m})$ with some ``material'' stored energy $\tilde\varphi_{_{\rm M}}:{\varOmega}\times\R^{d\times d}\times\R^d\to\R\cup\{\infty\}$ and with $\tilde {\bm m}$ a 
placeholder for $\bm{\mathsf m}{\circ}\bm{\chi}$ as used in \cite{JamKind93TMAT,RybLus05EEMM}, 
is the same as taking $\varphi_{_{\rm L}}(x,{\bm F},{\bm m}):=\varphi_{_{\rm M}}(x,{\bm C},{\bm m})
=\tilde\varphi_{_{\rm M}}(x,{\bm C},{\bm C}{\bm m}/\sqrt{\det {\bm C}})$ with 
${\bm m}=(\det {\bm F}){\bm F}^{-1}\bm{\mathsf m}{\circ}\bm{\chi}$ the pulled-back magnetization according 
\eqref{m-pull-back}; here, for notational simplicity, 
we avoided $\zeta$ and $\theta$ dependence. More in detail,
\begin{align*}
\varphi_{_{\rm L}}(x,{\bm F},{\bm m})&:=\varphi_{_{\rm M}}(x,{\bm C},{\bm m})
=\tilde\varphi_{_{\rm M}}\Big(x,{\bm C},\frac{{\bm C}{\bm m}}{\sqrt{\det {\bm C}}}\Big)
\\&=\tilde\varphi_{_{\rm M}}\Big(x,{\bm F}^\top {\bm F},(\det {\bm F}){\bm F}^\top {\bm F}\frac{{\bm F}^{-1}\bm{\mathsf m}{\circ}\bm{\chi}}{\det {\bm F}}\Big)
=\tilde\varphi_{_{\rm M}}(x,{\bm F}^\top {\bm F},{\bm F}^\top \bm{\mathsf m}{\circ}\bm{\chi})=\varphi_{_{\rm EL}}(x,{\bm F},\tilde {\bm m}).
\end{align*}
Alternatively, one can consider  
$\varphi_{_{\rm EL}}(x,{\bm F},\tilde {\bm m})
:=\tilde\varphi_{_{\rm M}}(x,{\bm F}^\top {\bm F},{\bm F}^{-1}\tilde {\bm m})$
which also guarantees the desired frame indifference. 
Then it is the same as taking 
$\varphi_{_{\rm L}}(x,{\bm F},{\bm m}):=\varphi_{_{\rm M}}(x,{\bm C},{\bm m})
=\tilde\varphi_{_{\rm M}}(x,{\bm C},{\bm m}/\sqrt{\det {\bm C}})$
because
\begin{align*}
\varphi_{_{\rm L}}(x,{\bm F},{\bm m})&:=\varphi_{_{\rm M}}(x,{\bm C},{\bm m})
=\tilde\varphi_{_{\rm M}}\Big(x,{\bm C},\frac{{\bm m}}{\sqrt{\det {\bm C}}}\Big)
\\&=\tilde\varphi_{_{\rm M}}\Big(x,{\bm F}^\top {\bm F},(\det {\bm F}){\bm F}^{-1}\frac{\bm{\mathsf m}{\circ}\bm{\chi}}{\det {\bm F}}\Big)
=\tilde\varphi_{_{\rm M}}(x,{\bm F}^\top {\bm F},{\bm F}^{-1}\bm{\mathsf m}{\circ}\bm{\chi})=\varphi_{_{\rm EL}}(x,{\bm F},\tilde {\bm m}).
\end{align*}
Anyhow, although both ``our'' fully Lagrangian and the mixed 
Lagrangian/Eulerian approaches are mutually equivalent as far as the stored 
energy concerns, evolution formulated for the reference magnetization ${\bm m}$ 
is more amenable to mathematical analysis than for the magnetization in the 
deformed configuration $\bm{\mathsf m}$, not speaking about a 
formulation of the problem in the fully Eulerian setting as used e.g.\ 
in \cite{DesPod96CTDF}.
For the quasistatic extension of the incompressible model formulated in 
mixed Lagrangian/Eulerian setting 
we refer to \cite{KrStZe15ERIM} where the solution concept fully relies on the energetic-solution formulation.
\end{remark}

\BLUE{}
\begin{remark}[Concept of nonlocal nonsimple materials]\label{rem-nonlocal}
\upshape
The $H^1$-regularization for $\bm m$ and $\zeta$  versus the 
$H^{1+\gamma}$-regularization for the deformation gradient $\nabla\bm\chi$ is 
conceptually a bit inconsistent. Of course, using the nonlocal 
$H^{1+\gamma}$-regularization for all variables, i.e.\ also for $\bm m$ and 
$\zeta$, would make the desired mathematical effects, too, but it would 
be more heavy as far as notation concerns, which is the reason why we avoid 
it. In the static case in this section, another conceptually consistent 
option would be to consider the nonlinear (i.e.\ governed by non-quadratic 
potential) but local $W^{1,p}$-regularization with $p>d$ for $\nabla\bm\chi$, 
$\bm m$, and $\zeta$. This latter scenario however would not work for 
evolutionar problems in Sect.~\ref{sec-dynamic} for analytical reasons, 
leading to hyperbolic problems nonlinear in the main part, which is why we 
do not consider it here. On the other hand, the concept of on-locality in continuum theory is well accepted \cite{Erin02NCFT,Edelen1976}, and 
has also been involved in attempts to resolve certain questions in the theory of material defects \cite{Kuni82-3EMM}. 
\end{remark}

\color{black}

\section{\BLUE{}A PDE system describing dynamics in the Lagrangian formulation}
\label{sec-dynamic}
\BLUE{}
In this section we formulate an evolution problem for the state fields 
$\bm\chi$, $\bm m$, $\zeta$, $\theta$. In accordance with a standard 
thermodynamical practice, the natural energy to work with is the free energy. 
We present the collection of relevant balance equations and we make a 
selection of thermodynamically consistent constitutive equations. The result 
is a system of hyperbolic-parabolic partial differential equations 
\eqref{system}. This system, along with the boundary conditions \eqref{BC} 
and the initial conditions \eqref{IC}, is the basis for the notion of weak 
solutions we shall provide in the next section. 

As the inertia is now involved, we do not need to consider a particular
fixation of the body as we made in \eq{eq:fixation} and, 
rather for notational \color{black} simplicity we assume that there are no 
external constraints on a part of the boundary,
\BLUE i.e.\  \color{black} $\varGamma_{\rm D}=\emptyset$ and 
$\varGamma_{\rm N}=\varGamma$. 
A technical assumption concerning the bulk part $\psi$ of the free energy is
\begin{equation}\label{ansatz}
  \partial_{\bm F\theta}^2\psi(\bm F,\bm m,\zeta,\theta)=\bm 0.
\end{equation}
Thanks to this assumption, we can write the free energy as
\begin{equation}\label{ansatz2}
\psi(\bm F,\bm m,\zeta,\theta)=\psi_{\textsc{me}}(\bm F,\bm m,\zeta)+ \psi_{\textsc{th}}(\bm m,\zeta,\theta),
\end{equation}
where $\psi_{\textsc{me}}(\bm F,\bm m,\zeta)=\psi(\bm F,\bm m,\zeta,0)$ and $\psi_{\textsc{th}}(\bm m,\zeta,\theta)=-\int_0^\theta \vartheta\partial^2_{\vartheta\vartheta}\psi(\bm I,\bm m,\zeta,\vartheta)\operatorname{d}\!\vartheta+\theta\partial_\theta\psi(\bm I,\bm m,\zeta,\theta)$. \BLUE{}We notice here for later use that an immediate consequence of \eqref{ansatz2} is that the bulk part $e(\bm F,\bm m,\zeta,\theta)$ of the internal energy, which we have introduced in \eqref{eq:22}, can be written as:
\begin{equation}\label{eq:36}
  e(\bm F,\bm m,\zeta,\theta)=e_{\textsc{th}}(\bm m,\zeta,\theta)+\psi_{\textsc{me}}(\bm F,\bm m,\zeta),
\end{equation}
where
\begin{equation}
  \label{eq:10}
e_{\textsc{th}}(\bm m,\zeta,\theta)=\psi_{\textsc{th}}(\bm m,\zeta,\theta)-\theta\partial_\theta\psi_{\textsc{th}}(\bm m,\zeta,\theta)
\end{equation}
is the thermal part of the internal energy.\color{black}

The restriction \eqref{ansatz} uncouples temperature from the deformation gradient, but not from magnetization and concentration.
Thus, it allows us to model 
\BLUE{}the influence of temperature 
\color{black}on magnetic and chemical behavior, 
such as the ferro-to-para-magnetic 
phase transformation, as in \cite{PoRoTo10TCTF,RoubT2013ARMA}, or the or metal-hydride phase transformation 
like in \cite{Anan11TMCT,RouTom14THSM} and combination
of both as in the references we cite in the introduction. Unfortunately, this restriction excludes 
other thermally-sensitive phenomena such as the martensite/austenite phase transformation. Yet, it might be removed
by adding more ingredients to our model. For example, by introducing an auxiliary ``phase indicator'', as explained for instance in \cite{RoubT2010ZAMM,RoubT2013ARMA}, or by introducing a viscous contribution to the stress, which can be made physical using the approach in \cite{LewicM2013local} or in \cite{MielkeRoubicek2016viscolargestrains}.


\BLUE{}Compared to the above presented static model, an essential simplification consists in neglecting \color{black}the influence of the demagnetizing \BLUE{}energy\color{black}. 
This is motivated purely mathematically because the injectivity (at least 
almost everywhere) of the deformation $\bm{\chi}$ is not granted in combination 
with inertia which is, however, needed to control time derivative of 
$\nabla\bm{\chi}$ under absence of viscosity (which would otherwise bring 
\BLUE{}other \color{black} serious difficulties). 
This injectivity is needed in the 
\BLUE{}magnetic potential,
\color{black}which inevitably
involves \eqref{maxwell+} where  $\bm{\chi}^{-1}$ occurs, otherwise we can benefit
from our purely Lagrangian formulation of the problem. Ignoring of the 
demagnetizing
\BLUE{}energy
\color{black}is to some extent eligible in situations when the magnet 
is long like in Figure~\ref{fig-elastic-magnet} (or a toroidal shape)
so that the hysteretic loops are rather rectangular. 
On the other hand, ignoring of the possible selfcontact (often 
accepted in engineering simulations) is to some extent 
eligible in geometrically ``bulky'' situations or under 
particular loading.

It is worth noticing that the mechanical actions of the magnetic field manifest themselves not only through a body force, but also through a stress. This can be easily seen by computing the variation of the Zeeman energy \eqref{eq:1}:
\[
{\rm D}\mathcal Z(\bm\chi,\bm m)[\widetilde{\bm\chi},{\widetilde{\bm m}}] \color{black}=\int_{\varOmega}((\bm{\mathsf h}_{\rm e}{\circ}\bm\chi)\otimes\bm m)\cdot \nabla\widetilde{\bm\chi}+((\gradop\bm{\mathsf h}_{\rm e}){\circ}\bm\chi)^\top\nabla\bm\chi\,\bm m\cdot\widetilde{\bm\chi}+((\nabla\bm{\chi})^\top \bm{\mathsf h}_{\rm e}{\circ}\bm{\chi})\cdot{\widetilde{\bm m}}.
\]
Guided by this result, we write the balance of linear momentum as:
\begin{equation}\label{eq:39}
  \varrho\DDT{\bm \chi}-\operatorname{div}({\bm S}-\operatorname{div}\mathbb S)=\bm f_{\textsc{mag}}+\operatorname{div}{\bm S}_{\textsc{mag}}+\bm f, 
\end{equation}
where $\bm S$ and $\mathbb S$ are respectively, the standard stress and the hyperstress, moreover $\bm S_{\textsc{mag}}=(\bm{\mathsf h}_{\rm e}{\circ}\bm\chi)\otimes\bm m$ is the magnetic stress, 
\BLUE$((\gradop\bm{\mathsf h}_{\rm e}){\circ}\bm\chi)^\top\nabla\bm\chi\,\bm m$
\color{black}is the magnetic force, and $\bm f$ is \BLUE{}the mechanical \color{black}body force. The accompanying boundary conditions are:
\begin{equation}
  (\bm S-\operatorname{div}_{\textsc S}\mathbb S)\bm n=\bm g,\qquad \mathbb S:\bm n\otimes\bm n=\bm 0.
\end{equation}
Proceeding in a similar fashion, we write the balance of magnetic forces as
\begin{align}\label{eq:7}
  -\operatorname{div}\bm L+\bm l=\bm l_{\textsc{mag}},
\end{align}
where $\bm L$ is the magnetic stress, $\bm l$ is the magnetic internal force, and $\bm l_{\textsc{mag}}=(\nabla\bm{\chi})^\top \bm{\mathsf h}_{\rm e}{\circ}\bm{\chi}$ is the magnetic external force.
 Moreover, we suppose that the evolution of concentration is influenced, besides, diffusion, by a system of microforces obeying the balance equation:
\begin{equation}
  -\operatorname{div}\bm\varsigma+\varsigma=0,
\end{equation}
where $\bm\varsigma$ is a vectorial microstress and $\varsigma$ is a microforce. The internal power expended by the aforementioned force systems is
\begin{equation}
  \pi_{\textsc{int}}=\bm S:\nabla\DT{\bm\chi}+\mathbb S\Vdots\nabla^2\DT{\bm\chi}+\bm C:\nabla\DT{\bm m}+\bm c\cdot\DT{\bm m}+\bm\varsigma\cdot\nabla\DT\zeta+\varsigma\DT\zeta.
\end{equation}
The balance laws expressing conservation of mass and energy are
\begin{align}\label{eq:9}
  &\DT\zeta+\operatorname{div}\mb j=0
\BLUE\quad\text{ and }\quad\color{black} 
\DT e_{\textsc{tot}}+\operatorname{div}(\bm q+\mu\bm j)=\pi_{\textsc{int}},
\end{align}
where $\mb j$ is the flux of diffusant, $\bm q$ is the heat flux, 
\BLUE $\mu$ is the chemical potential\color{black}, 
and the total energy density is
\begin{equation}\label{eq:37}
   \BLUE{}e_{\textsc{tot}}=\color{black}
e+\frac {\kappa_1}2\nabla\bm m:\nabla{\bm m}+\frac{\kappa_2}2 \nabla\zeta\cdot\nabla\zeta+\frac 12 \mathfrak H(\nabla^2\bm\chi)\Vdots \nabla^2{\bm\chi}\,.
\end{equation}

\BLUE
Let us point out that, in contrast to Sect.~\ref{sec-static} which exploited 
merely concentration, the chemical potential is here a primitive state 
variable for which there is later an equation \eqref{system.c}.
\color{black}

Once balance equations have been established, we \BLUE are now to 
\color{black} provide  \BLUE them by \color{black} constitutive 
\BLUE relations. \color{black} 
These are selected \BLUE to be consistent \color{black} 
with the entropy inequality, which in the bulk reads:
\begin{equation}
  \DT s+\operatorname{div}(\theta^{-1}\mb q)\ge 0.
\end{equation}
Combining the entropy inequality with the balance of energy and with the balance of mass we \BLUE{}arrive at the following form of the dissipation inequality:\color{black}
\begin{align}\label{eq:2}
\DT \psi+s\DT\theta+\kappa_1\nabla\bm m:\nabla\DT{\bm m}&
+\kappa_2 \nabla\zeta\cdot\nabla\DT\zeta+\mathfrak H(\nabla^2\bm\chi)\Vdots \nabla^2\DT{\bm\chi}+\theta^{-1}\bm q\cdot\nabla\theta\nonumber\\
&
\le\bm S:\nabla\DT{\bm\chi}+\mathbb S\Vdots\nabla^2\DT{\bm\chi}+\bm C:\nabla\DT{\bm m}+\bm c\cdot\DT{\bm m}+\bm\varsigma\cdot\nabla\DT\zeta+(\mu+\varsigma)\DT\zeta-\bm j\cdot\nabla\mu.
\end{align}
\BLUE{}The last inequality serves as a selection criterion for thermodynamically consistent constitutive equations. For the sake of brevity, and to avoid the introduction of excessive notation, we shall limit ourselves to a limited class of constitutive equations. Precisely, we restrict attention to the following:\color{black}
\begin{subequations}\label{eq:6}
\begin{align}
  &&&&&\bm S=\partial_{\bm F}\psi_{\textsc{me}}(\bm F,\bm m,\zeta,\theta),\qquad &&\mathbb S=\mathfrak H(\nabla^2\bm\chi),&&&&&&\label{eq:6a}\\
  &&&&&\bm c=\partial_{\bm m}\psi(\bm F,\bm m,\zeta\theta)+\tau_1\DT{\bm m},\qquad &&\bm C=\kappa_2\nabla\bm m,&&\label{eq:6b}\\
  &&&&&\varsigma=-\mu+\partial_\zeta\psi(\bm F,\bm m,\zeta,\theta)+\tau_2\DT\zeta,\qquad&&\bm\varsigma=\kappa_2\nabla\zeta,&&\label{eq:6c}\\
  &&&&&\bm j=-\bm M(\bm F,\bm m,\zeta,\theta)\nabla\mu
&&\hspace*{-6em}\BLUE\text{(=\,Fick's law)},\color{black}\\
  &&&&&\bm q=-\bm K(\bm F,\bm m,\zeta,\theta)\nabla\theta
&&\hspace*{-6em}\BLUE\text{(=\,Fourier's law)}\color{black}.\label{eq:6e}
\end{align}
\end{subequations}\BLUE
Substitution of \eqref{eq:6} into \eqref{eq:2} yields the following inequality
\begin{equation}\label{eq:8}
  \tau_1|\DT{\bm m}|^2+\tau_2|\DT\zeta|^2+\bm M(\bm F,\bm m,\zeta,\theta)\nabla\mu\cdot\nabla\mu+\bm K(\bm F,\bm m,\zeta,\theta)\nabla\theta\cdot\nabla\theta\ge 0.
\end{equation}
Now, consider an evolution process in which the independent variables $\bm F,\bm m,\zeta,\theta$, and $\nabla\theta$ attain an arbitrary value, and $\DT{\bm m}$, $\DT\zeta$, and $\nabla\theta$ vanish at a given point at a given time. Then, for that process, the inequality \eqref{eq:8} reduces to $\bm K(\bm F,\bm m,\zeta,\theta)\nabla\theta\cdot\nabla\theta\ge 0$, and the arbitrariness of the independent variables on the left-hand side of the inequality entails that the 
tensor $\bm K(\bm F,\bm m,\zeta,\theta)$ must be 
\BLUE positive semidefinite \color{black}
for every choice of the quadruplet $(\bm F,\bm m,\zeta,\theta)$. A similar argument shows that $\bm M$ must be 
\BLUE positive semidefinite, \color{black} as well. Similarly, we can 
argue that $\tau_1$ and $\tau_2$ must be non negative.

The aforementioned requirements on $\bm M$ and $\bm K$ are sufficient to 
guarantee thermodynamical compatibility of the model. Yet, they are 
\BLUE too generic
\color{black} 
to make the model amenable to 
mathematical analysis. In particular, the dependence on $\bm F$ can in 
principle set problems when passing to the limit in a proof of existence of 
weak solutions. On the other hand, as we shall show in the next section, it 
is still possible to handle the dependence on the deformation gradient for 
a quite encompassing class of conductivity and mobility tensors having the 
following form:\color{black}
 \begin{subequations}\label{M-K-push-back}
\begin{align}\nonumber
\bm M(\bm F,\bm m,\zeta,\theta)&=(\det\bm F)\bm F^{-\top}\bm{\mathsf M}(\bm m,\zeta,\theta)\bm F^{-1}
=(\Cof\bm F)^\top\bm{\mathsf M}(\bm m,\zeta,\theta)\bm F^{-1}
\\&=\frac{(\Cof\bm F)^\top\bm{\mathsf M}(\bm m,\zeta,\theta)\Cof\bm F}{\det\bm F}
\qquad\text{ and similarly also}
\\\bm K(\bm F,\bm m,\zeta,\theta)
&=\frac{(\Cof\bm F)^\top\bm{\mathsf K}(\bm m,\zeta,\theta)\Cof\bm F}{\det\bm F}.
\end{align}\end{subequations}
\BLUE{}where $\bm{\mathsf M}=\bm{\mathsf M}(\bm m,\zeta,\theta)$ and 
$\bm{\mathsf K}=\bm{\mathsf K}(\bm m,\zeta,\theta)$ are interpreted as the 
phenomenological spatial mobility and spatial conductivity tensors. One can 
check that the aforementioned positivity requirements on $\bm M$ and $\bm K$ 
translate into the same requirements for ${\bm{\mathsf M}}$ and 
${\bm{\mathsf K}}$. We remark that \eqref{M-K-push-back}, which \color{black} 
is the usual transformation of 2nd-order covariant tensors, is reasonable 
rather for the isotropic case (
cf.\ e.g.\ \cite[Formula (67)]{DuSoFi10TSMF} 
in the case of mass transport).

We next consider the issue of prescribing boundary fluxes. We consider the boundary $\varGamma$ that separates the body from its environment. We denote by  $\bm\theta_{\rm e}$ and $\mu_{\rm e}$ the temperature and the chemical potential of the environment, respectively. We denote by $\jump{\bm q}=\bm q_{\rm e}-\bm q_{\rm i}$ the jump of the jump of the heat flux at the boundary, 
\BLUE i.e.\ \color{black} 
the difference between 
the heat flux $\bm q_{\rm e}$ outside the body and the trace \BLUE
on $\varGamma$ \color{black} 
of the heat 
flux $\bm q_{\rm i}$ inside the body $\Omega$. In a similar fashion, we define the jump $\jump{\bm j}=\bm j_{\rm e}-\bm j_{\rm i}$ of the mass flux at the boundary.

First, if no diffusant is trapped on the surface, mass conservations dictates that $\jump{\bm j}=\bm 0$. Second, if no energy can be stored at the boundary of the body, energy balance dictates that $\jump{\mb q}\cdot\bm n=\jump{\mu}\mb j\cdot\bm n$, where $\jump{\mu}=\mu_{\rm e}-\mu_{\rm i}$, namely, the difference between $\mu_{\rm e}$ the trace on $\varGamma$ of the chemical potential field of the environment, and $\mu_{\rm i}$, the trace of the chemical potential field within the body $\varOmega$. Finally, if there is no entropy production localized at the boundary, the entropy inequality takes the form:  $\jump{\theta^{-1}\bm q}\cdot\bm n=0$. 

Combining these conditions we get the following thermodynamical compatibility condition relating the outwards heat flux $\bm q\cdot\bm n$ and the flux of diffusant $\bm j\cdot\bm n$:
\begin{equation}\label{eq:4}
  \left(\frac 1 {\theta_{\rm e}}-\frac 1 \theta\right)\bm q\cdot\bm n+\frac{\mu_{\rm e}-\mu}{\theta_{\rm e}}\bm j\cdot\bm n\ge 0,
\end{equation}
where $\theta_{\rm e}$ and $\mu_{\rm e}$ are the temperature and the chemical potential of the environment. We select the following constitutive prescription 
for the boundary fluxes: 
\begin{equation}\label{eq:3}
 \bm q\cdot\bm n={K}(\theta-\theta_{\rm e})
\qquad\text{and}\qquad \bm j\cdot\bm n={M}(\mu-\mu_{\rm e}).
\end{equation}
\BLUE{}On substituting the constitutive equations \eqref{eq:6} into the 
balance equations \eqref{eq:36}, \eqref{eq:7}, and \eqref{eq:9}
we obtain the \color{black}following system of semilinear hyperbolic/parabolic 
\BLUE integro-differential \color{black} equations on $Q$:
\begin{subequations}\label{system}
  \begin{align}
&\varrho
\DDT{\bm \chi}=\operatorname{div}(\bm S-
\operatorname{div}\mathfrak H(\nabla^2\bm \chi))+((\gradop\bm{\mathsf h}_{\rm e}){\circ}\bm\chi)^\top\nabla\bm\chi\,\bm m+\bm f\nonumber
\\[-.1em]&\qquad\qquad\qquad\qquad\qquad\qquad\text{ where}\quad 
\bm S=\partial_{\bm F}\psi_{\textsc{me}}(\nabla\bm \chi,\bm m,\zeta)-(\bm{\mathsf h}_{\rm e}{\circ}\bm\chi)\otimes\bm m,
\label{system.a}
\\
&\tau_1\DT{\bm m}=\kappa_1{\Delta{\bm m}}-\partial_{\bm m}\psi(\nabla\bm \chi,\bm m,\zeta,\theta)+(\nabla\bm{\chi})^\top \bm{\mathsf h}_{\rm e}{\circ}\bm{\chi},\label{system.b}\\
&\DT\zeta-\operatorname{div}(\bm M(\nabla\bm\chi,\bm m,\zeta,\theta)\nabla\mu)=0\ \ \text{ with }\ \mu=\partial_\zeta\psi(\nabla\bm \chi,\bm m,\zeta,\theta)
+\tau_2\DT\zeta-\kappa_2\Delta\zeta,
\label{system.c}
\\
&\nonumber\DT w-\operatorname{div}(\bm K(\nabla\bm\chi,\bm m,\zeta,\theta)\nabla\theta)=\tau_1|\DT{\bm m}|^2+\tau_2\DT\zeta^2\nonumber
\\[-.3em]&\qquad\ \ +\bm M(\nabla\bm\chi,\bm m,\zeta,\theta)\nabla\mu{\cdot}\nabla\mu
+\partial_{\bm m}\psi_{\textsc{th}}(\bm m,\zeta,\theta){\cdot}\DT{\bm m}
+\partial_{\zeta}\psi_{\textsc{th}}(\bm m,\zeta,\theta)\DT\zeta\label{system.d}
\\[-.2em]&\text{with}\quad w=e_{\textsc{th}}(\bm m,\zeta,\theta),\label{system.e}
  \end{align}
\end{subequations}
where \BLUE $e_{\textsc{th}}$, the thermal part of the internal energy, 
has been defined in \eqref{eq:10}. \color{black}
This system is accompanied with some
boundary conditions. For convenience of exposition, we here limit ourselves to the following natural boundary conditions on ${\varSigma}$
\BLUE which accompany sucessively the equations (\ref{system}a-d)\color{black}:
\begin{subequations}\label{BC}
  \begin{align}\label{BC1}
&\bm S\bm n
-\divS\mathfrak{H}(\nabla^2\bm\chi)=\bm g\ \ \text{ and }\ \ \mathfrak{H}(\nabla^2\bm\chi){:}(\bm{n}\otimes\bm{n})=0,
\\&\kappa_1\nabla\bm m\cdot\bm n=0,
\\&\bm M(\nabla\bm\chi,\bm m,\zeta,\theta)\nabla\mu\cdot\bm n+{M}\mu={M}\mu_{\rm e}\ \ \text{ and }\ \ 
\kappa_2\nabla\zeta\cdot\bm n=0,
\\&\label{BC4}
\bm K(\nabla\bm\chi,\bm m,\zeta,\theta)\nabla\theta\cdot\bm n+
\BLUE{}K\theta=K\theta_{\rm e}\,,\color{black}
\end{align}
\end{subequations}
where $\bm g$ is the traction force, $\mu_{\rm e}$ is a chemical potential prescribed on the boundary and ${M}$ is a phenomenological coefficient 
\BLUE for \color{black} the flux of the diffusant through 
the boundary, and $h$ is the heat flux through the boundary. Moreover, 
``$\divS$'' in \eqref{BC1} denotes the surface divergence defined as
 $\divS={\rm tr}(\nablaS)$ with ${\rm tr}(\cdot)$ being the trace of a
$(d{-}1){\times}(d{-}1)$-matrix and 
$\nablaS v=\nabla v-\frac{\partial v}{\partial\bm{n}}\bm{n}$ 
denoting the surface gradient of $v$.
We will consider the initial-value problem for the system 
\eqref{system}--\eqref{BC}, prescribing the 
 initial conditions on the reference domain ${\varOmega}$:
\begin{subequations}\label{IC}\begin{align}\label{IC-1}
& \bm\chi|_{t=0}^{}=\bm\chi_0,\quad
\bm v|_{t=0}^{}=\bm v_0,\quad
\bm m|_{t=0}^{}=\bm m_0,\quad
\zeta|_{t=0}^{}=\zeta_0,\quad
\\&\label{IC-2}
\theta|_{t=0}^{}=\theta_0.
\end{align}\end{subequations}


As far as the magnetic part concerns, the model may be classified as rather 
macroscopical because we have neglected gyromagnetic effects in 
\eqref{system.b}. Mathematically, there would not be difficulties to handle a  
gyromagnetic term proportional to $\bm{\mathsf m}\times(\bm F\bm m)^{\bigdot}
\BLUE=\bm{\mathsf m}\times(\DT{\bm F}\bm m+\bm F\DT{\bm m})\color{black}$ 
which would have a good physical sense under displacements with small rotations but in general gyromagnetic effects interact with large deformations in a very nonlinear way. In fact, one would require a control on $\DT{\bm F}$, which is in fact not available due to the lack of mechanical viscosity. As already pointed out, mechanical viscosity as in \cite{LewicM2013local} or \cite{MielkeRoubicek2016viscolargestrains} would give the control of $\dot{\bm F}$ which would allow us to handle the gyromagnetic term.


Testing \eqref{system.a}--\eqref{system.d} with the corresponding boundary 
conditions respectively by $\DT{\bm\chi}$, $\DT{\bm m}$, $(\mu,\DT\zeta)$, 
and by a number $\alpha\in[0,1]$ reveals the \emph{energetics} of the model.
Thus we obtain the following identity:
\begin{align}
    &\int_{\varOmega} \Big(\alpha w(t)+\frac \varrho 2 |\DT{\bm\chi}(t)|^2
+\psi_{\textsc{me}}(\nabla\bm\chi(t),\bm m(t),\zeta(t))
+\frac{\kappa_1}2|\nabla\bm m(t)|^2+\frac{\kappa_2}2|\nabla\zeta(t)|^2
\Big)\dx
+\mathscr{H}(\nabla^2\bm\chi(t))\nonumber
\\\nonumber
 &\qquad\qquad\qquad\qquad\qquad
+(1{-}\alpha)\int_0^t\int_{\varOmega}(\tau_1|\DT{\bm m}|^2+\tau_2\DT\zeta^2
+\bm M(\nabla\bm\chi,\bm m,\zeta,\theta)\nabla\mu{\cdot}\nabla\mu
\\
 &\qquad\qquad\qquad\qquad\qquad\nonumber
+\partial_{\bm m}\psi_{\textsc{th}}(\bm m,\zeta,\theta)\cdot\DT{\bm m}
+\partial_{\zeta}\psi_{\textsc{th}}(\bm m,\zeta,\theta)\DT\zeta)\dx \dt\nonumber\\
&\nonumber =\int_{\varOmega} \Big(\alpha w_0+\frac\varrho2|{\bm v}_0|^2
+\psi_{\textsc{me}}(\nabla\bm\chi_0,\bm m_0,\zeta_0)
+\frac{\kappa_1}2|\nabla\bm m_0|^2
+\frac{\kappa_2}2|\nabla\zeta_0|^2
\Big)\dx+\mathscr{H}(\nabla^2\bm\chi_0)\\\nonumber
&\qquad\qquad\qquad\qquad\qquad
+\int_0^t\int_{\varOmega} \Big(\big(((\gradop\bm{\mathsf h}_{\rm e}){\circ}\bm\chi)^\top\nabla\bm\chi\,\bm m+\bm f\big)\cdot\DT{\bm\chi}
+(\bm{\mathsf h}_{\rm e}{\circ}\bm\chi)\otimes\bm m:\nabla\DT{\bm\chi}
\\
&\qquad\qquad\qquad\qquad\qquad+(\nabla\bm{\chi})^\top\bm{\mathsf h}_{\rm e}{\circ}\bm{\chi}\cdot\DT{\bm m}\Big)\dx\dt
+\int_0^t\int_{\varGamma} (\alpha K(\theta_{\rm e}-\theta)+\bm g\cdot\DT{\bm \chi}+
 M(\mu_{\rm e}-\mu)^2\color{black}\d S\dt.
\label{formal-energy}
\end{align}
For $\alpha=0$, the above identity is the \emph{mechano-magneto-chemical balance} while, for $\alpha=1$, this identity is the \emph{total energy balance}. 
We reaffirm again that these are only formal estimates because 
$\nabla\DT{\bm\chi}$ is not (and, within our model, will not be) well defined. 
In particular, the bulk term 
$(\bm{\mathsf h}_{\rm e}{\circ}\bm\chi)\otimes\bm m:\nabla\DT{\bm\chi}$ is not 
well defined (unless some additional regularity of the particular solutions
were proved) as well as the boundary term $\bm g\cdot\DT{\bm \chi}$.
Later, an integration by parts in time will be in order to cope with these 
terms, cf.\ \eq{est-of-y-m-z}. 

\BLUE{}
\begin{remark}
{\rm 
 \BLUE{}While the derivation of (\ref{system}.a-d) using the 
constitutive equations \eqref{eq:6} is immediate, the derivation of 
\eqref{system.d} from the balance of energy  \eqref{eq:9}
requires some 
intermediate steps. First, we substitute the assumption \eqref{eq:36} into 
the definition of $e_{\textsc{tot}}$ in \eqref{eq:37} and we use 
\BLUE the first equation in \color{black} \eqref{eq:9}
to write $\operatorname{div}(\mu\bm j)=\nabla \mu\cdot\bm j-\mu\DT\zeta$. Then the balance of energy can be rewritten as follows:
\begin{equation}\label{eq:38}
  \DT e_{\textsc{th}}+\operatorname{div}\bm q=-\bm j\cdot\nabla\zeta+\pi_{\textsc{int}}-\DT\psi_{\textsc{me}}+\mu\DT\zeta-{\kappa_1}\nabla{\bm m}:\nabla\DT{\bm m}-{\kappa_2} \nabla\zeta\cdot\nabla\DT\zeta-\mathfrak H(\nabla^2{\bm\chi})\Vdots \nabla^2\DT{\bm\chi}.
\end{equation}
The remaining steps from \eqref{eq:38} to \eqref{system.d} are the following. First,  we notice that, as a consequence of the constitutive equation \eqref{eq:6e}, the left-hand sides of \eqref{eq:38} and \eqref{system.d} coincide; then, we notice that, as a consequence of the remaining the constitutive equations (\ref{eq:6}.a-c), the internal power has the representation
\[
\pi_{\textsc{int}}=\DT\psi_{\textsc{me}}+\partial_{\bm m}\psi_{\textsc{th}}(\bm m,\zeta,\theta)\cdot\DT{\bm m}+(\partial_{\zeta}\psi_{\textsc{th}}(\bm m,\zeta,\theta)-\mu)\DT\zeta+{\kappa_1}\nabla{\bm m}:\nabla\DT{\bm m}+{\kappa_2} \nabla\zeta\cdot\nabla\DT\zeta+\mathfrak H(\nabla^2{\bm\chi})\Vdots \nabla^2\DT{\bm\chi}+\tau_1|\DT{\bm m}|^2+\tau_2|\DT\zeta|^2
\]
and, as a result, the right-hand sides of \eqref{eq:38} and \eqref{system.d} coincide.
}
\end{remark}

\begin{remark}
{\rm Equation \eqref{system.d}, being a consequence of the energy-balance equation, should be regarded as a generalization of the heat-conduction equation. It is possible to show that this equation admits the following form:
\begin{align}\nonumber
&c_{\rm v}(\bm m,\zeta,\theta)\DT\theta-\operatorname{div}(\bm K(\nabla\bm\chi,\bm m,\zeta,\theta)\nabla\theta)=\tau_1|\DT{\bm m}|^2+\tau_2\DT\zeta^2
+\bm M(\nabla\bm\chi,\bm m,\zeta,\theta)\nabla\mu{\cdot}\nabla\mu
\\[-.2em]&\qquad\qquad\qquad\qquad\qquad\qquad\qquad\qquad\ \ \ \ 
+\theta\partial^2_{\theta\bm m}\psi_{\textsc{th}}(\bm m,\zeta,\theta){\cdot}\DT{\bm m}
+\theta\partial^2_{\theta\zeta}\psi_{\textsc{th}}(\bm m,\zeta,\theta)\DT\zeta,
\label{system.d+}
\\[-.0em]&\nonumber\text{with the heat capacity}\ \ 
c_{\rm v}(\bm m,\zeta,\theta)=-\theta\partial^2_{\theta\theta}\psi_{\textsc{th}}(\bm m,\zeta,\theta).
\end{align}
However, for relying on the enthalpy the form \eq{system.d} 
is more convenient for the analytical treatment of the heat transfer. 
}
\end{remark}\color{black}

\BLUE{}\begin{remark}
{\rm 
  Within the class of models filtered by the constitutive assumptions \eqref{eq:6}, the ultimate thermodynamical requirements are those entailed by the inequality \eqref{eq:8}. Those structural requirements are still to general for us to carry out mathematical analysis, and some specialization is required.  The fact that we are able to handle analytially the form  (\ref{M-K-push-back}) is one of the important results of our paper.
}
\end{remark}\color{black}

\begin{remark}\label{rem:f}
{\rm
One may want to consider the referential body force $\bm f(x,t)$ as the pull back $(\det\bm F)\bm{\mathsf f}\circ\bm\chi$ of a time-dependent spatial force density $\bm{\mathsf f}(z,t)$. For example, if $\bm{\mathsf a}(z,t)$ is an acceleration field such as gravity, the referential force would be $\bm f=\varrho\,\bm{\mathsf a}\circ\bm\chi$. Consistency with such a choice would require, however, a similar prescription for the surface-force field $\bm g$ as the pull back $|(\operatorname{det}\bm F)^{-1}\bm F^\top\bm n|\bm{\mathsf f}$ of a surface force density $\bm{\mathsf f}$ on $\partial\bm\chi(\varOmega)$, a prescription that unfortunately does not fit within our framework. The motivation will be apparent in Remark \ref{rem:g} in the next section.
}
\end{remark}

\begin{remark}
{\rm 
In \cite[\S 1.2.2]{James2002CMaT} the evolution equation governing the deformation are derived by carrying over to dynamics the equations of mechanical equilibrium in the static case. The latter equations are obtained by writing the stationarity of the Gibbs energy $G(\bm\chi,\bm m,\zeta,\theta)$ defined in \eqref{eq:5} with respect to the deformation $\bm\chi$. When carrying out this procedure, the external fields (both the loadings and the applied fields) are considered as fixed. 
Accordingly, one finds
\begin{align}
  {\rm D}_{\bm\chi} G(\bm\chi,\bm m,\zeta,\theta)[\widetilde{\bm\chi}]&=\int_\Omega \Big((\partial_{\bm F}\psi_{\textsc{me}}(\nabla\bm \chi,\bm m,\zeta)-(\bm{\mathsf h}_{\rm e}{\circ}\bm\chi)\otimes\bm m):\nabla\delta\bm \chi+\mathfrak H(\nabla^2\bm\chi)\Vdots\nabla^2\widetilde{\bm\chi}\Big)\dx\nonumber\\
&\phantom{=}-\int_\Omega\big(\bm f+((\gradop\bm{\mathsf h}_{\rm e}){\circ}\bm\chi)^\top\nabla\bm\chi\,\bm m\big)\cdot\widetilde{\bm \chi} \dx-\int_\varGamma\bm g\cdot\widetilde{\bm\chi}\,\d S
\end{align}
\BLUE{}where ${\rm D}_{\bm\chi} G(\cdot)[\widetilde{\bm\chi}]$ denotes 
the directional derivarive of $G$ in the direction $\widetilde{\bm\chi}$. 
\color{black} 
Imposing stationarity with respect to all tests $\widetilde{\bm\chi}$ yields again equation \eqref{system.a} and the boundary conditions \eqref{BC1}.}
\end{remark}



\begin{remark}{\rm The viscous-like dissipative term $\tau_2\DT\zeta$ in 
\eqref{system.c} is needed to cope with the direct coupling of $\zeta$ with 
$\theta$
\BLUE through the term
$\psi_{\textsc{th}}(\bm m,\zeta,\theta)$ in \eqref{ansatz2}. More specifically, 
note that the heat equation \eqref{system.d} contains the adiabatic term
$\partial_{\zeta}\psi_{\textsc{th}}(\bm m,\zeta,\theta)\DT\zeta$ which ultimately
needs to have $\DT\zeta$ under control\color{black}. In doing 
this, we \BLUE involved the mentioned term $\tau_2\DT\zeta$, 
\color{black}
following the original Gurtin's ideas \cite{Gurt96GGLC},
cf.\ also e.g.\ \cite{BonetCT2015non,EllGar96CHED,miranville1999model,Ross05TCGV,Tomas2015Some}. 
}\end{remark}

\BLUE{}
\section{Analysis of the evolution system \eq{system}: 
         existence of weak solutions}
\color{black}\label{sec-Galerkin}

We consider the time interval $I=[0,T]$ with $T$ a fixed  time horizon 
considered for the evolution, and we denote by $L^p(I;X)$ the standard 
Bochner space of Bochner-measurable mappings $I\to X$ with $X$ a 
Banach space. 
Also, $W^{k,p}(I;X)$ denotes the Banach space of mappings from $L^p(I;X)$ 
whose $k$-th distributional derivative in time is also in $L^p(I;X)$.


\begin{definition}[Weak solution]\label{def}
We call the \BLUE{}five\color{black}-tuple 
$(\bm\chi,\bm m,\zeta,\mu,\theta)$ 
with $\bm\chi\in L^2(I;H^2(\Omega;\R^d))$, $\bm m\in L^2(I;H^1(\Omega;\R^d))$,
$\zeta,\mu\in L^2(I;H^1(\Omega))$, and $\theta\in L^1(I;W^{1,1}(\Omega))$
\BLUE such that $e_{\textsc{th}}(\bm m,\zeta,\theta)\color{black}
\in L^1(Q)$ \color{black} a weak solution to the initial-boundary-value
problem \eq{system}--\eq{BC}--\eq{IC} if $\mathscr{H}(\nabla^2\bm\chi)\in 
L^\infty(I)$ and 
\begin{subequations}\begin{align}\nonumber
&\int_Q\varrho{\bm\chi}{\cdot}\DDT{\bm v}+\bm S{:}\nabla\bm v
+\mathfrak{H}(\nabla^2\bm\chi)\Vdots\nabla^2\bm v
-((\gradop\bm{\mathsf h}_{\rm e}){\circ}\bm\chi)^\top\nabla\bm\chi\,\bm m
{\cdot}{\bm v}\,\d x\d t
\\[-.4em]&\qquad\qquad\qquad\qquad\qquad=\int_Q\bm f{\cdot}\bm v\,\d x\d t
+\int_{\varSigma}\bm g{\cdot}\bm v\,\d S\d t
+\int_{\varOmega}\varrho\bm v_0{\cdot}\bm v(0)-\varrho\bm\chi_0{\cdot}\DT{\bm v}(0)\,\d x
\end{align}
holds for $\bm v$ smooth with $\bm v(T)=\DT{\bm v}(T)=0$ and 
with $\bm S=\partial_{\bm F}\psi_{\textsc{me}}(\nabla\bm \chi,\bm m,\zeta)-(\bm{\mathsf h}_{\rm e}{\circ}\bm\chi)\otimes\bm m$,
\begin{align}
&\int_Q\kappa_1\nabla\bm m{:}\nabla\bm v
+\partial_{\bm m}\psi(\nabla\bm \chi,\bm m,\zeta,\theta)
-(\nabla\bm{\chi})^\top \bm{\mathsf h}_{\rm e}{\circ}\bm{\chi}{\cdot}\bm v
-\tau_1\bm m{\cdot}\DT{\bm v}\,\d x\d t
=\int_{\varOmega} \bm m_0{\cdot}\bm v(0)\,\d x
\end{align}
holds for $\bm v$ smooth with $\bm v(T)=0$,
\begin{align}
&\int_Q\bm M(\nabla\bm\chi,\bm m,\zeta,\theta)\nabla\mu{\cdot}\nabla v
-\zeta\DT v\,\d x\d t+\int_{\varSigma}M \mu v\,\d S\d t
=\int_{\varSigma}M\mu_{\rm e} v\,\d S\d t+\int_{\varOmega} \zeta_0 v(0)\,\d x
\end{align}
holds for $v$ smooth with $v(T)=0$,
\begin{align}
&\int_Q\kappa_2\nabla\zeta{\cdot}\nabla v
+\big(\partial_\zeta\psi(\nabla\bm \chi,\bm m,\zeta,\theta)
-\mu\big)v-\tau_2\zeta\DT v\,\d x\d t=\int_\Omega\zeta_0 v(0)\,d x
\end{align}
holds for $v$ smooth with $v(T)=0$, 
\begin{align}\nonumber
&\int_Q\bm K(\nabla\bm\chi,\bm m,\zeta,\theta)\nabla\theta{\cdot}\nabla v-w\DT v
\,\d x\d t
=\int_Q\Big(\tau_1|\DT{\bm m}|^2+\tau_2\DT\zeta^2
+\bm M(\nabla\bm\chi,\bm m,\zeta,\theta)\nabla\mu{\cdot}\nabla\mu
\\[-.3em]&\qquad\ \ 
+\partial_{\bm m}\psi_{\textsc{th}}(\bm m,\zeta,\theta){\cdot}\DT{\bm m}
+\partial_{\zeta}\psi_{\textsc{th}}(\bm m,\zeta,\theta)\DT\zeta\Big)v\,\d x\d t
+\int_{\varSigma} 
K(\theta_{\rm e}-\theta)
 v\,\d S\d t+\int_{\varOmega} w_0 v(0)\,\d x
\end{align}\end{subequations}
holds for $v$ smooth with $v(T)=0$, and
with $w=e_{\textsc{th}}(\bm m,\zeta,\theta)$ on $Q$ and 
$w_0=e_{\textsc{th}}(\bm m_0,\zeta_0,\theta_0)$ on ${\varOmega}$.
\end{definition}



Let us first summarize the assumptions we will impose, apart 
from \eqref{kernel} with $\gamma>d/2-1$ and \eqref{ass-1}, 
and we will use in what follows both to qualify the integrals used above 
in the definition of the weak solution and for proving existence of such 
solutions, although we do not claim that they \BLUE 
cannot be weakened with only a slightly more involved argumentation in 
the proof. \color{black}.
For some $\epsilon>0$ and some $C<\infty$, we assume: 
\begin{subequations}\label{ass}
\begin{align}\nonumber
&\exists\varphi:\R^{d\times d}{\times}\R^d{\times}\R\to\R\ \text{continuous\BLUE{}ly differentiable and\color{black}}
\\&\qquad \nonumber
\xi:\R{\times}\R^d{\times}\R\to\R\cup\{\infty\}
\ \text{\BLUE{}continuously differentiable on 
$\R^+{\times}\R^d{\times}\R$\color{black}},\ 
,\ \xi(\cdot,\bm m,\zeta)\ \text{convex}:\ \ 
\\&\label{ansatz-psi}
\qquad\psi_{\textsc{me}}(\bm F,\bm m,\zeta)=\varphi(\bm F,\bm m,\zeta)+\xi(\det\bm F,\bm m,\zeta), 
\\&\label{ass-coercivity}
\qquad\epsilon(|\bm F|^{p_1}{+}|\bm m|^{p_2}{+}|\zeta|^{p_3})
\le\varphi(\bm F,\bm m,\zeta)\le 
C(1{+}|\bm F|^{p_4}{+}|\bm m|^{p_2}{+}|\zeta|^{p_3}),\ \ \ 
p_1,p_2>2,\ \ p_3\ge 2,\ \ p_4\ge p_1,
\\&\label{ass-coercivity+}
\qquad\xi(z,\bm m,\zeta)\begin{cases}\ge\epsilon/z^q&\text{if }z>0,\\=\infty&\text{if }z\le0,\end{cases},\ \ 
\ q>\frac{2d}{\BLUE2\gamma{+}2{-}d\color{black}
},
\\&\label{ass-M-K}\bm{\mathsf M},\bm{\mathsf K}:\R^d\times\R^2\to\R_{\rm sym}^{d\times d}\
\text{ continuous, bounded, and uniformly positive definite},
\\&\label{ass-psi/mz}
\BLUE\psi_{\textsc{th}}:\R^d\times\R^2\to\R
\text{ twice continuously differentiable},\ \ \ \ \color{black}
|\partial_{\bm m}\psi_{\textsc{th}}|\le C,
\quad|\partial_{\zeta}\psi_{\textsc{th}}|\le C,
\qquad
\\\label{ass-psith}
&\qquad|\partial_{{\bm m}\theta}^{\BLUE2\color{black}}\psi_{\textsc{th}}|\le C/(1{+}|\theta|),
\qquad|\partial_{\zeta\theta}^{\BLUE2\color{black}}\psi_{\textsc{th}}|\le C/(1{+}|\theta|),
\\\label{ass-cv}
&\BLUE{}c_{\rm v}:\R^{d\times d}{\times}\R^d{\times}\R\to\R\ 
\text{ continuously differentiable},\ \ \color{black}
\epsilon\le c_{\rm v}(\bm m,\zeta,\theta)\le C,
\\\label{ass-cv/mz}
&\qquad|\partial_{\bm m}c_{\rm v}|\le C/(1{+}|\theta|)^{1+\epsilon},
\qquad|\partial_\zeta c_{\rm v}|\le C/(1{+}|\theta|)^{1+\epsilon},
\\&\label{ass-h-e}
\bm f\in L^1(Q;\R^d),
\ \ \bm g\in W^{1,1}(I;W^{1,1}({\varOmega};\R^d)),\ \
\bm{\mathsf h}_{\rm e}\in W^{1,1}(I;W^{1,q}(\R^d;\R^d)),\ \ q>\frac{p_1p_2}{p_1p_2{-}p_1{-}p_2},
\\[-.3em]&
\mu_{\rm e}\in L^2({\varSigma}),\ \ \theta_{\rm e}\in L^1({\varSigma}),\ \ \theta_{\rm e}>0,
\\&\label{IC-qualif}
\bm\chi_0\!\in\! H^{2+\gamma}({\varOmega};\R^d),
\ \ \ \bm v_0\!\in\! L^2({\varOmega};\R^d),
\ \ \ \bm m_0\!\in\! H^1({\varOmega};\R^d),
\ \ \ \zeta_0\!\in\! H^1({\varOmega}),
\ \ \ \theta_0\!\in\! L^1({\varOmega}),\ \ \ \theta_0\ge0\ \text{ on }\ {\varOmega},
\\\label{ass-IC-det}
&\xi(\det\nabla\bm\chi_0,\bm m_0,\zeta_0)\le C.
\end{align}
In addition, we shall need
\begin{align}
  \label{ass-xi}
  &\BLUE\partial_{z\bm m}^2\color{black}\xi=0,
\qquad\BLUE\partial_{z\zeta}^2\color{black}\xi=0,
\qquad
\partial_{\bm m}\psi(\bm F,\bm m,\zeta,\theta)\le C,\qquad \partial_{\zeta}\psi(\bm F,\bm m,\zeta,\theta)\le C.
\end{align}
\end{subequations}

We should note that in the above assumptions, the variable $\theta$ 
\BLUE may range \color{black} also over negative values because the 
nonnegativity of temperature is granted only in the resulted continuous model 
but not in our regularized approximate scheme. \BLUE(Let us
point out that we do not particularly prevent possible negativity
of $\zeta$, which would need another technicalities 
by making $\bm M$ degenerate or by admitting 
a blow-up $\xi\to+\infty$ for $\zeta\to0+$.) \color{black}
%
The assumption \eqref{ass-psi/mz} is cast so that $\theta$ does not 
influence a-priori bounds in mechano-magneto-chemo part.

Our main analytical result, proved 
by rather constructive way when merging Lemma~\ref{lem-1} and 
Propositions~\ref{prop1}--\ref{prop2} below, is:

\begin{theorem}[Existence of weak solutions]\label{theorem}
Let \eqref{kernel} with $\gamma>d/2-1$, \eqref{ass-1},  \eqref{ansatz}, 
and \eqref{ass} hold.
Then there exists a weak solution $(\bm\chi,\bm m,\zeta,\mu,\theta)$
according Definition~\ref{def} and, moreover, 
$\bm\chi\in W^{1,\infty}(I;L^2(\Omega;\R^d))\cap L^{\infty}(I;H^{2+\gamma}(\Omega;\R^d))$ with 
$\min_Q\det(\nabla\bm\chi)>0$,
$\bm m\in L^\infty(I;H^1(\Omega;\R^d)\cap H^1(I;L^2(\Omega;\R^d))$,
$\zeta\in L^\infty(I;H^1(\Omega))\cap H^1(I;L^2(\Omega))$,
$\Delta\bm m\in L^2(Q;\R^d)$ 
and $\Delta\zeta\in L^2(Q)$. \BLUE Moreover, $\theta\ge0$. \color{black}
\end{theorem}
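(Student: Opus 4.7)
The plan is to build the weak solution via a two-level approximation: a regularization by a small parameter $\varepsilon > 0$ nested inside a spatial Galerkin discretization indexed by $k \in \N$, passing to the limit first in $k$ and then in $\varepsilon$. The regularization is needed to tame the singular term $\xi(\det\bm F, \bm m, \zeta)$ near $\det\bm F \to 0^+$ and to control the a priori unbounded dissipative right-hand side of the heat equation \eqref{system.d}. Concretely I would replace $\xi$ by a globally $C^1$ convex approximation $\xi_\varepsilon$ that coincides with $\xi$ for $\det\bm F \ge \varepsilon$ and is smoothly extended to all of $\R$, truncate the quadratic heat sources in \eqref{system.d} by $\min(\cdot, 1/\varepsilon)$, and add a vanishing mechanical viscosity $\varepsilon\,\mathrm{div}(\nabla\DT{\bm\chi})$ to \eqref{system.a} to buy uniform control of $\nabla\DT{\bm\chi}$ at the approximate level.

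At the Galerkin level I would choose nested finite-dimensional subspaces (for $\bm\chi$ from eigenfunctions of an auxiliary symmetric operator compatible with $\mathscr{H}$, and for $\bm m,\zeta,\mu,\theta$ from eigenfunctions of the Neumann Laplacian) whose unions are dense in the natural energy spaces, and solve the resulting ODE system by a Carath\'eodory argument, extended globally via a priori estimates. Those estimates arise from testing \eqref{system.a}--\eqref{system.d} respectively by $\DT{\bm\chi}_{\varepsilon k}$, $\DT{\bm m}_{\varepsilon k}$, $(\mu_{\varepsilon k}, \DT\zeta_{\varepsilon k})$, and constants, which rigorizes \eqref{formal-energy}. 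This yields uniform bounds for $\bm\chi_{\varepsilon k}$ in $W^{1,\infty}(I;L^2) \cap L^\infty(I;H^{2+\gamma})$, for $\bm m_{\varepsilon k}$ and $\zeta_{\varepsilon k}$ in $L^\infty(I;H^1) \cap H^1(I;L^2)$, for $\mu_{\varepsilon k}$ in $L^2(I;H^1)$ via the Cahn--Hilliard structure of \eqref{system.c}, and for the thermal enthalpy $w_{\varepsilon k}$ in $L^\infty(I;L^1)$. A Boccardo--Gallou\"et-type test of the heat equation by $1 - (1+\theta_{\varepsilon k}^+)^{-\alpha}$ then supplies a bound on $\nabla\theta_{\varepsilon k}$ in some $L^r(Q)$ with $r > 1$, sufficient for traces. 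Crucially, Lemma~\ref{lem:1} delivers a uniform lower bound $\det\nabla\bm\chi_{\varepsilon k} \ge \eta > 0$ depending only on the $H^{2+\gamma}$-norm, which is preserved along both limits.

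For the limit $k \to \infty$ with $\varepsilon$ fixed, the compact embedding $H^{2+\gamma}(\varOmega) \subset\subset C^1(\overline{\varOmega})$ (valid since $\gamma > d/2-1$) combined with Aubin--Lions yields strong convergence $\bm\chi_{\varepsilon k} \to \bm\chi_\varepsilon$ in $C(I;C^1(\overline{\varOmega}))$, and analogous strong $L^2(Q)$ convergence of $\bm m_{\varepsilon k}$ and $\zeta_{\varepsilon k}$. These are strong enough to pass to the limit in $\partial_{\bm F}\psi_{\textsc{me}}$, in the Zeeman-type terms $\bm{\mathsf h}_{\rm e}\!\circ\!\bm\chi$ (by Vitali together with \eqref{ass-h-e}), and in the tensors $\bm M,\bm K$ of the form \eqref{M-K-push-back}. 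Non-negativity of $\theta$ is obtained by testing the heat equation by $(\theta_{\varepsilon k})^-$ and Gronwall, and is preserved in the limit. The limit $\varepsilon \to 0$ proceeds analogously: the uniform lower bound on $\det\nabla\bm\chi_\varepsilon$ ensures $\xi_\varepsilon(\det\nabla\bm\chi_\varepsilon,\cdot) = \xi(\det\nabla\bm\chi_\varepsilon,\cdot)$ for all sufficiently small $\varepsilon$, so the regularization drops out, and the mechanical viscosity vanishes by standard arguments.

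The hard part will be passing to the limit in the quadratic dissipation terms on the right-hand side of \eqref{system.d}, namely $\tau_1|\DT{\bm m}|^2 + \tau_2\DT\zeta^2 + \bm M(\nabla\bm\chi,\bm m,\zeta,\theta)\nabla\mu{\cdot}\nabla\mu$, which a priori converge only weakly in $L^1$ and thus cannot be directly identified. The strategy I would employ is to establish the total-energy balance (identity \eqref{formal-energy} with $\alpha=1$) with equality in the limit: combining the weak lower semicontinuity of the dissipation with the exact Galerkin identity and with the already-identified limits of the stored and kinetic energies forces convergence of the $L^2$-norms of $\DT{\bm m}$, $\DT\zeta$, and $\nabla\mu$, which together with weak $L^2$-convergence upgrades to strong $L^2$-convergence, hence to strong $L^1$-convergence of the full dissipation. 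The same procedure must also be executed in the $\varepsilon$-limit, where one additionally needs to verify that the truncation of the heat sources does not cause loss of energy, which follows from the uniform-in-$\varepsilon$ a priori bounds and equi-integrability.
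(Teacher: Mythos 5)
Your overall architecture (regularize, Galerkin, limit first in $k$ then in $\varepsilon$, Healey--Kr\"omer determinant bound, Aubin--Lions compactness, the quadratic heat sources as the hard part) matches the paper's, but two of your steps fail as written. First, the nonlinear tests of the heat equation that you schedule at the Galerkin level are inadmissible there: neither $(\theta_{\varepsilon k})^-$ nor the Boccardo--Gallou\"et truncation $1-(1+\theta_{\varepsilon k}^+)^{-\alpha}$ belongs to the finite-dimensional space, so nonnegativity of temperature and the $L^r$-bound on $\nabla\theta$ cannot be obtained before the $k$-limit. This is exactly why the paper truncates the dissipative right-hand side as in \eqref{system.d-reg}: at the Galerkin level only the linear test by $\theta_{\varepsilon k}$ itself is used (an $L^2$-type estimate, possible because the truncated sources are bounded by $C/\varepsilon$), and the $L^1$-heat-source machinery --- nonnegativity via the negative part, the Boccardo--Gallou\"et estimate giving $\nabla\theta\in L^r(Q)$ for $r<(d+2)/(d+1)$ --- is postponed to the continuous regularized problem after $k\to\infty$. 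Your ordering of estimates has to be rearranged accordingly, and the claim that nonnegativity is ``preserved in the limit'' from the Galerkin level is vacuous since it is not available there.

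Second, and more seriously, your identification of the weak-$L^1$ limit of $\tau_1|\DT{\bm m}|^2+\tau_2\DT\zeta^2+\bm M\nabla\mu\cdot\nabla\mu$ via the total energy balance with \emph{equality} in the limit is not justified. To close that argument you need the limit configuration to satisfy the mechano-magneto-chemical energy balance with equality, i.e.\ you must test the limit momentum equation by $\DT{\bm\chi}$; but in this inviscid large-strain setting $\nabla\DT{\bm\chi}$ is not defined, \eqref{formal-energy} is only formal, and the weak convergences give merely lower semicontinuity of the kinetic and nonlocal-hyperstress energies at time $t$ --- your phrase ``already-identified limits of the stored and kinetic energies'' has no basis, since neither $\DT{\bm\chi}(t)$ nor $\nabla^2\bm\chi(t)$ converges strongly. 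The vanishing viscosity you add does not repair this: it yields only $\sqrt{\varepsilon}$-weighted control of $\nabla\DT{\bm\chi}$, so it neither makes the Zeeman term $(\bm{\mathsf h}_{\rm e}{\circ}\bm\chi)\otimes\bm m:\nabla\DT{\bm\chi}$ and the boundary term $\bm g\cdot\DT{\bm\chi}$ estimable uniformly in $\varepsilon$ (these must instead be integrated by parts in time as in \eqref{est-of-y-m-z}, using the time-regularity of $\bm{\mathsf h}_{\rm e}$ and $\bm g$ in \eqref{ass-h-e} --- a device absent from your proposal) nor gives an energy equality in the limit. The workable route, and the one the paper takes, bypasses the mechanical energy altogether: since no mechanical dissipation enters the heat equation, it suffices to prove strong $L^2(Q)$-convergence of $\DT{\bm m}$, $\DT\zeta$ and of $\Cof(\nabla\bm\chi)\nabla\mu/\sqrt{\det\nabla\bm\chi}$ directly from the magnetization equation and the Cahn--Hilliard system, testing by differences of rates against comparison sequences valued in the Galerkin spaces, and exploiting $\tau_1,\tau_2>0$, the uniform positive definiteness of $\bm{\mathsf M}$, the cancellation of the $\pm\mu\DT\zeta$ terms, and the calculus \eqref{calculus}; these equations are parabolic, the rates are legal test functions, and no energy conservation for the hyperbolic part is ever invoked. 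Relatedly, if you regularize $\xi$ below $\det\bm F=\varepsilon$ you also lose the hypothesis $\int J^{-p}\,\d x\le M$ of Lemma~\ref{lem:1} and must rerun its proof with the truncated bound to recover an $\varepsilon$-independent $\eta$; the paper avoids this by not regularizing $\psi$ at all, the singularity being inactive once $\det\nabla\bm\chi_{\varepsilon k}\ge\eta$ on the relevant energy sublevel set.
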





\BLUE
Note that Theorem~\ref{theorem} did not say anything about uniqueness. Indeed, 
this attribute seems to be very delicate 
in particular due to quadratic-like coupling terms 
in the right-hand side of \eq{system}, namely
$((\gradop\bm{\mathsf h}_{\rm e}){\circ}\bm\chi)^\top\nabla\bm\chi\,\bm m$,
$\tau_1|\DT{\bm m}|^2$, $\tau_2\DT\zeta^2$, and 
$\bm M(\nabla\bm\chi,\bm m,\zeta,\theta)\nabla\mu{\cdot}\nabla\mu$.
Thus uniqueness could be expected, under some additional assumptions, at 
most for sufficiently small data. We do not address this issue here,
however. \color{black}

We will prove Theorem~\ref{theorem} by rather constructive method 
making a 
regularization of \eq{system}--\eq{BC}--\eq{IC}
and then applying the Galerkin approximation, proving 
apriori estimates (that can be interpreted as a numerical stability)
and convergence (in terms of subsequences) towards weak solutions.


As we need to control the determinant of the deformation gradient,
we cannot impose semi-convexity assumption and cannot rely on 
a time discretisation. Therefore, we will use the Faedo-Galerkin 
combined with a regularization \BLUE of the 
heat sources to facilitate usage of standard $L^2$-parabolic theory at least
on the Galerkin level\color{black}. 

The successive estimation and successive limit passage must be executed 
\BLUE{}in such a way that the natural $L^1$-heat source and the 
corresponding heat-equation theory is used only on the continuous
level when non-negativity of temperature can be granted\color{black}.


Our assumptions on the thermal coupling lead to a relatively simple
scenario that allows for a-priori estimates of (in particular, we use 
$\bm\chi$ and $\bm m$ and $\zeta$ and $\mu$
independent of the discretization and regularization of the 
heat transfer equation. 

Using 
the parameter $\sigmanu>0$, 
we regularize 
also the right-hand side of the heat equation (both in the bulk and in the
boundary condition) in order to avoid the superlinear growth 
in the dissipation rates and the adiabatic terms too.
We thus arrive to the system \BLUE 
(\ref{system}a-c,e)--(\ref{BC}a-c)--\eq{IC-1} while 
\eq{system.d} is replaced by the regularized heat-transfer equation
\begin{align}
&\nonumber
\DT w-\operatorname{Div}(\bm K
(\nabla\bm\chi,\bm m,\zeta,\theta)\nabla\theta)=
\frac{\tau_1|\DT{\bm m}|^2}{1{+}\sigmanu|\DT{\bm m}|^2}
+\frac{\tau_2\DT\zeta^2}{1{+}\sigmanu\DT\zeta^2}
+\frac{\bm M
(\nabla\bm\chi,\bm m,\zeta,\theta)\nabla\mu{\cdot}\nabla\mu}{1+\sigmanu|\nabla\mu|^2}
\nonumber
\\[-.3em]&\hspace{22em}+\partial_{\bm m}\psi_{\textsc{th}}(\bm m,\zeta,\theta){\cdot}\DT{\bm m}+\partial_{\zeta}\psi_{\textsc{th}}(\bm m,\zeta,\theta)\DT\zeta
\label{system.d-reg}
\end{align}
with the boundary/initial conditions \eq{BC4}/\eq{IC-2} regularized as
\begin{subequations}\label{BC-IC-reg}\begin{align}
&&&\bm K
(\nabla\bm\chi,\bm m,\zeta,\theta)\nabla\theta\cdot\bm n
\BLUE+K\theta=K\frac{\theta_{\rm e}}{1+\sigmanu\theta_{\rm e}}\color{black}
&&\text{on } \Sigma,
\label{BC4-reg}
\\&&&\theta|_{t=0}^{}=\theta_{0,\sigmanu}:=\frac{\theta_0}{1+\sigmanu\theta_0}
&&\text{on }\Omega.&&&&
\label{IC-reg}
\end{align}
\end{subequations}
\color{black}
Let us explain that we did not regularize the last 
\BLUE{}two \color{black} terms in
\eq{system.d-reg} in order to keep the possibility to switch between
the internal-energy formulation and the temperature formulation like
in the original problem, cf.\ \eq{system.d} vs.\ \eq{system.d+}.

\BLUE{}Then\color{black}, without going into (standard) technical 
details, we make 
a Faedo-Galerkin approximation 
by exploiting some finite-dimensional subspaces of 
$H^{2+\gamma}({\varOmega};\R^d)$ for 
\eq{system.a} 
and of $H^1({\varOmega})$ for each 
of the equations 
(\ref{system}b-e). Even, it is important 
to have the same sequence of finite-dimensional
spaces used for 
\BLUE both equations in \color{black} \eq{system.c} 
to facilitate their cross-testing and thus to allow for a cancellation of 
the $\pm\DT\zeta\mu$-terms also on the Galerkin level, and also it is important 
to allow for a good sense of $\Delta\bm m$ and $\Delta\zeta$ on the
Galerkin level so that, in fact, we need rather 
finite-dimensional subspaces of $H^2({\varOmega};\R^d)$ and $H^2({\varOmega})$
for 
(\ref{system}b,c).
We denote by 
$k\in\mathbb N$ the discretisation
index of this approximation.
%

Let us denote such an approximate solution, i.e.\ Galerkin approximation 
of the \BLUE above specified \color{black}
regularized problem, 
by 
$(\bm\chi_{\sigmanu k},\bm m_{\sigmanu k},\zeta_{\sigmanu k},\mu_{\sigmanu k},\theta_{\sigmanu k})$.

Without any loss of generality if 
the qualification \eq{IC-qualif} and density of 
the finite-dimensional subspaces is assumed, we can also assume that
that all the (nested) finite-dimensional spaces 
used for the Galerkin approximation 
contain both $\bm\chi_0$ and $\bm v_0$, and that 
the finite-dimensional spaces used for the approximation of 
\eq{system.b} contain $\bm m_0$, while those used for 
\eq{system.c} contain $\zeta_0$ and those used for \eq{system.d-reg}
contain $\theta_{0,\sigmanu}$ from \eq{IC-reg}. 
We also introduce a seminorm  $|\cdot|_l$ on $L^2(I;H^1({\varOmega})^*)$
defined by
\begin{align}\label{seminorm}
|\xi|_l:=\sup_{\|v\|_{L^2(I;H^1({\varOmega}))}\le1,\ v(t)\in V_l,\ t\in I}\int_Q\xi v\,\d x\d t.
\end{align}
Equipped by the countable family of these seminorms $\{|\cdot|_l\}_{l\in\N}$,
the linear space $L^2(I;H^1({\varOmega})^*)$ becomes a metrizable locally 
convex space (i.e.\ a Fr\'echet space).

\begin{lemma}[Approximate solutions]\label{lem-1}
Let the assumptions of Theorem~\ref{theorem} hold
and the finite-dimensional spaces are qualified as above. Then, for each 
$k\in\N$, the Galerkin approximate solution 
$(\bm\chi_{\sigmanu k},\bm m_{\sigmanu k},\zeta_{\sigmanu k},\mu_{\sigmanu k},\theta_{\sigmanu k}
)$ 
to the regularized problem 
\BLUE 
(\ref{system}a-c,e)--(\ref{BC}a-c)--\eq{IC-1} with 
(\ref{system.d-reg})--(\ref{BC-IC-reg}) \color{black}
exists and satisfies the following a-priori estimates
with some constant $C$ dependent only on the data and and some $C_\sigmanu$
and $C_{\sigmanu\eta}$ dependent also on the regularizing parameters as indicated:
\begin{subequations}\label{est}
\begin{align}\label{est-y}
&\big\|\bm\chi_{\sigmanu k}\big\|_{L^\infty(I;H^{2+\gamma}({\varOmega};\R^d))
\,\cap\,W^{1,\infty}(I;L^2({\varOmega};\R^d))}^{}\le C,
\\&\label{est-m}
\big\|\bm m_{\sigmanu k}\big\|_{L^\infty(I;H^1({\varOmega};\R^d)\cap L^{p_2}({\varOmega};\R^d))\,\cap\,H^1(I;L^2({\varOmega};\R^d))}^{}\le C,
\\&\label{est-zeta}
\big\|\zeta_{\sigmanu k}\big\|_{
L^\infty(I;H^1({\varOmega})\cap L^{p_3}({\varOmega};\R^d))\,\cap\,H^1(I;L^2({\varOmega}))
}^{}\le C,
\\&\label{est-zeta+}
\bigg\|
\Ceenk\nabla\mu_{\sigmanu k}\bigg\|_{L^2(Q;\R^d)}^{}\le C,
\\\label{est-Delta-zeta}
&\big\|\Delta\zeta_{\sigmanu k}\big\|_{L^2(Q)}^{}\le C,
\\\label{est-Delta-m}
&\big\|\Delta\bm m_{\sigmanu k}\big\|_{L^2(Q;\R^d)}^{}\le C,
\\\label{est-nabla-theta}
&\BLUE
\bigg\|
\Ceenk\nabla\theta_{\sigmanu k}\bigg\|_{L^2(Q;\R^d)}^{}\le C,\color{black}
\\
\label{est-mu}
&
\big\|\mu_{\sigmanu k}\big\|_{L^2(I;H^1(\Omega))}^{}\le 
C,
\\
&\BLUE\big\|\theta_{\sigmanu k}\big\|_{L^2(I;H^1(\Omega))\,\cap\,L^\infty(I;L^2({\varOmega}))}^{}\color{black}\le 
C_{\sigmanu},
\label{est++}
\\&\label{est-w-grad}
\big\|\nabla w_{\sigmanu k}\|_{L^2(Q;\R^d)}\le C_{\sigmanu},
\qquad\text{ and }\qquad
\big|\DT w_{\sigmanu k}\big|_l
\le C_{\sigmanu}\qquad\text{ for all }\ k\ge l,\ l\in\N,
\end{align}
\end{subequations}
\BLUE{}with $w_{\sigmanu k}=e_{\textsc{th}}(\bm m_{\sigmanu k},\zeta_{\sigmanu k},\theta_{\sigmanu k}).$\color{black}
\end{lemma}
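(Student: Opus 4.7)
The plan is to establish local existence of the Galerkin approximation via Carath\'eodory's theorem, then promote it to global existence by means of \emph{a priori} estimates obtained from testing each Galerkin equation with judiciously chosen elements of the respective Galerkin subspace. After fixing ordered bases of the nested finite-dimensional subspaces for $\bm\chi$, $\bm m$, $\zeta$, $\mu$, and $\theta$, the system (\ref{system}a--c,e)--(\ref{BC}a--c)--\eqref{IC-1} together with \eqref{system.d-reg}--\eqref{BC-IC-reg} becomes a system of ODEs in the Galerkin coefficients. The regularizations in \eqref{system.d-reg} make the right-hand sides bounded and locally Lipschitz in the coefficients, while the coercivity \eqref{ass-coercivity+} and the qualification $\det\nabla\bm\chi_0>0$ implicit in \eqref{ass-IC-det} keep the free energy finite near $t=0$. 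Carath\'eodory therefore gives a solution on a maximal interval $[0,T_k)\subset I$, which extends to $I$ once the estimates \eqref{est} are in place.

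The core estimate is the Galerkin counterpart of the mechano-magneto-chemical balance \eqref{formal-energy} with $\alpha=0$. It is obtained by summing the tests of the approximate (\ref{system}a) by $\DT{\bm\chi}_{\sigmanu k}$, of (\ref{system}b) by $\DT{\bm m}_{\sigmanu k}$, and of the two equations in (\ref{system}c) respectively by $\mu_{\sigmanu k}$ and $\DT\zeta_{\sigmanu k}$; the crucial point is that both equations in \eqref{system.c} are Galerkin-projected onto the \emph{same} finite-dimensional space, so the cross-testing is admissible and the $\pm\mu_{\sigmanu k}\DT\zeta_{\sigmanu k}$ contributions cancel. Thanks to $|\partial_{\bm m}\psi_{\textsc{th}}|+|\partial_\zeta\psi_{\textsc{th}}|\le C$ from \eqref{ass-psi/mz}, the thermal adiabatic terms are absorbed into $\tfrac12\tau_1|\DT{\bm m}_{\sigmanu k}|^2+\tfrac12\tau_2|\DT\zeta_{\sigmanu k}|^2$ by Young's inequality, independently of $\sigmanu$ and $k$. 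The Zeeman-type work terms on the right of \eqref{formal-energy} are handled by shifting one time derivative through integration by parts in time and exploiting \eqref{ass-h-e} with $q$ chosen so that the resulting products are integrable via H\"older in the $L^{p_1}$ and $L^{p_2}$ bounds on $\nabla\bm\chi_{\sigmanu k}$ and $\bm m_{\sigmanu k}$. A Gronwall argument then delivers \eqref{est-y}, \eqref{est-m}, \eqref{est-zeta}, the $L^2$-bound on $\sqrt{\bm M}\nabla\mu_{\sigmanu k}$, an $L^\infty(I;L^1)$-bound on $\psi_{\textsc{me}}(\nabla\bm\chi_{\sigmanu k},\bm m_{\sigmanu k},\zeta_{\sigmanu k})$, and an $L^\infty(I)$-bound on $\mathscr{H}(\nabla^2\bm\chi_{\sigmanu k})$. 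By \eqref{kernel} the last bound controls the $H^{1+\gamma}$-seminorm of $\nabla\bm\chi_{\sigmanu k}$, which combined with \eqref{ass-coercivity+} and Lemma~\ref{lem:1} guarantees $\det\nabla\bm\chi_{\sigmanu k}\ge\eta>0$ uniformly in $\sigmanu,k$; this validates \eqref{est-zeta+} through the push-back formula \eqref{M-K-push-back}.

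The higher-order estimates \eqref{est-Delta-zeta} and \eqref{est-Delta-m} follow by rewriting the second equation in \eqref{system.c} and equation \eqref{system.b} as Poisson-type equations $-\kappa_2\Delta\zeta_{\sigmanu k}=\mu_{\sigmanu k}-\partial_\zeta\psi-\tau_2\DT\zeta_{\sigmanu k}$ and $-\kappa_1\Delta\bm m_{\sigmanu k}=-\tau_1\DT{\bm m}_{\sigmanu k}-\partial_{\bm m}\psi+(\nabla\bm\chi_{\sigmanu k})^\top\bm{\mathsf h}_{\rm e}\!\circ\!\bm\chi_{\sigmanu k}$; the right-hand sides are already bounded in $L^2(Q)$, and testing by $\Delta\zeta_{\sigmanu k}$ and $\Delta\bm m_{\sigmanu k}$ respectively (permitted since these lie in the Galerkin subspace by construction) yields the bounds, after which \eqref{est-mu} is immediate. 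For the heat equation, the first three source terms in \eqref{system.d-reg} are bounded pointwise by $1/\sigmanu$, hence in $L^\infty(Q)$, while the adiabatic terms are in $L^2(Q)$ by \eqref{ass-psi/mz} and the already-established bounds on $\DT{\bm m}_{\sigmanu k},\DT\zeta_{\sigmanu k}$. Using $\DT w=c_{\rm v}\DT\theta+\partial_{\bm m}e_{\textsc{th}}{\cdot}\DT{\bm m}+\partial_\zeta e_{\textsc{th}}\DT\zeta$ together with \eqref{ass-cv}--\eqref{ass-cv/mz}, the test by $\theta_{\sigmanu k}$ yields \eqref{est++} and, through the dissipation $\int\bm K\nabla\theta_{\sigmanu k}{\cdot}\nabla\theta_{\sigmanu k}$, the $\sigmanu$-independent bound \eqref{est-nabla-theta}. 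Finally, the chain rule together with \eqref{ass-cv/mz} converts the bound on $\nabla\theta_{\sigmanu k}$ into \eqref{est-w-grad}, and the seminorm bound on $\DT w_{\sigmanu k}$ is read off directly from the weak formulation of \eqref{system.d-reg} tested against elements of $V_l$ with $l\le k$.

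The main obstacle will be ensuring that the estimates truly decouple: the mechano-magneto-chemical bounds must be independent of $\sigmanu$ (so that the later limit $\sigmanu\to0$ is feasible), while the thermal bounds are allowed to degenerate in $\sigmanu$. The delicate point is the uniform positivity of $\det\nabla\bm\chi_{\sigmanu k}$, on which both the well-definedness of the push-back \eqref{M-K-push-back} and the very meaning of the stress $\partial_{\bm F}\psi_{\textsc{me}}$ hinge; this is what forces the use of the nonlocal $(2{+}\gamma)$-grade regularization \eqref{def-of-H} together with Lemma~\ref{lem:1}, since inertia rules out any semiconvexity-based time-discretization argument that would otherwise control $\DT{\bm F}$.
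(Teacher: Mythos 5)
Your proposal retraces the paper's own proof in all essentials: local existence plus prolongation of the Galerkin solution, the mechano\--magneto\--chemical energy test by $\DT{\bm\chi}_{\eps k}$, $\DT{\bm m}_{\eps k}$, $\mu_{\eps k}$, $\DT\zeta_{\eps k}$ with the $\pm\mu_{\eps k}\DT\zeta_{\eps k}$ cancellation enabled by using one and the same finite-dimensional space for both equations in \eqref{system.c}, the integration by parts in time of the Zeeman and surface-load terms, Lemma~\ref{lem:1} for the uniform bound $\det\nabla\bm\chi_{\eps k}\ge\eta>0$, the Laplacian bounds from \eqref{system.b}--\eqref{system.c}, the $\theta_{\eps k}$-test for the thermal estimates, and comparison for $\nabla w_{\eps k}$ and the seminorms of $\DT w_{\eps k}$. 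In outline this is correct, but two steps are not right as written.

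First, the claim that the $\theta_{\eps k}$-test yields \eqref{est-nabla-theta} \emph{independently of} $\eps$ through the dissipation term is not delivered by the argument you give: after that test the weighted gradient term indeed sits on the left, but the right-hand side carries the regularized heat sources multiplied by $\theta_{\eps k}$, which are controlled only by $(\tau_1{+}\tau_2{+}\max|\bm{\mathsf M}|)\,|\theta_{\eps k}|/\eps$ plus the adiabatic contributions, so Gronwall produces constants depending on $\eps$, exactly as for \eqref{est++} and \eqref{est-w-grad}. An $\eps$-uniform gradient bound for the temperature is recovered only in Proposition~\ref{prop2}, and only in $L^r$ with $r<(d{+}2)/(d{+}1)$, via the Boccardo--Gallou\"et-type test $\chi_\varsigma(\theta_\eps)$, which requires $\theta_\eps\ge0$ and is not a legal Galerkin test; this is in fact consistent with your own closing remark that the thermal bounds are allowed to degenerate in $\eps$, which your claim contradicts. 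Second, testing \eqref{system.b} and the second equation in \eqref{system.c} by $\Delta\bm m_{\eps k}$ and $\Delta\zeta_{\eps k}$ is not ``permitted by construction'': a finite-dimensional subspace of $H^2$ is in general not invariant under the (Neumann) Laplacian, so these are not admissible Galerkin tests unless you choose, e.g., an eigenfunction basis; the paper instead obtains \eqref{est-Delta-m} and \eqref{est-Delta-zeta} by comparison, using that all lower-order terms are already bounded in $L^2(Q)$ (this is where \eqref{ass-xi} enters) and that the Laplacians make pointwise sense in the $H^2$-valued Galerkin spaces. Two smaller omissions: the Galerkin scheme for \eqref{system.c} is not directly an ODE system but an index-1 differential-algebraic one, so before invoking Carath\'eodory/prolongation you must eliminate $\mu_{\eps k}$ by solving its uniquely solvable linear algebraic constraint (cf.\ \eqref{alg-constraint}); and $\theta c_{\rm v}(\bm m,\zeta,\theta)\DT\theta$ is not an exact time derivative, so the $\theta_{\eps k}$-test needs the primitive $\hat C_{\rm v}$ from \eqref{hat-Cv} together with the correction terms $\partial_{\bm m}\hat C_{\rm v}\DT{\bm m}$, $\partial_\zeta\hat C_{\rm v}\DT\zeta$ controlled via \eqref{ass-cv/mz}, rather than merely the chain rule for $\DT w$ that you quote.
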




\begin{proof}
We split the proof in \BLUE 
four \color{black} steps.\medskip

\noindent\emph{Step 1 - construction of the Galerkin solution.}
The existence of the Galerkin solution can be argued by the 
successive prolongation argument, relying on the a-priori 
estimate obtained by means of
testing the particular equations successively by  
$\DT{\bm\chi}_{\sigmanu k}$, $\DT{\bm m}_{\sigmanu k}$, 
$\mu_{\sigmanu k}$, 
$\DT\zeta_{\sigmanu k}$, and $\theta_{\sigmanu k}$.
Note that all these tests are legitimate in the level of the 
Galerkin approximation provided 
the finite-dimensional spaces used in both equations in the Cahn-Hilliard
systems are the same.
First four tests give the discrete analog of the balance
of the mechano-magneto-chemical energy, i.e.\ \eq{formal-energy}
with $\alpha=0$ for one a current interval $[0,t]$ with $t\le T$. 
Actually, the Galerkin approximation of the viscous Cahn-Hilliard 
system 
\eq{system.c} leads, instead of an ordinary-differential
system as usual, to a differential-algebraic system for $\DT\zeta$ involving
the holonomic constraint  
\begin{align}\label{alg-constraint}
\tau_2\operatorname{Div}(
\bm M
(\nabla\bm\chi,\bm m,\zeta,\theta)\nabla\mu)
-\kappa_2\Delta\zeta+\partial_\zeta\psi
(\nabla\bm \chi,\bm m,\zeta,\theta)-\mu=0
\end{align}
to be understood in its Galerkin approximation. \BLUE More
in detail, \eq{alg-constraint} in the Galerkin approximation means 
the integral identity 
$\int_\Omega \tau_2\bm M(\nabla\bm\chi,\bm m,\zeta,\theta)\nabla\mu)
\cdot\nabla v+\mu v\,\d x
=-\int_\Omega\kappa_2\nabla\zeta\cdot\nabla v+\partial_\zeta\psi
(\nabla\bm \chi,\bm m,\zeta,\theta)v\,\d x$ for all $v$ from the 
corresponding finite-dimensional Galerkin space. This is uniquely 
solvable in $\mu$ in this finite-dimensional space, so that $\mu$ is a 
(nonlinear algebraic) function of $(\bm\chi,\bm m,\zeta,\theta)$
and can be eliminated when substituting into the Galerkin approximation
of the diffusion equation $\DT\zeta=\operatorname{Div}(
\bm M
(\nabla\bm\chi,\bm m,\zeta,\theta)\nabla\mu)$.
This reveals the structure of the so-called index-1 differential-algebraic 
system and the underlying ordinary-differential
system. The energy-type $L^\infty$-apriori estimates ensure existence
of its solution on the whole time interval $[0,T]$ by the usual prolongation 
arguments.


Moreover, it is here important that, due to the assumption (\ref{ass}a-c)
 with \eq{kernel},
we have at disposal the Healey-Kr\"omer theorem \cite{HeaKro09IWSS} in the 
modification of 
Lemma~\ref{lem:1}. More specifically, 
we can see that $\bm\chi_{\sigmanu k}$ is valued in a single 
(sufficiently large) level set of the functional
$\int_{\varOmega}\epsilon|\nabla\bm\chi|^{p_1}
+\epsilon/\det(\nabla\bm\chi)^q
+|\nabla^2\bm\chi|^p\,\d x$, cf.\ \eq{ass-coercivity+},
with some $p>d$ and $q\ge pd/(p{-}d)$;
here we used also the embedding 
$H^{2+\gamma}({\varOmega})\subset W^{2,p}({\varOmega})$ with 
$p<2d/(d-2\gamma)$ (if $\gamma\le d/2$) or $p=\infty$ (if $\gamma>d/2$)
and that the mentioned conditions $q\ge pd/(p{-}d)$ and $p>d$ are 
compatible provided $\gamma>d/2-1$, so that considering $p$ as large
as possible yields the condition on $q$ and $\gamma$ 
used in \eq{ass-coercivity+}, namely $q>2d/(d-2-2\gamma)$. 
Then we are eligible to use the Healey and Kr\"omer's results 
\cite{HeaKro09IWSS} and arguments 
as in \cite[Proof of Lemma 5.1]{MieRou??RIEF} to obtain some 
$\eta>0$ such that 
$\det(\nabla\bm\chi_{\sigmanu k})\ge\eta$ everywhere 
on $Q$. Therefore, by the mentioned successive-prolongation 
argument, $\det\nabla{\bm\chi}_{\sigmanu k}\ge\eta>0$ holds with $\eta$ as in 
\eq{eq:34+};
in particlar, the so-called 
Lavrentiev phenomenon is exluded for the Galerkin procedure. \color{black}

\medskip 

\allowdisplaybreaks
\noindent\emph{Step 2 - estimates \BLUE(\ref{est}a-g)\color{black}.} 
The estimates (\ref{est}a-d) and \eqref{est-mu} are consequence of the 
following partial energy balance, which is obtained by testing the Galerkin 
approximations of 
(\ref{system}a-c) 
by $\DT{\bm\chi}_{\sigmanu k}$, $\DT{\bm m}_{\sigmanu k}$, $\mu_{\sigmanu k}$, and $\DT\zeta_{\sigmanu k}$:
\begin{align}\nonumber
    &\int_{\varOmega} \Big(\frac \varrho 2 |\DT{\bm\chi}_{\sigmanu k}(t)|^2
+\psi_{\textsc{me}}(\nabla\bm\chi_{\sigmanu k}(t),\bm m_{\sigmanu k}(t),\zeta_{\sigmanu k}(t))
+\frac 12\kappa_1|\nabla\bm m_{\sigmanu k}(t)|^2
+\frac 12 \kappa_2|\nabla\zeta_{\sigmanu k}(t)|^2
\Big)\dx\nonumber
+\mathscr{H}(\nabla^2\bm\chi_{\sigmanu k}(t))
\\[-.4em]\nonumber&\qquad
+\int_0^t\bigg(\int_{\varOmega}\tau_1|\DT{\bm m}_{\sigmanu k}|^2+\tau_2\DT\zeta_{\sigmanu k}^2
+
\BLUE\Big|
\bm M^{1/2}(\nabla\bm\chi_{\sigmanu k},\bm m_{\sigmanu k},\zeta_{\sigmanu k},\theta_{\sigmanu k})
\Ceenk\nabla\mu_{\sigmanu k}\Big|^2\color{black}
\dx
+\int_{\varGamma}M\mu_{\sigmanu k}^2\,\d S\bigg)\dt\nonumber
\\
&\nonumber \qquad
=
\int_0^t\int_{\varGamma}(M\mu_{\rm e}\mu_{\sigmanu k}+{\bm g}\cdot\DT{\bm\chi}_{\sigmanu k})\d S\dt
+\int_0^t\int_{\varOmega}\Big(
(\bm{\mathsf h}_{\rm e}(t){\circ}\bm\chi_{\sigmanu k}(t))\otimes\bm m_{\sigmanu k}(t)\cdot\nabla\DT{\bm\chi}_{\sigmanu k}(t)
\\[-.3em]\nonumber&\qquad\qquad\qquad
+(((\gradop\bm{\mathsf h}_{\rm e}){\circ}\bm\chi_{\sigmanu k})^\top\nabla\bm\chi_{\sigmanu k}\,\bm m_{\sigmanu k}+\bm f)\cdot\DT{\bm\chi}_{\sigmanu k}
+(\nabla\bm{\chi}_{\sigmanu k})^\top \bm{\mathsf h}_{\rm e}{\circ}\bm{\chi}_{\sigmanu k}\cdot\DT{\bm m}_{\sigmanu k}
\\[-.0em]\nonumber&\qquad\qquad\qquad
-\partial_{\bm m}\psi_{\textsc{th}}(\bm m_{\sigmanu k},\zeta_{\sigmanu k},\theta_{\sigmanu k})\cdot\DT{\bm m}_{\sigmanu k}
-\partial_{\zeta}\psi_{\textsc{th}}(\bm m_{\sigmanu k},\zeta_{\sigmanu k},\theta_{\sigmanu k})\DT{\zeta}_{\sigmanu k}\Big)
\dx\dt
\\&\qquad\qquad\nonumber
+\int_{\varOmega}\!\Big(
\frac\varrho2|{\bm v}_0|^2+\psi_{\textsc{me}}(\nabla\bm\chi_0,\bm m_0,\zeta_0)
+\frac 12\kappa_1|\nabla\bm m_0|^2+\frac 12 \kappa_2|\nabla\zeta_0|^2
\Big)
\dx+\mathscr{H}(\nabla^2\bm\chi_0)
\\
&\nonumber \qquad
=\int_0^t\int_{\varGamma}(M\mu_{\rm e}\mu_{\sigmanu k}-\DT{\bm g}\cdot\bm \chi_{\sigmanu k})\,\d S\dt
+
\int_0^t\int_{\varOmega}\!\Big(
\bm f\cdot\DT{\bm\chi}_{\sigmanu k}
-(\nabla\bm{\chi}_{\sigmanu k})^\top \DT{\bm{\mathsf h}}_{\rm e}{\circ}\bm{\chi}_{\sigmanu k}\cdot{\bm m}_{\sigmanu k}
\\[-.0em]\nonumber&\qquad\qquad\qquad
-\partial_{\bm m}\psi_{\textsc{th}}(\bm m_{\sigmanu k},\zeta_{\sigmanu k},\theta_{\sigmanu k})\cdot\DT{\bm m}_{\sigmanu k}
-\partial_{\zeta}\psi_{\textsc{th}}(\bm m_{\sigmanu k},\zeta_{\sigmanu k},\theta_{\sigmanu k})\DT{\zeta}_{\sigmanu k}\Big)
\dx\dt
\\&\qquad\qquad\nonumber
+\int_{\varOmega}\!
\Big((\nabla\bm{\chi}_{\sigmanu k}(t))^\top\bm{\mathsf h}_{\rm e}(t){\circ}\bm{\chi}_{\sigmanu k}(t)\cdot{\bm m}_{\sigmanu k}(t)
+
\frac \varrho 2 |{\bm v}_0|^2+\psi_{\textsc{me}}(\nabla\bm\chi_0,\bm m_0,\zeta_0)\\[-.3em]&\qquad\qquad\qquad\nonumber
+\frac 12\kappa_1|\nabla\bm m_0|^2
+\frac 12 \kappa_2|\nabla\zeta_0|^2-
\BLUE(\nabla\bm{\chi}_{0})^\top\bm{\mathsf h}_{\rm e}(0){\circ}\bm{\chi}_{0}\cdot{\bm m}_{0}\color{black}
\,
\dx
\\&\qquad\qquad
+\int_{\varGamma}\!\bm g(t)\cdot\bm\chi_{\sigmanu k}(t)-\bm g(0)\cdot\bm\chi_0\,\d S+\mathscr{H}(\nabla^2\bm\chi_0).
\label{est-of-y-m-z}
\end{align}
In writing this estimate, for the last equality, we used the by-part 
integration of the Zeeman energy using the  chain rule
\begin{subequations}\label{by-part-for-magnetic}
\begin{align}\nonumber
&\frac{\d}{\d t}\int_\varOmega
(\nabla\bm{\chi}_{\sigmanu k})^\top\bm{\mathsf h}_{\rm e}{\circ}\bm{\chi}_{\sigmanu k}\cdot{\bm m}_{\sigmanu k}\,\d x
=\int_\varOmega\bm{\mathsf h}_{\rm e}{\circ}\bm{\chi}_{\sigmanu k}\otimes{\bm m}_{\sigmanu k}\nabla\DT{\bm{\chi}}_{\sigmanu k}
+((\gradop\bm{\mathsf h}_{\rm e}){\circ}\bm\chi_{\sigmanu k})^\top\nabla\bm\chi_{\sigmanu k}\,\bm m_{\sigmanu k}\cdot\DT{\bm\chi}_{\sigmanu k}
\\[-.4em]&\qquad\qquad\qquad\qquad\qquad\qquad\qquad\qquad
+(\nabla\bm{\chi}_{\sigmanu k})^\top \bm{\mathsf h}_{\rm e}{\circ}\bm{\chi}_{\sigmanu k}\cdot\DT{\bm m}_{\sigmanu k}
+(\nabla\bm{\chi}_{\sigmanu k})^\top\DT{\bm{\mathsf h}}_{\rm e}{\circ}\bm{\chi}_{\sigmanu k}\cdot{\bm m}_{\sigmanu k}\,\d x
\end{align}
as well as the chain rule 
\begin{align}\label{eq:13}
&\frac{\d}{\d t}
\int_{\varGamma} \bm g\cdot\bm\chi_{\sigmanu k}\,\d S
=\int_{\varGamma}\DT{\bm g}\cdot\bm\chi_{\sigmanu k}
+{\bm g}\cdot\DT{\bm\chi}_{\sigmanu k}\,\d S
\end{align}
\end{subequations}
integrated over $[0,t]$. We also note that \eq{est-of-y-m-z} is \eq{formal-energy} 
for $\alpha=0$ when the by-part 
integration \eq{by-part-for-magnetic} has been applied.
Using \eq{ass-coercivity} and \eq{ass-h-e}, by the H\"older and the Young and 
the Gronwall inequalities, we obtain the estimates (\ref{est}a-d). These
estimates are uniform in the regularization parameters.
We also used \BLUE that \color{black} $
\psi_{\textsc{me}}(\nabla\bm\chi_0,\bm m_0,\zeta_0)$
is bounded due to our assumptions \eq{ass-coercivity} and \eq{ass-IC-det}.
In particular we use that $\partial_{\bm m}\psi_{\textsc{th}}(\bm m,\zeta,\theta)$
and $\partial_{\zeta}\psi_{\textsc{th}}(\bm m,\zeta,\theta)$ can
be \BLUE estimated \color{black} uniformly with respect to $\theta$ so 
that the regularization and discretization of the heat equation
does not influence the estimates (\ref{est}a-d).

Now, the estimates \eq{est-Delta-zeta} and \eq{est-Delta-m} are obtained by 
comparison from the respective equations 
\eq{system.b} and \eq{system.c}. In fact, in these equations
all lower-order terms (which do not contain the Laplacian operator) are 
already estimated in $L^2(Q)$-spaces even on the Galerkin level, thanks to 
\eqref{est-y}, \eqref{est-m}, and also to 
\BLUE \eqref{ass-xi}\color{black}. Recall that we assumed the 
finite-dimensional subspaces to be contained in $H^2({\varOmega})$-spaces
so that the Laplacians have a good sense on the Galerkin level. 
\medskip

\noindent \emph{Step 3 - estimates 
\BLUE (\ref{est}g) and part of  (\ref{est}i)\color{black}}.
We test the regularized heat equation
by $\theta_{\sigmanu k}$. Denoting by
$\hat C_{\rm v}(\bm m,\zeta,\theta)$ \BLUE the \color{black} 
primitive function of 
$\theta\mapsto\theta c_{\rm v}(\bm m,\zeta,\theta)$ such that $\hat C_{\rm v}(\bm m,\zeta,0)=0$, i.e.\ 
\begin{align}
\hat C_{\rm v}(\bm m,\zeta,\theta):=\int_0^1r\theta^2c_{\rm v}(\bm m,\zeta,r\theta)\,\d r,
\label{hat-Cv}\end{align}
we 
\BLUE can write \color{black} 
 $\hat C_{\rm v}(\bm m,\zeta,\theta)\!\DT{\,^{}}
=\theta c_{\rm v}(\bm m,\zeta,\theta)\DT\theta
+\partial_{\bm m}\hat C_{\rm v}(\bm m,\zeta,\theta)\DT{\bm m}
+\partial_{\zeta}\hat C_{\rm v}(\bm m,\zeta,\theta)\DT\zeta$
with $\partial_{\bm m}\hat C_{\rm v}(\bm m,\zeta,\theta)
=\int_0^1r\theta^2\partial_{\bm m}c_{\rm v}(\bm m,\zeta,r\theta)\,\d r$
and $\partial_{\zeta}\hat C_{\rm v}(\bm m,\zeta,\theta)
=\int_0^1r\theta^2\partial_{\zeta}c_{\rm v}(\bm m,\zeta,r\theta)\,\d r$.
Therefore
\begin{align}\nonumber
\theta\DT w&=\theta e_{\textsc{th}}(\bm m,\zeta,\theta)\!\DT{\,^{}}
=\theta c_{\rm v}(\bm m,\zeta,\theta)\DT\theta
+\theta\partial_{\bm m}e_{\textsc{th}}(\bm m,\zeta,\theta)\DT{\bm m}
+\theta\partial_\zeta e_{\textsc{th}}(\bm m,\zeta,\theta)\DT\zeta
\\&=\hat C_{\rm v}(\bm m,\zeta,\theta)\!\DT{\,^{}}
+\big(\theta\partial_{\bm m}e_{\textsc{th}}(\bm m,\zeta,\theta)
-\partial_{\bm m}\hat C_{\rm v}(\bm m,\zeta,\theta)\big)\DT{\bm m}
+\big(\theta\partial_\zeta e_{\textsc{th}}(\bm m,\zeta,\theta)
-\partial_{\zeta}\hat C_{\rm v}(\bm m,\zeta,\theta)\big)\DT\zeta.
\label{theta-DTw}\end{align}
We can now perform the intended test of the regularized heat-transfer
problem 
\eqref{system.d-reg}--\eqref{BC-IC-reg}
in its Galerkin approximation by
$\theta_{\sigmanu k}$. Using \eq{theta-DTw} integrated 
over the time integral $I=[0,t]$, we thus obtain
\begin{align}
\nonumber&\int_{\varOmega} \hat C_{\rm v}(\bm m_{\sigmanu k}(t),\zeta_{\sigmanu k}(t),\theta_{\sigmanu k}(t))\,\d x
+\int_0^t\int_{\varOmega}
\BLUE\Big|
\bm K^{1/2}(\nabla\bm\chi_{\sigmanu k},\bm m_{\sigmanu k},\zeta_{\sigmanu k},\theta_{\sigmanu k})
\Ceenk\nabla\theta_{\sigmanu k}\Big|^2\color{black}
\,\d x\d t
\\\nonumber&= \int_0^t\int_{\varOmega}\bigg(
\frac{\tau_1|\DT{\bm m}_{\sigmanu k}|^2}{1{+}\sigmanu|\DT{\bm m}_{\sigmanu k}|^2}
+\frac{\tau_2\DT\zeta_{\sigmanu k}^2}{1{+}\sigmanu\DT\zeta_{\sigmanu k}^2}
+\frac{\bm M
(\nabla\bm\chi_{\sigmanu k},\bm m_{\sigmanu k},\zeta_{\sigmanu k},\theta_{\sigmanu k})\nabla\mu_{\sigmanu k}{\cdot}\nabla\mu_{\sigmanu k}}{1+\sigmanu|\nabla\mu_{\sigmanu k}|^2}\bigg)\theta_{\sigmanu k}
\\\nonumber&\qquad+
\big(\theta_{\sigmanu k}\partial_{\bm m}e_{\textsc{th}}(\bm m_{\sigmanu k},\zeta_{\sigmanu k},\theta_{\sigmanu k})
-\partial_{\bm m}\hat C_{\rm v}(\bm m_{\sigmanu k},\zeta_{\sigmanu k},\theta_{\sigmanu k})\big)\DT{\bm m}_{\sigmanu k}
\\\nonumber&\qquad+\big(\theta_{\sigmanu k}\partial_\zeta e_{\textsc{th}}(\bm m_{\sigmanu k},\zeta_{\sigmanu k},\theta_{\sigmanu k})
-\partial_{\zeta}\hat C_{\rm v}(\bm m_{\sigmanu k},\zeta_{\sigmanu k},\theta_{\sigmanu k})\big)\DT\zeta_{\sigmanu k}\,\d x\d t
\\&\qquad+
    \int_0^t\int_{\varGamma}
\BLUE
K\frac{\theta_{\rm e}}{1{+}\sigmanu\theta_{\rm e}}
\theta_{\sigmanu k}\color{black}\,
\d S\d t
+\int_{\varOmega} \hat C_{\rm v}(\bm m_0,\zeta_0,\theta_{0,\sigmanu})\,\d x
\label{test-heat}
\end{align}
From the assumption $\inf c_{\rm v}(\bm m,\zeta,\cdot)=c_0>0$, 
cf.\ \eq{ass-cv},
we know that $\hat C_{\rm v}(\bm m,\zeta,\theta)\ge c_0|\theta|^2/2$.
From \eq{ass-cv/mz}, $|\partial_{\bm m}\hat C_{\rm v}|\le C\theta$
and $|\partial_{\zeta}\hat C_{\rm v}|\le C\theta$.
Using also the assumption \eq{ass-psi/mz},
from \eq{test-heat} we can thus estimate
\begin{align}\nonumber
&\frac{c_0}2\|\theta_{\sigmanu k}(t)\|_{L^2({\varOmega})}^2
+\nu\!\int_0^t\!\Big\|\BLUE\Ceenk\color{black}\nabla\theta_{\sigmanu k}\Big\|_{L^2({\varOmega};\R^d)}^2\d t
+K\int_0^t\|\theta_{\sigmanu k}\|_{L^2(\varGamma)}^2\ \d t
\\&\quad\nonumber
\le\int_0^t\!\int_{\varOmega}\Big(\frac{\tau_1{+}\tau_2{+}\max|\bm M
|}\sigmanu\BLUE+C\Big)\color{black}
|\theta_{\sigmanu k}|
+2C\big(|\DT{\bm m}_{\sigmanu k}|{+}|\DT\zeta_{\sigmanu k}|\big)
|\theta_{\sigmanu k}|\,\d x\d t
\\[-.3em]&\qquad\qquad\qquad\qquad\qquad\quad+
\int_0^t\!\int_{\varGamma}
\frac K\eps|\theta_{\rm e}|
\,\d S\d t
+\int_{\varOmega} \hat C_{\rm v}(\bm m_0,\zeta_0,\theta_{0,\sigmanu})\,\d x\,,
\end{align}
\BLUE{}where 
$C$ is from \eq{ass-psi/mz} and where 
we denoted by $\sigma$ the positive-definiteness constant
of $\bm M$, cf.\ the assumption \eq{ass-M-K}. \color{black}
The boundary term $\|\theta_{\sigmanu k}\|_{L^2({\varGamma})}^2$ is to 
be estimated through the trace operator by $2N\|\theta_{\sigmanu k}\|_{L^2({\varOmega})}^2+2N\|\nabla\theta_{\sigmanu k}\|_{L^2({\varOmega};\R^d)}^2$ with $N$ denoting
the norm of the trace operator $H^1({\varOmega})\to L^2({\varGamma})$.
Taking $0<2\epsilon<\nu/(2N)$, the gradient term arising from this 
boundary term can be absorbed in the left-hand side. 
Realizing that we have $\DT{\bm m}_{\sigmanu k}$ and 
$\DT\zeta_{\sigmanu k}$ already estimates in $L^2(Q;\R^d)$ and 
$L^2(Q)$, respectively, we can use the Gronwall inequality to 
get the estimate \eq{est++}. 

\medskip

\allowdisplaybreaks
\noindent\emph{Step 4 - estimates \BLUE (\ref{est}h,j) and part of (\ref{est}j)\color{black}.} 
From \eq{est-zeta+}, 
we can now read the estimate for $\nabla\mu_{\sigmanu k}$. Indeed,
we have the bound
$\nabla\bm\chi_{\sigmanu k}\in L^\infty(Q;\R^{d\times d})
$ so that, realizing that $(\Cof(\nabla\bm\chi_{\sigmanu k}))^{-1}=
\nabla\bm\chi_{\sigmanu k}/\det(\nabla\bm\chi_{\sigmanu k})$,
we have
\begin{align}\nonumber
&\|\nabla\mu_{\sigmanu k}\|_{L^2(Q;\R^{d})}^{}=
\bigg\|\frac{\nabla\bm\chi_{\sigmanu k}(\Cof\nabla\bm\chi_{\sigmanu k})\nabla\mu_{\sigmanu k}}{\det(\nabla\bm\chi_{\sigmanu k})}\bigg\|_{L^2(Q;\R^{d})}^{}
\\&\qquad\le\|\nabla\bm\chi_{\sigmanu k}\|_{L^\infty(Q;\R^{d\times d})}^{}
\bigg\|\frac{1}{\sqrt{\det(\nabla\bm\chi_{\sigmanu k})}}\bigg\|_{L^\infty(Q)}
\bigg\|\frac{\Cof(\nabla\bm\chi_{\sigmanu k})\nabla\mu_{\sigmanu k}}{\sqrt{\det(\nabla\bm\chi_{\sigmanu k})}}\bigg\|_{L^2(Q;\R^d)}^{}.
\label{est-mu+++}\end{align}
we thus have recovered 
\eq{est-mu}. \BLUE Analogously, from \eq{est-nabla-theta} we can read 
the estimate of $\nabla\theta_{\sigmanu k}$ in $L^2(Q;\R^{d})$ contatined in 
\eq{est++}. \color{black} 
%
%
%
Moreover, we can now read \BLUE the first estimate in \color{black} \eq{est-w-grad} of $\nabla w_{\sigmanu k}$, namely
\begin{align}\nonumber
\nabla w_{\sigmanu k}&=
\partial_{\bm m} e_{\textsc{th}}(\bm m_{\sigmanu k},\zeta_{\sigmanu k},\theta_{\sigmanu k})\nabla\bm m_{\sigmanu k}
\\&\label{eq:31}+\partial_{\zeta}e_{\textsc{th}}(\bm m_{\sigmanu k},\zeta_{\sigmanu k},\theta_{\sigmanu k})\nabla\zeta_{\sigmanu k}
+
\partial_{\theta}e_{\textsc{th}}
(\bm m_{\sigmanu k},\zeta_{\sigmanu k},\theta_{\sigmanu k})
\nabla\theta_{\sigmanu k}.
\end{align}
We have already all three gradients on the right-hand side estimated in 
respective $L^2(Q)$-spaces while the coefficients are bounded by
our assumptions; more specifically, $\partial_{\bm m} e_{\textsc{th}}=\partial_{\bm m}\psi_{\textsc{th}}
+\theta\partial_{{\bm m}\theta}^2\psi_{\textsc{th}}$
and $\partial_\zeta e_{\textsc{th}}=\partial_\zeta\psi_{\textsc{th}}
+\theta\partial_{\zeta\theta}^2\psi_{\textsc{th}}$ is bounded 
due to \eq{ass-psi/mz} and \eq{ass-psith},
while $\partial_{\theta}e_{\textsc{th}}=c_{\rm v}$ is bounded due to \eq{ass-cv}.
Thus \BLUE the first estimate in \color{black} \eq{est-w-grad} 
is proved. 


Eventually,  we can also read an estimate of $
\DT w_{\sigmanu k}$ 
in the seminorm $|\cdot|_l$ defined in \eq{seminorm} for any $k\ge l$
is due to the comparison 
\begin{align}\nonumber
&\int_Q\DT w_{\sigmanu k}v\,\d x\d t
=\int_Q\bigg(\frac{\tau_1|\DT{\bm m}_{\sigmanu k}|^2}{1{+}\sigmanu|\DT{\bm m}_{\sigmanu k}|^2}
+\frac{\tau_2\DT\zeta_{\sigmanu k}^2}{1{+}\sigmanu\DT\zeta_{\sigmanu k}^2}
+\frac{\bm M
(\nabla\bm\chi_{\sigmanu k},\bm m_{\sigmanu k},\zeta_{\sigmanu k},\theta_{\sigmanu k})\nabla\mu_{\sigmanu k}{\cdot}\nabla\mu_{\sigmanu k}}{1+\sigmanu|\nabla\mu_{\sigmanu k}|^2}
\nonumber
\\[-.5em]&\nonumber
\hspace{.9em}+\partial_{\bm m}\psi_{\textsc{th}}(\bm m_{\sigmanu k},\zeta_{\sigmanu k},\theta_{\sigmanu k}){\cdot}\DT{\bm m}_{\sigmanu k}+\partial_{\zeta}\psi_{\textsc{th}}(\bm m_{\sigmanu k},\zeta_{\sigmanu k},\theta_{\sigmanu k})\DT\zeta_{\sigmanu k}\bigg)v
-\bm K
(\nabla\bm\chi_{\sigmanu k},\bm m_{\sigmanu k},\zeta_{\sigmanu k},\theta_{\sigmanu k})\nabla\theta_{\sigmanu k}{\cdot}\nabla v\,\d x\d t
\\[-.3em]&
\le\int_Q
\Big(
\frac{\tau_1{+}\tau_2{+}\max|\bm M
|}\sigmanu
+\max|\partial_{\bm m}\psi_{\textsc{th}}|\,|\DT{\bm m}_{\sigmanu k}|
+\max|\partial_{\zeta}\psi_{\textsc{th}}|\,|\DT\zeta_{\sigmanu k}|
\Big)|v|
+\max|\bm K
||\nabla\theta_{\sigmanu k}|
|\nabla v|
\,\d x\d t
\end{align}
provided $v$'s are valued in the finite-dimensional space $V_l$ and 
$k\ge l$.
Therefore, we have shown that 
$
\sup_{\|v\|_{L^2(I;H^1({\varOmega}))}\le1,\ v(t)\in V_l}\int_Q\DT w_{\sigmanu k}v\,\d x\d t
$
is bounded, so \BLUE the second  estimate in \color{black}
\eq{est-w-grad}.
%
\end{proof}

\begin{remark}\label{rem:g}
\upshape
We are now in position to better justify the statements in Remark \ref{rem:f} concerning the possible choices of the surface load $\bm g$. In fact,  in order for the the integration-by-parts formula \eqref{eq:13} to carry over in the limit passage,  we need  uniform control of the trace of $\dot{\bm g}$ on the parabolic boundary $\varSigma$. If took for $\bm g$ the option illustrated in that remark, then we would need to control the trace $\dot{\bm F}=\nabla\dot{\bm\chi}$, which would demand at least a viscous dissipation, which however we have excluded, since the technique we use in our analysis \BLUE is \color{black} 
already quite sophisticated.
\end{remark}

\begin{remark}[Qualification of $\bm g$]
\upshape
In Assumption \eqref{eq:12}, the qualification on $\bm g$ is stronger 
\BLUE than \color{black}
that of $\bm f$. Our reason for making this assumption is now clear, since in the energy balance the load $\bm g$ appears with its derivative, due to a lack of control of the trace of $\dot{\bm\chi}$. For $\bm f$ no integration by parts is necessary, since $\dot{\bm\chi}$ in the bulk is controlled by inertia.
\end{remark}

\begin{proposition}[Convergence of the Galerkin approximation]\label{prop1}
Let the assumptions \eqref{ass} be fulfilled and $\sigmanu>0$
be fixed. Then the Galerkin solution converges for $k\to\infty$ 
(in terms of selected subsequences)
in the weak* topologies indicated in the estimates \eq{est}.
Moreover, \BLUE every such a subsequence 
exhibits strong convergence 
\begin{subequations}\label{strong-conv}\begin{align}\label{strong-conv-m}
&&&\DT{\bm m}_{\sigmanu k}\to\DT{\bm m}_{\sigmanu}&&\text{ in $L^2(Q;\R^d)$},&&&&
\\&&&\DT{\zeta}_{\sigmanu k}\to\DT\zeta_{\sigmanu}&&\text{ in $L^2(Q)$},
\\&&&\nabla{\mu}_{\sigmanu k}\to\nabla{\mu}_{\sigmanu}&&\text{ in $L^2(Q;\R^d)$},
\end{align}\end{subequations}
 and \color{black} every 
\BLUE{}five\color{black}-tuple 
$(\bm\chi_{\sigmanu},\bm m_{\sigmanu},\zeta_{\sigmanu},\mu_{\sigmanu},\theta_{\sigmanu}
)$ 
obtained as such a limit is a weak solution 
to the regularized initial-boundary-value problem 
\BLUE (\ref{system}a-c,e)--(\ref{BC}a-c)--(\ref{IC-1}) with 
(\ref{system.d-reg})--(\ref{BC-IC-reg}). In addition,
$\theta_{\sigmanu}\ge0$ a.e.\ on $Q$\color{black}.
\end{proposition}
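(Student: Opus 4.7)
The plan is to proceed in three stages. First, from the uniform (in $k$) a-priori estimates \eqref{est} of Lemma~\ref{lem-1}, I extract a (not relabelled) subsequence converging in all the weak* topologies stated there to a candidate limit $(\bm\chi_\sigmanu,\bm m_\sigmanu,\zeta_\sigmanu,\mu_\sigmanu,\theta_\sigmanu)$. By Aubin--Lions combined with the compact embeddings $H^{2+\gamma}(\varOmega;\R^d)\Subset C^{1,\alpha}(\overline\varOmega;\R^d)$ for some $\alpha>0$ and $H^1(\varOmega)\Subset L^r(\varOmega)$, the state fields converge strongly: $\bm\chi_{\sigmanu k}\to\bm\chi_\sigmanu$ in $C(I;C^{1,\alpha}(\overline\varOmega;\R^d))$ and $\bm m_{\sigmanu k}\to\bm m_\sigmanu$, $\zeta_{\sigmanu k}\to\zeta_\sigmanu$ in $C(I;L^r(\varOmega))$. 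Crucially, the uniform lower bound $\det\nabla\bm\chi_{\sigmanu k}\ge\eta>0$ from Lemma~\ref{lem:1} passes to the limit, so $\Cof\nabla\bm\chi_{\sigmanu k}/\sqrt{\det\nabla\bm\chi_{\sigmanu k}}$ converges uniformly. A generalized Aubin--Lions argument tailored to the separating seminorm family $\{|\cdot|_l\}$ from \eqref{seminorm}, combined with \eqref{est-nabla-theta}, \eqref{est-w-grad} and the Lipschitz dependence of $\theta$ on $w$ at frozen $(\bm m,\zeta)$ arising from $c_{\rm v}\ge\eps$, then yields $\theta_{\sigmanu k}\to\theta_\sigmanu$ strongly in $L^2(Q)$.

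Second, I pass to the limit in the Galerkin formulation of (\ref{system}a--c,e). Fixing test functions in $V_l$ for some $l\in\N$ and taking $k\ge l$, the strong convergences above together with the continuity and boundedness assumptions \eqref{ass} make every nonlinear state-dependent term (including $\partial_{\bm F}\psi_{\textsc{me}}$, $(\bm{\mathsf h}_{\rm e}\circ\bm\chi)\otimes\bm m$, $\partial_{\bm m}\psi$, $\partial_\zeta\psi$, and the tensors $\bm M,\bm K$ in the form \eqref{M-K-push-back}) converge by Lebesgue dominated convergence, while the higher-order linear contributions $\mathfrak H(\nabla^2\bm\chi_{\sigmanu k})$, $\kappa_1\Delta\bm m_{\sigmanu k}$, $\kappa_2\Delta\zeta_{\sigmanu k}$ pass weakly. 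Density of $\bigcup_l V_l$ in the respective test spaces then delivers the weak formulation of (\ref{system}a--c) with the boundary conditions (\ref{BC}a--c).

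The main obstacle is establishing the strong convergences \eqref{strong-conv}, which are indispensable for passing to the limit in the $\sigmanu$-regularized quadratic heat sources of \eqref{system.d-reg} (although each such source is a bounded Lipschitz function of its argument, only strong $L^2$-convergence of $\dot{\bm m}_{\sigmanu k}$, $\dot\zeta_{\sigmanu k}$, $\nabla\mu_{\sigmanu k}$ ensures $L^1(Q)$-convergence of the composition). My plan is to take $\limsup_{k\to\infty}$ in the $\alpha=0$ form of the partial energy identity \eqref{est-of-y-m-z} at an arbitrary time $t\in I$, and compare with the analogous identity for the limit obtained by testing the already-verified limit equations (\ref{system}a--c) by $(\dot{\bm\chi}_\sigmanu,\dot{\bm m}_\sigmanu,\mu_\sigmanu,\dot\zeta_\sigmanu)$: all state-dependent bulk and boundary contributions (after the integration by parts \eqref{by-part-for-magnetic}) pass by strong convergence, whereas $\mathscr H(\nabla^2\bm\chi_\sigmanu(t))$, $\|\nabla\bm m_\sigmanu(t)\|^2$, $\|\nabla\zeta_\sigmanu(t)\|^2$, and $\|\dot{\bm\chi}_\sigmanu(t)\|^2$ are merely weakly lower semicontinuous. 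Combining these gives matching $\limsup\le\liminf$ for the sum of the three dissipation integrals; exploiting their non-negativity and additivity, each of them converges separately, so that uniform convexity of $L^2$ (and uniform positive-definiteness of the coefficient $\bm M^{1/2}\Ceenk$) yields strong convergence in $L^2(Q)$. With \eqref{strong-conv} in hand the right-hand side of \eqref{system.d-reg} converges in $L^1(Q)$ and the diffusion term passes by strong convergence of its coefficient and weak convergence of $\nabla\theta_{\sigmanu k}$, closing the limit passage. Finally, $\theta_\sigmanu\ge0$ is obtained by testing the equivalent temperature form \eqref{system.d+} by $-\theta_\sigmanu^-$: the regularized dissipations contribute non-positively, the adiabatic couplings $\theta\partial^2_{\theta\bm m}\psi_{\textsc{th}}\cdot\dot{\bm m}$ and $\theta\partial^2_{\theta\zeta}\psi_{\textsc{th}}\dot\zeta$ produce on $\{\theta<0\}$ terms of order $(\theta_\sigmanu^-)^2(|\dot{\bm m}_\sigmanu|+|\dot\zeta_\sigmanu|)$ thanks to \eqref{ass-psith}, which are absorbed by Gronwall through $c_{\rm v}\ge\eps$ and the already-established $H^1(I;L^2)$-control of $\bm m_\sigmanu,\zeta_\sigmanu$, while the boundary datum $K\theta_{\rm e}/(1{+}\sigmanu\theta_{\rm e})\ge0$ and the initial datum $\theta_{0,\sigmanu}\ge0$ provide the remaining sign, so that $\theta_\sigmanu^-\equiv0$ a.e.\ on $Q$.
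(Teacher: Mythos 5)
Your compactness step, the limit passage in (\ref{system}a--c) via Nemytski\u\i\ continuity, and the sign argument for $\theta_{\sigmanu}\ge0$ are all in order, but the central step of your proof --- deriving the strong convergences \eqref{strong-conv} by taking $\limsup_{k\to\infty}$ in the $\alpha=0$ energy identity \eqref{est-of-y-m-z} and comparing it with ``the analogous identity for the limit obtained by testing the limit equations by $(\DT{\bm\chi}_{\sigmanu},\DT{\bm m}_{\sigmanu},\mu_{\sigmanu},\DT\zeta_{\sigmanu})$'' --- contains a genuine gap: the first of these tests is not admissible for the limit. The limit deformation is only in $L^\infty(I;H^{2+\gamma}({\varOmega};\R^d))\cap W^{1,\infty}(I;L^2({\varOmega};\R^d))$ and solves \eqref{system.a} in the very weak (twice time-integrated-by-parts) sense of Definition~\ref{def}; the terms $\bm S{:}\nabla\DT{\bm\chi}_{\sigmanu}$, $(\bm{\mathsf h}_{\rm e}{\circ}\bm\chi_{\sigmanu})\otimes\bm m_{\sigmanu}{:}\nabla\DT{\bm\chi}_{\sigmanu}$ and $\bm g\cdot\DT{\bm\chi}_{\sigmanu}$ are not defined (this is exactly why the paper stresses, after \eqref{formal-energy}, that the energy balance is only formal), and the pairing $\langle\DDT{\bm\chi}_{\sigmanu},\DT{\bm\chi}_{\sigmanu}\rangle$ admits no chain rule because the stress contribution puts $\DDT{\bm\chi}_{\sigmanu}$ merely into a dual space of $H^1$-type while $\DT{\bm\chi}_{\sigmanu}$ is only $L^2$-valued; since there is no viscosity, no linear-hyperbolic energy-equality theorem applies either. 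Hence the limit five-tuple is at best known to satisfy an energy \emph{inequality}, and then the bookkeeping goes the wrong way: weak lower semicontinuity of the stored and kinetic energies at time $t$ only gives $\limsup_k$ of the Galerkin dissipation $\le$ (limit right-hand side) $-$ (limit energies at $t$), and without an energy \emph{equality} for the limit this upper bound need not coincide with the dissipation of the limit, so the norm convergence you need for the uniform-convexity argument, and thus \eqref{strong-conv}, does not follow.

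The proof in the paper avoids any energy identity involving the hyperbolic equation. Strong convergence of $\DT{\bm m}_{\sigmanu k}$ is obtained by testing the Galerkin approximation of \eqref{system.b} by the time derivative of $\bm m_{\sigmanu k}-\tilde{\bm m}_k$, where $\tilde{\bm m}_k\to\bm m_{\sigmanu}$ strongly in $H^1(Q;\R^d)$ with $\Delta\tilde{\bm m}_k\to\Delta\bm m_{\sigmanu}$ in $L^2(Q;\R^d)$ and valued in the Galerkin spaces, cf.\ \eqref{dmdt-strong}; strong convergence of $\DT\zeta_{\sigmanu k}$ and $\nabla\mu_{\sigmanu k}$ follows from the coupled test of the two equations in \eqref{system.c} by $\mu_{\sigmanu k}-\tilde\mu_k$ and by the time derivative of $\zeta_{\sigmanu k}-\tilde\zeta_k$, where the cancellation of the $\pm\DT\zeta_{\sigmanu k}\mu_{\sigmanu k}$ terms and the calculus \eqref{calculus} are the essential devices, cf.\ \eqref{strong-zeta-mu}. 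In other words, the comparison argument must be confined to the parabolic subsystem, where the admissible tests exist and where $\Delta\bm m_{\sigmanu},\Delta\zeta_{\sigmanu},\DT{\bm m}_{\sigmanu},\DT\zeta_{\sigmanu}\in L^2$ make the chain rules rigorous; the mechanical equation is passed to the limit only weakly and never tested by $\DT{\bm\chi}_{\sigmanu}$. If you reorganize your argument along these lines, the rest of your proposal (compactness, limit passage, non-negativity of $\theta_{\sigmanu}$) stands.
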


\begin{proof}


Fixing $\sigmanu$, 
we now can pass to the limit $k\to\infty$ in terms of a selected
subsequence. In particular, it is important that we made the 
$\eps$-regularization of $\psi$ so that all involved mappings 
are continuous and the limit passage in corresponding Nemytski\u\i\ mappings
is standard. 

More in detail, 
by the Aubin-Lions Theorem, we have compactness of $\nabla\bm\chi$'s in
$L^{1/\epsilon}(I;C(\bar{\varOmega};\R^d))$ for any $0<\epsilon\le1$. This means that 
$\nabla\bm\chi_{\sigmanu k}\to\nabla\bm\chi_{\sigmanu}$ strongly in 
$L^{1/\epsilon}(I;C(\bar{\varOmega};\R^d))$.
Similarly, by Aubin-Lions' theorem, we have compactness of $\bm m$'s in
$L^{1/\epsilon}(I;W^{1,2^*-\epsilon}({\varOmega};\R^d))$.
This means that 
$\bm m_{\sigmanu k}\to\bm m_{\sigmanu}$ strongly in 
$L^{1/\epsilon}(I;L^{2^*-\epsilon}({\varOmega};\R^d))$.
Similarly, also $\zeta_{\sigmanu k}\to\zeta_{\sigmanu}$ strongly in 
$L^{1/\epsilon}(I;L^{2^*-\epsilon}({\varOmega};\R^d))$
and $\theta_{\sigmanu k}\to\theta_{\sigmanu}$ strongly in 
$L^2(I;L^{2^*-\epsilon}({\varOmega};\R^d))$.
The last convergence is a bit tricky because we do not have an
explicit information about time-derivative of  $\theta_{\sigmanu k}$
so we cannot apply the Aubin-Lions theorem directly to it.
But we have such information about $w_{\sigmanu k}$, cf.\ 
\eq{est-w-grad}. So, using a variant allowing for time-derivatives valued in 
locally convex spaces as e.g.\ in \cite[Lemma 7.7]{Roub13NPDE}, we obtain 
$w_{\sigmanu k}\to w_{\sigmanu}$ strongly in 
$L^2(I;L^{2^*-\epsilon}({\varOmega};\R^d))$.
Realizing that $e_{\textsc{th}}(\bm m,\zeta,\cdot)$ 
is increasing with a continuous and bounded inverse since 
$\partial_\theta e_{\textsc{th}}=
c_{\rm v}$ is well controlled by the assumption \eq{ass-cv},
we can write 
$\theta_{\sigmanu k}=[e_{\textsc{th}}(\bm m_{\sigmanu k},\zeta_{\sigmanu k},\cdot)]^{-1}(w_{\sigmanu k})$
and thus we can read also the desired 
strong convergence for temperatures from the continuity of the 
Nemytski\u\i\ mapping  $(\bm m,\zeta,w)\mapsto[e_{\textsc{th}}(\bm m,\zeta,\cdot)]^{-1}(w)$.

As for $\mu_{\sigmanu k}$, let us realize that 
\BLUE both equations in
\eq{system.c} \color{black}
are linear in terms of $\mu$ so that
the weak convergence suffices. Here we benefit from that 
$\Cof(\nabla\bm\chi_{\sigmanu k})/\sqrt{\det
(\nabla\bm\chi_{\sigmanu k})}\to\Cof(\nabla\bm\chi_{\sigmanu})/\sqrt{\det
(\nabla\bm\chi_{\sigmanu})}$
strongly in any $L^p(Q;\R^{d\times d})$ with $p<\infty$
and thus also 
$\bm{M}
(\nabla\bm\chi_{\sigmanu k},\bm m_{\sigmanu k},\zeta_{\sigmanu k},\theta_{\sigmanu k})\nabla\mu_{\sigmanu k}\to \bm{M}
(\nabla\bm\chi_{\sigmanu},\bm m_{\sigmanu},\zeta_{\sigmanu},\theta_{\sigmanu})\nabla\mu_{\sigmanu}$
weakly in $L^2(Q;\R^d)$; in fact, the weak convergence in $L^1(Q;\R^d)$ would 
suffice, too. Altogether, the limit passage in the Galerkin approximation of 
(\ref{system}a-c) is obvious when taking into account that,
due to the $\eps$-regularization, all nonlinearities have a controlled 
growth so the conventional continuity of the related Nemytski\u\i\ mappings
can be used.

To pass to the limit in the heat equation, we need to prove strong convergence
of the dissipative terms occurring on its right-hand side.

We prove the strong $L^2$-convergence \BLUE \eq{strong-conv-m}\color{black}.
We take $\tilde{\bm m}_k\to{\bm m}_{\sigmanu}$ strongly in $H^1(Q;\R^d)$
valued in the finite-dimensional spaces used for the Galerkin approximation
of 
\eq{system.b}. Thus 
$({\bm m}_{\sigmanu k}-\tilde{\bm m}_k)\!\DT{\,^{}}$
is a legal test function. 
We can also assume that $\tilde{\bm m}_k(0)={\bm m}_0$ so that
\begin{align}\nonumber
&\int_Q\kappa_1\nabla{\bm m}_{\sigmanu k}{:}\nabla
\big(\DT{\bm m}_{\sigmanu k}{-}\DT{\tilde{\bm m}}_k\big)\,\d x\d t
=\int_Q\kappa_1\nabla{\tilde{\bm m}}_k{:}
\nabla\big(\DT{\bm m}_{\sigmanu k}{-}\DT{\tilde{\bm m}}_k\big)\,\d x\d t
-\int_{\varOmega}\kappa_1|{\bm m}_{\sigmanu k}(T){-}\tilde{\bm m}_k(T)|^2\,\d x
\\&
\qquad\qquad\le\int_Q\kappa_1\nabla{\tilde{\bm m}}_k{:}
\nabla\big(\DT{\bm m}_{\sigmanu k}{-}\DT{\tilde{\bm m}}_k\big)\,\d x\d t
=-\int_Q\kappa_1\Delta{\tilde{\bm m}}_k{\cdot}\big(\DT{\bm m}_{\sigmanu k}{-}\DT{\tilde{\bm m}}_k\big)\,\d x\d t.
\end{align}
We should take care about that $\nabla\DT{\bm m}_{\sigmanu}$ is not
well defined. However, we can rely on having 
$\Delta{\bm m}_{\sigmanu}\in L^2(Q;\R^d)$, cf.\ \eq{est-Delta-m}, and
to assume also that $\Delta\tilde{\bm m}_k\to\Delta\bm m_{\sigmanu}$
strongly in $L^2(Q;\R^d)$. Thus,
by performing this test, we can estimate
\begin{align}\nonumber
&\limsup_{k\to\infty}\frac{\tau_1}2\big\|\DT{\bm m}_{\sigmanu k}{-}\DT{\bm m}_{\sigmanu}\big\|_{L^2(Q;\R^d)}^2
\le\limsup_{k\to\infty}\int_Q\tau_1\big|\DT{\bm m}_{\sigmanu k}{-}\DT{\tilde{\bm m}}_k\big|^2\,\d x\d t
+\lim_{k\to\infty}\tau_1
\big\|\DT{\bm m}_{\sigmanu}{-}\DT{\tilde{\bm m}}_k\big\|_{L^2(Q;\R^d)}^2
\\&\nonumber\qquad
\le\lim_{k\to\infty}\!\int_Q\!\big((\nabla\bm{\chi}_{\sigmanu k})^\top\bm{\mathsf h}_{\rm e}{\circ}\bm{\chi}_{\sigmanu k}-\partial_{\bm m}\psi
(\nabla\bm \chi_{\sigmanu k},\bm m_{\sigmanu k},\zeta_{\sigmanu k},\theta_{\sigmanu k})\big){\cdot}
\big(\DT{\bm m}_{\sigmanu k}{-}\DT{\tilde{\bm m}}_k\big)
\\[-.3em]&\qquad\qquad\qquad\qquad\qquad\qquad\qquad\quad
-\tau_1\DT{\tilde{\bm m}}_k{\cdot}
\big(\DT{\bm m}_{\sigmanu k}{-}\DT{\tilde{\bm m}}_k\big)
+\kappa_1\Delta{\tilde{\bm m}}_k{\cdot}\big(\DT{\bm m}_{\sigmanu k}{-}\DT{\tilde{\bm m}}_k\big)
\,\d x\d t=0.
\label{dmdt-strong}\end{align}
Assumptions \eqref{ass-xi} 
 guarantee that
 $ |\partial_{\bm m}\psi
(\nabla\bm\chi,\bm m,\zeta,\theta)|\le C$.
Thus, $\partial_{\bm m}\psi
(\nabla\bm \chi_{\sigmanu k},\bm m_{\sigmanu k},\zeta_{\sigmanu k},\theta_{\sigmanu k})$ 
converges strongly in $L^2(Q;\R^d)$ 
even without any need to specify the limit at
this moment.

Further, we will prove the strong $L^2$-convergence 
\BLUE (\ref{strong-conv}b,c)\color{black},
which is also needed for the limit passage
in the right-hand side of the heat equation.
Like we did for 
\eq{system.b}, we now 
take $\tilde\mu_k\to\mu_{\sigmanu}$
strongly in $L^2(Q)$ and $\tilde\zeta_k\to\zeta_{\sigmanu}$
strongly in $H^1(Q)$ both valued in the finite-dimensional space used for the 
Galerkin approximation of \eq{system.c} and \eq{system.d-reg}.
Denoting by $\sigma>0$ the positive-definiteness constant of 
$\bm{\mathsf M}$ and using \BLUE the 
Galerkin approximation of the equtions \eq{system.c}
 tested respectively \color{black}
by $\mu_{\sigmanu k}-\tilde\mu_k$
and 
$(\zeta_{\sigmanu k}-\tilde\zeta_k)\!\DT{\,^{}}$,
 we can estimate 
\allowdisplaybreaks
\begin{align}\nonumber
&\limsup_{k\to\infty}\int_Q\frac{\tau_2}2\big(\DT\zeta_{\sigmanu k}
-\DT\zeta_{\sigmanu}\big)^2
+\frac\sigma2\bigg|\Ceenk\nabla\mu_{\sigmanu k}
-\Ceen\nabla\mu_{\sigmanu}\bigg|^2\,\d x\d t
\\\nonumber
&\le\limsup_{k\to\infty}
\int_Q\tau_2\big(\DT\zeta_{\sigmanu k}{-}\DT{\tilde\zeta}_k\big)^2
+\sigma\bigg|\Ceenk\nabla(\mu_{\sigmanu k}{-}\tilde\mu_k)\bigg|^2\,\d x\d t
\\\nonumber
&\qquad
+\lim_{k\to\infty}\bigg(\tau_2\big\|\DT{\tilde\zeta}_k{-}\DT\zeta_{\sigmanu}\big\|_{L^2(Q)}^2+\sigma\bigg\|\Ceenk\nabla\tilde\mu_k{-}\Ceen\nabla\mu_{\sigmanu}
\bigg\|_{L^2(Q;\R^d)}^2\bigg)
\\\nonumber
&
\le\limsup_{k\to\infty}
\int_Q\tau_2\big(\DT\zeta_{\sigmanu k}{-}\DT{\tilde\zeta}_k\big)^2
+\BLUE
\Big|\bm{\mathsf M}^{1/2}(\bm m_{\sigmanu k},\zeta_{\sigmanu k},\theta_{\sigmanu k})\Ceenk\nabla(\mu_{\sigmanu k}{-}\tilde\mu_k)
\Big|\color{black}
\,\d x\d t
\\\nonumber
&=\limsup_{k\to\infty}\bigg(
\int_Q
\big(\mu_{\sigmanu k}-\partial_\zeta^{}\psi
(\nabla\bm\chi_{\sigmanu k},\bm m_{\sigmanu k},\zeta_{\sigmanu k},\theta_{\sigmanu k})\big)\big(\DT\zeta_{\sigmanu k}{-}\DT{\tilde\zeta}_k\big)
\\\nonumber
&\qquad
-\DT\zeta_{\sigmanu k}(\mu_{\sigmanu k}{-}\tilde\mu_k)
-\kappa_2\nabla\zeta_{\sigmanu k}{\cdot}\nabla\big(\DT\zeta_{\sigmanu k}{-}\DT{\tilde\zeta}_k\big)
\,\d x\d t
+\int_{\varSigma}
\BLUE{}M\color{black}
(\mu_{\rm e}{-}\mu_{\sigmanu k})(\mu_{\sigmanu k}{-}\tilde\mu_k)\,\d S\d t\bigg)
\\\nonumber
&-\lim_{k\to\infty}\int_Q\tau_2\DT\zeta_{\sigmanu}{\cdot}\big(\DT\zeta_{\sigmanu k}-\DT{\tilde\zeta}_k\big)
+\sigmanu\nabla\tilde\mu_k{\cdot}\nabla(\mu_{\sigmanu k}{-}\tilde\mu_k)
\\&\nonumber\qquad
+\bm{\mathsf M}(\bm m_{\sigmanu k},\zeta_{\sigmanu k},\theta_{\sigmanu k})\Ceenk\nabla\tilde\mu_k{\cdot}\Ceenk\nabla(\mu_{\sigmanu k}{-}\tilde\mu_k)\,\d x\d t
\\\nonumber
&=\lim_{k\to\infty}\bigg(
\int_Q-\mu_{\sigmanu k}\DT{\tilde\zeta}_k
-\partial_\zeta^{}\psi
(\nabla\bm\chi_{\sigmanu k},\bm m_{\sigmanu k},\zeta_{\sigmanu k},\theta_{\sigmanu k})\big(\DT\zeta_{\sigmanu k}{-}\DT{\tilde\zeta}_k\big)
\\\nonumber
&\qquad
+\DT\zeta_{\sigmanu k}\tilde\mu_k
-\kappa_2\Delta\zeta_{\sigmanu k}{\cdot}\DT{\tilde\zeta}_k
\,\d x\d t+\int_{\varSigma}
\BLUE{}M\color{black}\mu_{\rm e}(\mu_{\sigmanu k}{-}\tilde\mu_k)
-
\BLUE{}M\color{black}\mu_{\sigmanu k}\tilde\mu_k\,\d S\d t\bigg)
\\\nonumber
&\qquad
+\limsup_{k\to\infty}\bigg(\int_{\varOmega}\frac{\kappa_2}2|\nabla\zeta_0|^2
-\frac{\kappa_2}2|\nabla\zeta_{\sigmanu k}(T)|^2\,\d x
-\int_{\varSigma}
\BLUE{}M\color{black}\mu_{\sigmanu k}^2\,\d S\d t\bigg)
\\\nonumber
&\le
\int_Q-\mu_{\sigmanu}\DT{\zeta}_{\sigmanu}
+\DT\zeta_{\sigmanu}\mu_{\sigmanu}
-\kappa_2\Delta\zeta_{\sigmanu}{\cdot}\DT\zeta_{\sigmanu}
\,\d x\d t+\int_{\varSigma}
\BLUE{}M\color{black}\mu_{\sigmanu}^2\,\d S\d t
\\&\qquad
+\int_{\varOmega}\frac{\kappa_2}2|\nabla\zeta_0|^2
-\frac{\kappa_2}2|\nabla\zeta_{\sigmanu}(T)|^2\,\d x
-\int_{\varSigma}
\BLUE{}M\color{black}\mu_{\sigmanu}^2\,\d S\d t=0.
\label{strong-zeta-mu}\end{align}
Note that performing this estimation simultaneously for  both equations 
\eq{system.c}
in their Galerkin approximation\BLUE, it \color{black}
 was important to benefit from the cancellation of the terms
$\pm\DT\zeta_{\sigmanu k}\mu_{\sigmanu k}$ which otherwise
separately would not converge. The last equality 
in \eq{strong-zeta-mu} have exploited the calculus 
\begin{align}\label{calculus}
\int_Q\DT\zeta_{\sigmanu}\Delta\zeta_{\sigmanu}\,\d x\d t=\int_{\varOmega}
\frac12|\nabla\zeta_0|^2-\frac12|\nabla\zeta_{\sigmanu}(T)|^2\,\d x
\end{align}
which can be proved e.g.\ by mollification in space relying on that both  
$\Delta\zeta_{\sigmanu}$ and $\DT\zeta_{\sigmanu}$ are 
in $L^2(Q;\R^d)$, cf.\  
\cite[Formula (3.69)]{PoRoTo10TCTF} or \cite[Formula (12.133b)]{Roub13NPDE}.
Thus we obtain the desired $L^2$-strong convergence $\DT\zeta_{\sigmanu k}$ 
and of $\Ceenk\nabla\mu_{\sigmanu k}$.



The limit passage in the heat equation is simple 
because we already proved the strong convergence of $\theta_{\sigmanu k}$
and of all dissipative-rate terms in the right-hand side, while 
we benefit from having estimated $\nabla\theta_{\sigmanu k}$ in which 
the Fourier law is linear so we can pass to the limit in it weakly.

Altogether, we thus obtain a weak solution 
$(\bm\chi_{\sigmanu},\bm m_{\sigmanu},\zeta_{\sigmanu},\mu_{\sigmanu},\theta_{\sigmanu}
)$ 
to the regularized initial-boundary-value problem 
\BLUE (\ref{system}a-c,e)--(\ref{BC}a-c)--(\ref{IC-1}) with 
(\ref{system.d-reg})--(\ref{BC-IC-reg}). Moreover, the 
non-negativity of temperature $\theta_{\sigmanu}$ can now be proved 
by testing the heat equation (\ref{system.d-reg})--(\ref{BC-IC-reg})
by the negative part of $\theta_{\sigmanu}$ which is now a legal 
test function (in contrast to the Galerkin approximation.
Here we use the assumptions that $\theta_{\rm e}\ge0$, $\theta_0\ge0$, and 
that $
M>0$ in
the boundary condition \eqref{BC4-reg}.
\color{black}.
\end{proof}

\begin{proposition}[Convergence of the regularization]\label{prop2}
The solution obtained in Proposition~\ref{prop1}
satisfies the apriori estimates {\rm(\ref{est}a-h)} with 
$k$ omitted and also the following \BLUE{}a-priori estimates 
\color{black}: 
\begin{subequations}\label{est+2}
\begin{align}
&\big\|\nabla\theta_{\sigmanu}\big\|_{L^r(Q;\R^d)}^{}\le 
C_r\ \ \text{ for }1\le r<(d+2)/(d+1),
\label{est+++}
\\&\label{LooL1-est}
\big\|\theta_{\sigmanu}\big\|_{
L^\infty(I;L^1({\varOmega}))}^{}\le C,
\\&\label{est-w-grad+}
\big\|\nabla w_{\sigmanu}\|_{L^r(Q;\R^d)}\le C_r\ \ \text{ for }1\le r<(d+2)/(d+1),
\\&\label{est-w+}
\big\|\DT w_{\sigmanu}\big\|_{L^1(I;H^3(\Omega)^*)}
\le C,
\end{align}
\end{subequations}
\BLUE{}with $w_{\sigmanu}=e_{\textsc{th}}(\bm m_{\sigmanu},\zeta_{\sigmanu},\theta_{\sigmanu}).$ \color{black}
For $\sigmanu\to0$, 
$(\bm\chi_{\sigmanu},\bm m_{\sigmanu},\zeta_{\sigmanu},\mu_{\sigmanu},\theta_{\sigmanu}
)$ converges weakly* 
(in terms of subsequences) in the topologies indicated in the estimates
{\rm(\ref{est}a-c,e,f,h-j)} and \eq{est+2}. Moreover, 
\BLUE every  such a subsequence 
exhibits strong convergence like \eq{strong-conv} but
now for $\sigmanu\to0$ and  with $k$ omitted and \color{black}
every limit $(\bm\chi,\bm m,\zeta,\mu,\theta
)$ obtained by this way is a weak solution
to the original problems according Definition~\ref{def}.
\end{proposition}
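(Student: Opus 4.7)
The plan is to proceed in four phases: (i) verify that the estimates (\ref{est}a-h) hold uniformly in $\sigmanu$, (ii) derive the new temperature-related estimates \eq{est+2}, (iii) extract weak* convergent subsequences and upgrade convergence in the dissipation rates and temperature, (iv) pass to the limit $\sigmanu\to 0$ in each equation.

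First I would observe that the arguments in Steps 2 and 4 of the proof of Lemma~\ref{lem-1} for estimates (\ref{est}a-h) rely only on the mechano-magneto-chemical balance \eq{est-of-y-m-z} together with the $\theta$-independent bounds \eq{ass-psi/mz} on $\partial_{\bm m}\psi_{\textsc{th}}$ and $\partial_\zeta\psi_{\textsc{th}}$; none of them involves the regularization of the heat equation. Hence these bounds persist for $\sigmanu\to 0$. The new estimates are standard for the heat equation with integrable source. Testing \eq{system.d-reg} by the constant $1$ delivers the total energy balance ($\alpha=1$ version of \eq{formal-energy}), from which the $L^\infty(I;L^1({\varOmega}))$-bound \eq{LooL1-est} on $w_\sigmanu$, and consequently on $\theta_\sigmanu$ via \eq{ass-cv}, is obtained. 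The gradient estimates \eq{est+++} and \eq{est-w-grad+} in $L^r(Q;\R^d)$ for $r<(d{+}2)/(d{+}1)$ are then the classical Boccardo-Galou\"et estimates for parabolic equations with $L^1$ data, obtained by testing with truncations like $1-(1+\theta_\sigmanu)^{-\alpha}$ combined with Gagliardo-Nirenberg interpolation. The bound \eq{est-w+} on $\DT w_\sigmanu$ follows by comparison in \eq{system.d-reg}, invoking the embedding $H^3({\varOmega})\hookrightarrow W^{1,\infty}({\varOmega})$ to dualize both the heat-flux gradient and the $L^1$ bulk sources.

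Once uniform bounds are at hand, Banach-Alaoglu supplies weak* convergent subsequences. Aubin-Lions with the uniform Healey-Kr\"omer lower bound on $\det\nabla\bm\chi_\sigmanu$ (from Lemma~\ref{lem:1}) yields $\bm\chi_\sigmanu\to\bm\chi$ in $C^1(\bar Q;\R^d)$ and strong $L^p$-convergence of $\bm m_\sigmanu$ and $\zeta_\sigmanu$ for any $p<2^*$. For temperature I do not have a direct time-derivative bound, but \eq{est-w-grad+} and \eq{est-w+} give compactness of $w_\sigmanu$ by the locally-convex Aubin-Lions variant used already in Proposition~\ref{prop1}; strong convergence of $\theta_\sigmanu$ then follows from the strict monotonicity and continuous inversion of $\theta\mapsto e_{\textsc{th}}(\bm m,\zeta,\theta)$ provided by \eq{ass-cv}. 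Strong $L^2$-convergence of $\DT{\bm m}_\sigmanu$, $\DT\zeta_\sigmanu$, and $\Ceen\nabla\mu_\sigmanu$ is established by repeating verbatim the test-function arguments \eq{dmdt-strong} and \eq{strong-zeta-mu} from the proof of Proposition~\ref{prop1}, with $\tilde{\bm m}_k,\tilde\zeta_k,\tilde\mu_k$ now taken to be mollifications of the limits rather than Galerkin interpolants.

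The hard part is the limit passage in the heat equation, which is exactly where the regularization was introduced to tame $L^1$ sources. The key observation is that the strong $L^2$-convergences just obtained yield $L^1$-convergence of $|\DT{\bm m}_\sigmanu|^2$, $\DT\zeta_\sigmanu^2$ and $\bm M(\cdots)\nabla\mu_\sigmanu\cdot\nabla\mu_\sigmanu$, while the regularizing factors $(1+\sigmanu|\cdot|^2)^{-1}$ converge pointwise to $1$ and remain bounded by $1$; Vitali's theorem (or dominated convergence applied to suitably truncated sequences) then gives $L^1(Q)$-convergence of the full source terms to the unregularized ones appearing in \eq{system.d}. The Fourier-law term passes to the limit weakly in $L^r(Q;\R^d)$ using \eq{est+++} together with the strong convergence of $(\bm\chi_\sigmanu,\bm m_\sigmanu,\zeta_\sigmanu,\theta_\sigmanu)$ in the coefficients, and similarly $K\theta_{\rm e}/(1{+}\sigmanu\theta_{\rm e})\to K\theta_{\rm e}$ in $L^1({\varSigma})$ by dominated convergence. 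The adiabatic terms $\partial_{\bm m}\psi_{\textsc{th}}\cdot\DT{\bm m}$ and $\partial_\zeta\psi_{\textsc{th}}\DT\zeta$ converge in $L^1$ due to the bounds \eq{ass-psi/mz} and the strong $L^2$-convergence of the rates. Finally, non-negativity of $\theta$ is inherited by weak-$L^1$ closure of the cone $\{v\ge 0\}$. Collecting all these convergences in the weak formulation of the regularized system yields the weak formulation of Definition~\ref{def} for the original problem, completing the proof.
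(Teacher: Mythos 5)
Your proposal is correct and takes essentially the same route as the paper: $\sigmanu$-uniform bounds inherited from the mechano-magneto-chemical energy balance, the test of the heat equation by $1$ and by the Boccardo--Gallou\"{e}t truncation $\chi_\varsigma(\theta)=1-(1+\theta)^{-\varsigma}$ combined with Gagliardo--Nirenberg interpolation for \eqref{est+++} and \eqref{est-w-grad+}, comparison for \eqref{est-w+}, Aubin--Lions compactness and inversion of $e_{\textsc{th}}$ for the temperature, strong $L^2$-convergence of the rates by the same cross-testing/cancellation arguments, and $L^1$-convergence of the regularized dissipation terms. The only cosmetic deviations are your mollification of the limits where the paper instead tests the limit $\bm m$-equation directly by $\DT{\bm m}_{\sigmanu}-\DT{\bm m}$ via the calculus \eqref{calculus+}, and your compression of the primitive-function bookkeeping ($\hat C_{{\rm v},\varsigma}$ with \eqref{ass-cv/mz}) that is needed because the parabolic variable is $w=e_{\textsc{th}}(\bm m,\zeta,\theta)$ rather than $\theta$ itself.
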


\begin{proof}
We divide the proof into three steps. 

\medskip

\noindent{\it Step 1 -- limit passage in 
\BLUE chemo-magneto-mechanical
part (\ref{system}a-c)\color{black}}.
We exploit that the constants in (\ref{est}a-h) are independent of $\eps$ and
these estimates are inherited for $(\bm\chi_{\sigmanu},\bm m_{\sigmanu},\zeta_{\sigmanu},\mu_{\sigmanu},\theta_{\sigmanu},w_{\sigmanu})$, too.
In fact, the \BLUE latter \color{black} estimate 
\eq{est-w-grad}
now can be ``translated'' for this limit
as 
\begin{align}\label{est-DT-w}
\|\DT w_{\sigmanu}\|_{L^2(I;H^1(\Omega)^*)}\le C
\end{align}
with $C$ the same constant as in \eq{est-w-grad}; 
 cf.\ \cite[Sect.\,8.4]{Roub13NPDE} for this argumentation. 
This can be then used for the Aubin-Lions theorem in the standard way.


\BLUE
Similarly as in the proof of Lemma~\ref{lem-1}, we can see that
\color{black}
for some $\eta>0$, 
we have 
$\det(\nabla\bm\chi_{\sigmanu})\ge\eta$ everywhere 
on $Q$.
%
We now can pass to the limit $\eps\to0$, still relying that
all nonlinearities have a controlled 
growth so the conventional continuity of the related Nemytsk\u\i\ mappings
can be used because the singularity of $\psi$ 
is effectively not
seen due to that $\det(\nabla\bm\chi_{\sigmanu})\ge\eta$.
Most arguments are identical with those used for the Galerkin approximation 
and we will not repeat them in detail, except that the estimation 
\eqref{dmdt-strong} must be slightly modified 
because, in contrast to the Galerkin approximation where
 $\nabla\DT{\bm m}_{\sigmanu k}$ was legitimate, 
here $\nabla\DT{\bm m}_{\sigmanu}$ is not well defined.
Yet, we can first pass to the limit in the semilinear equation 
\eq{system.b} by the weak convergence, using also that
$\partial_{\bm m}\psi
(\nabla\bm \chi_{\sigmanu},\bm m_{\sigmanu},\zeta_{\sigmanu},\theta_{\sigmanu})\to
\partial_{\bm m}\psi(\nabla\bm \chi,\bm m,\zeta,\theta)$. 
Then, testing the limit equation by 
$\DT{\bm m}_{\sigmanu}{-}\DT{\bm m}$,
we can directly estimate
\begin{align}
\nonumber
\limsup_{\eps\to0}\tau_1\big\|\DT{\bm m}_{\sigmanu}{-}\DT{\bm m}\big\|_{L^2(Q;\R^d)}^2
&\le\lim_{\eps\to0}\!\int_Q\!\big((\nabla\bm{\chi}_{\sigmanu})^\top\bm{\mathsf h}_{\rm e}{\circ}\bm{\chi}_{\sigmanu}-\partial_{\bm m}\psi
(\nabla\bm \chi_{\sigmanu},\bm m_{\sigmanu},\zeta_{\sigmanu},\theta_{\sigmanu})\big){\cdot}
\big(\DT{\bm m}_{\sigmanu}{-}\DT{\bm m}\big)
\\[-.5em]
&\nonumber\qquad\qquad
-\tau_1\DT{{\bm m}}{\cdot}
\big(\DT{\bm m}_{\sigmanu}{-}\DT{\bm m}\big)\,\d x\d t
-\limsup_{\eps\to0}\!\int_Q\!\kappa_1\Delta{\bm m}_{\sigmanu}{\cdot}\big(\DT{\bm m}_{\sigmanu}{-}\DT{\bm m}\big)
\,\d x\d t
\\
&
=\int_Q\! \kappa_1\Delta{\bm m}{\cdot}\DT{\bm m}
\,\d x\d t
+\limsup_{\eps\to0}\!\int_\Omega\!\frac{\kappa_1}2|\nabla{\bm m}_0|^2
-\frac{\kappa_1}2|\nabla{\bm m}_{\sigmanu}(T)|^2\,\d x\d t=0,
\label{dmdt-strong+}
\end{align}
where we have exploited the calculus as in \eq{calculus}
but now for ${\bm m}_{\sigmanu}$, i.e.\ 
\begin{align}\label{calculus+}
\int_Q\DT{\bm m}_{\sigmanu}\Delta\bm m_{\sigmanu}\,\d x\d t=\int_{\varOmega}
\frac12|\nabla\bm m_0|^2-\frac12|\nabla\bm m_{\sigmanu}(T)|^2\,\d x,
\end{align}
relying that both
$\Delta{\bm m}_{\sigmanu}$ and 
$\DT{\bm m}_{\sigmanu}$ are in $L^2(Q;\R^d)$. Combining \eqref{dmdt-strong+} 
with the identity
\begin{align}\label{calculus+++}
\int_Q\DT{\bm m}_{\sigmanu}\Delta\bm m_{\sigmanu}\,\d x\d t=\int_{\varOmega}
\frac12|\nabla\bm m_0|^2-\frac12|\nabla\bm m_{\sigmanu}(T)|^2\,\d x
\end{align}
we obtain the desired strong convergence.


Let us denote such a limit by $(\bm\chi,\bm m,\zeta,\mu,\theta)$. 
We thus obtain a weak solution to \BLUE the chemo-magneto-mechanical
part (\ref{system}a-c). \color{black}
The estimates (\ref{est}a-f,h) are inherited for these solutions, too.





\medskip


\noindent{\it Step 2 -- estimates \eqref{est+2}}. 
The further a-priori estimates concerns the heat equation which is
now continuous and allow for various ``nonlinear'' tests, in contrast to
its Galerkin approximation. 
\BLUE{}We now also use the \color{black} non-negativity of temperature 
\BLUE $\theta_{\sigmanu}$ proved already in Proposition~\ref{prop1}\color{black}.
%
This allows us reading the information from the \BLUE natural 
energy \color{black} test of the heat 
equation by 1, namely \eq{LooL1-est}. 

The second
``nonlinear'' test yields further estimation of $\nabla\theta$ independent 
of $\sigmanu$. More specifically, following \cite{BocGal89NEPE}
in the simplified variant of \cite{FeiMal06NSET},
we perform the test by $\chi_\varsigma(\theta_{\sigmanu})$ 
with an increasing concave function $\chi_\varsigma:[0,+\infty)\to[0,1]$ defined 
$\chi_\varsigma(w):=1-1/(1{+}w)^\varsigma$ for some $\varsigma>0$. 
Analogously as in \eq{hat-Cv}, we now define a primitive
function to $\theta\mapsto\chi(\theta)c_{\rm v}(\bm m,\zeta,\theta)$ as
\begin{align}\nonumber
\hat C_{{\rm v},\varsigma}(\bm m,\zeta,\theta):=\int_0^1\theta\chi_\varsigma(r\theta)c_{\rm v}(\bm m,\zeta,r\theta)\,\d r,
\end{align}
We notice that, thanks to the assumption \eqref{ass-cv/mz},
\begin{equation}\label{eq:32}
  |\partial_{\bm m} C_{\rm v,\varsigma}(\bm m,\zeta,\theta)|=\left|\partial_{\bm m}\int_0^1\theta\chi_\varsigma(r\theta)c_{\rm v}(\bm m,\zeta,\theta)\ \d r\right|\le C \frac \theta{1+\theta^{1+\varsigma}}\le C. 
\end{equation}
Similarly, we have
\begin{equation}\label{eq:33}
  |\partial_{\zeta} C_{\rm v,\varsigma}(\bm m,\zeta,\theta)|\le C.
\end{equation}
Like \eq{test-heat}, employing also that 
$\hat C_{{\rm v},\varsigma}(\bm m,\zeta,\theta)\ge0$, 
$0\le\chi_\varsigma(\theta_{\sigmanu})\le1$, $\chi_\varsigma'(w)=\varsigma/(1{+}w)^{1+\varsigma}$, and that, thanks to 
$K\theta_{\rm e}/(1+\eps \theta_{\rm e})\le K\theta_{\rm e}$ in \eq{BC4-reg},
this gives the estimate
\begin{align}
\nonumber&\varsigma\int_Q\frac{\sigma}{(1{+}\theta_{\sigmanu})^{1+\varsigma}}
\bigg|\frac{\Cof(\nabla\bm\chi_{\sigmanu})\nabla\theta_{\sigmanu}}
{\sqrt{\det(\nabla\bm\chi_{\sigmanu})}}\bigg|^2
\,\d x\d t
\le
\int_Q\bm K
(\nabla\bm\chi_{\sigmanu},\bm m_{\sigmanu},\zeta_{\sigmanu},\theta_{\sigmanu})\nabla\theta_{\sigmanu}\cdot\nabla\chi_\varsigma(\theta_{\sigmanu})\,\d x\d t
\\\nonumber&\quad
\le \int_Q\bigg(
\frac{\tau_1|\DT{\bm m}_{\sigmanu}|^2}{1{+}\sigmanu|\DT{\bm m}_{\sigmanu}|^2}
+\frac{\tau_2\DT\zeta_{\sigmanu}^2}{1{+}\sigmanu\DT\zeta_{\sigmanu}^2}
+\frac{\bm K
(\nabla\bm\chi_{\sigmanu},\bm m_{\sigmanu},\zeta_{\sigmanu},\theta_{\sigmanu})\nabla\mu_{\sigmanu}{\cdot}\nabla\mu_{\sigmanu}}{1+\sigmanu|\nabla\mu_{\sigmanu}|^2}\bigg)
\\\nonumber&\qquad
+
\big(
|\partial_{\bm m}e_{\textsc{th}}(\bm m_{\sigmanu},\zeta_{\sigmanu},\theta_{\sigmanu})|
-\partial_{\bm m}\hat C_{\rm v}(\bm m_{\sigmanu},\zeta_{\sigmanu},\theta_{\sigmanu})\big)\DT{\bm m}_{\sigmanu}
+\big(
|\partial_\zeta e_{\textsc{th}}(\bm m_{\sigmanu},\zeta_{\sigmanu},\theta_{\sigmanu})|
-\partial_{\zeta}\hat C_{\rm v}(\bm m_{\sigmanu},\zeta_{\sigmanu},\theta_{\sigmanu})\big)\DT\zeta_{\sigmanu}\,\d x\d t
\\&\qquad\qquad\qquad\qquad\qquad\qquad
+\int_{\varSigma} K\theta_{\rm e}
\d S\d t
    +\int_{\varOmega}\hat C_{{\rm v},\varsigma}(\bm m_0,\zeta_0,\theta_{0,\sigmanu})\,\d x
\:\le\:C
\label{test-heat+}
\end{align}
\BLUE{}with $\sigma>0$ the positive-definiteness constant of 
$\bm K$. \color{black}
To see that the last inequality holds true, we recall that, as pointed out in 
the paragraph after \eqref{eq:31}, we have $\partial_{\bm m}e_{\textsc{th}}$
\BLUE and \color{black} $\partial_\zeta e_{\textsc{th}}$ bounded. We also 
use \eqref{eq:32} and \eqref{eq:33}.

Next, we notice that
\begin{align}\nonumber
&\int_Q\bigg|\frac{\Cof(\nabla\bm\chi_{\sigmanu})\nabla\theta_{\sigmanu}}
{\sqrt{\det(\nabla\bm\chi_{\sigmanu})}}\bigg|^r\,\d x\d t
=\int_Q(1{+}\theta_\sigmanu)^{(1+\varsigma)r/2}
\frac{1}{(1{+}\theta_\sigmanu)^{(1+\varsigma)r/2}}\bigg|\frac{\Cof(\nabla\bm\chi_{\sigmanu})\nabla\theta_{\sigmanu}}
{\sqrt{\det(\nabla\bm\chi_{\sigmanu})}}\bigg|^r\,\d x\d t
\\&\qquad\qquad\label{est-of-cof-nabla-theta}
\le\bigg(\int_Q
(1{+}\theta_\sigmanu)^{(1+\varsigma)r/(2-r)}\,\d x\d t\bigg)^{1-r/2}
\bigg(\int_Q\frac{1}{(1{+}\theta_\sigmanu)^{1+\varsigma}}
\bigg|\frac{\Cof(\nabla\bm\chi_{\sigmanu})\nabla\theta_{\sigmanu}}
{\sqrt{\det(\nabla\bm\chi_{\sigmanu})}}\bigg|^2\,\d x\d t\bigg)^{r/2}
\end{align}
so that the last factor is bounded due to 
\eq{test-heat+}. Now, using the 
Gagliardo-Nirenberg inequality, we can interpolate $\int_Q
(1{+}\theta_\sigmanu)^{(1+\varsigma)r/(2-r)}\,\d x\d t$ with the already 
obtained estimate \eq{LooL1-est},
namely $\|v\|_{L^{(1+\varsigma)r/(2-r)}({\varOmega})}
\le C_{_{\rm GN}}\|v\|_{L^1({\varOmega})}^{1-\lambda}
\|v\|_{W^{1,r}({\varOmega})}^{\lambda}$ with $\|v\|_{W^{1,r}({\varOmega})}:=
\|v\|_{L^1({\varOmega})}+\|\nabla v\|_{L^r({\varOmega};\R^d)}$, used here for 
$v=1+\theta_\sigmanu(t,\cdot)$ to obtain the estimate:
\begin{align}\nonumber
\int_0^T\!\!\big\|1{+}\theta_\sigmanu(t,\cdot)\big\|^{(1+\varsigma)r/(2-r)}
_{L^{(1+\varsigma)r/(2-r)}({\varOmega})}\,\d t
&\le
\int_0^T\!\!
C_{_{\rm GN}}^{{(1+\varsigma)r}/{2-r}}C_0^{{(1-\lambda)(1+\varsigma)r}/{2-r}}
\Big(C_0{+}
\big\|\nabla\theta_\sigmanu(t,\cdot)
\big\|_{L^r({\varOmega};\R^d)}\Big)^{{\lambda(1+\varsigma)r}/{2-r}}\d t
\\&\nonumber
\ \le
\int_0^T\!\!
C_{_{\rm GN}}^{{(1+\varsigma)r}/{2-r}}C_0^{{(1-\lambda)(1+\varsigma)r}/{2-r}}
\Big(C_0{+}\big\|\nabla\theta_\sigmanu(t,\cdot)\big\|_{L^r({\varOmega};\R^d)}\Big)^{r}
\,\d t
\\&
\label{8-***}
\ \le C_1+C_2
\int_Q
\big|\nabla\theta_\sigmanu
\big|
^r
\,\d x\d t
=C_1+C_2\big\|\nabla\theta_{\sigmanu}\big\|_{L^r(Q;\R^d)}^r
\end{align}
with $C_0={\rm meas}_d({\varOmega})+C$ with $C$ from \eq{LooL1-est},
cf.\ e.g.\ \cite[Formula (12.20)]{Roub13NPDE}.
Here, this estimate is to be combined also with the estimate 
(analogous \eq{est-mu+++} for $\nabla\mu_{\sigmanu}$):
\begin{align}\nonumber
\big\|\nabla\theta_{\sigmanu}\big\|_{L^r(Q;\R^d)}^r
&=\bigg\|
\frac{\nabla\bm\chi_{\sigmanu}}{\sqrt{\det(\nabla\bm\chi_{\sigmanu})}}
\frac{\Cof(\nabla\bm\chi_{\sigmanu})\nabla\theta_{\nu}}{\sqrt{\det(\nabla\bm\chi_{\sigmanu})}}\bigg\|_{L^r(Q;\R^d)}^r
\\&\le\bigg\|
\frac{\nabla\bm\chi_{\sigmanu}}{\sqrt{\det(\nabla\bm\chi_{\sigmanu})}}\bigg\|_{L^\infty(Q;\R^{d\times d})}^r
\bigg\|
\frac{\Cof(\nabla\bm\chi_{\sigmanu})\nabla\theta_{\sigmanu}}{\!\sqrt{\det(\nabla\bm\chi_{\sigmanu})}}\bigg\|_{L^r(Q;\R^{d})}^r
\label{est-heat++}\end{align}
together with that we have 
$\nabla\bm\chi_{\sigmanu}/\sqrt{\det(\nabla\bm\chi_{\sigmanu})}$ already 
apriori bounded.

When raised to power $1-r/2$, \eq{8-***} merged with \eq{est-heat++} 
can be used to estimate the right-hand side of \eq{est-of-cof-nabla-theta}
by the function of the left-hand side of \eq{est-of-cof-nabla-theta}
but in a power less than one, namely $1-r/2$.
Thus we obtain the estimate
\begin{align}
\bigg\|\frac{\Cof(\nabla\bm\chi_{\sigmanu})\nabla\theta_{\sigmanu}}
{\!\sqrt{\det(\nabla\bm\chi_{\sigmanu})}}\bigg\|_{L^r(Q;\R^{d})}\le C_r
\label{est-heat+}
\end{align}
for any $1\le r<(d+2)/(d+1)$.
From it, we can read the estimate for $\nabla\theta_{\sigmanu}$ in $L^r(Q;\R^d)$
by using again \eq{est-heat++}\BLUE, i.e.\ \eq{est+++}\color{black}.
%
Having $\nabla\theta_{\sigmanu}$ estimated, also the estimate \eq{est-w-grad+} of 
$\nabla w_{\sigmanu}$ can be read from the calculus:
$$
\nabla w_{\sigmanu}=
\partial_{\bm m} e_{\textsc{th}}(\bm m_{\sigmanu},\zeta_{\sigmanu},\theta_{\sigmanu})\nabla\bm m_{\sigmanu}
+\partial_{\zeta}e_{\textsc{th}}(\bm m_{\sigmanu},\zeta_{\sigmanu},\theta_{\sigmanu})\nabla\zeta_{\sigmanu}
+\partial_{\theta}e_{\textsc{th}}(\bm m_{\sigmanu},\zeta_{\sigmanu},\theta_{\sigmanu})\nabla\theta_{\sigmanu}.
$$




\medskip

\noindent{\it Step 3 -- limit passage \BLUE in the 
heat equation\color{black}
}.
This proof actually imitates the argumentation 
from \BLUE the proof of Proposition~\ref{prop1}\color{black}.
Another modification consists in 
the regularized dissipation rates which converge strongly in $L^1(Q)$, i.e.
$$
\frac{\tau_1|\DT{\bm m}_\sigmanu|^2}{1{+}\sigmanu|\DT{\bm m}_\sigmanu|^2}
+\frac{\tau_2\DT\zeta_\sigmanu^2}{1{+}\sigmanu\DT\zeta_\sigmanu^2}
+\frac{\bm M(\nabla\bm\chi_\sigmanu,\bm m_\sigmanu,\zeta_\sigmanu,\theta_\sigmanu)\nabla\mu_\sigmanu{\cdot}\nabla\mu_\sigmanu}{1+\sigmanu|\nabla\mu_\sigmanu|^2}
\to\tau_1|\DT{\bm m}_\sigmanu|^2
+\tau_2\DT\zeta^2
+\bm M(\nabla\bm\chi,\bm m,\zeta,\theta)\nabla\mu{\cdot}\nabla\mu.
$$
This can be seen easily when proving the strong $L^2(Q)$-convergence 
of $\DT{\bm m}_\sigmanu\to\DT{\bm m}$, $\DT\zeta_\sigmanu\to\DT\zeta$,
and $\nabla\mu_\sigmanu\to\nabla\mu$ by the techniques we used already
before, see \eqref{dmdt-strong+}, \eqref{strong-zeta-mu}.
\end{proof}

\section*{Acknowledgments}
The authors are thankful to Miroslav \v{S}ilhav\'y for fruitful discussions
about modelling aspects. \BLUE{}Also, many conceptual and other comments
of three anonymous referees have been very useful for improving the 
presentation. \color{black} 


\bibliographystyle{abbrv}

\end{sloppypar}
\end{document}

============= submission to ZAMP 18.3.2017

 Dear Editor, we are submitting the paper on which we worked quite 
long time and discussed this issue with many experts around us in 
the meantime. We believe it is a valuable contribution to a difficult 
topic of modelling of elastic magnets at large strains with a potential 
to be well cited, and that ZAMP is a suitable outlet for it. We are very 
thankful in advance for handling this paper. Very repectfully, 
Tomas Roubicek & Giuseppe Tomassetti.